\documentclass[11pt,a4paper]{article}

\usepackage{amsfonts, amsmath, amsthm, amssymb}
\usepackage{enumerate}
\usepackage{epsfig} 
\usepackage{graphicx}
\usepackage[T1]{fontenc} 
\usepackage[english]{babel}
\usepackage[applemac]{inputenc}

\usepackage{dsfont}
\usepackage{hyperref}

\usepackage[arrow, matrix, curve]{xy}

\newcommand{\M}{\mathcal P}
\newcommand{\IR}{\mathbb R}
\newcommand{\IP}{\mathbb P}
\newcommand{\IE}{\mathbb E}
\newcommand{\IN}{\mathbb N}

\newcommand{\dd}{\mathrm{d}}
\newcommand{\dbl}[1]{d_{BL}(#1)}
\newcommand{\weak}{\rightharpoonup}
\newcommand{\supp}{\mathrm{supp}\,}

\newcommand{\Div}{\mathrm{div}}

\theoremstyle{plain}
\newtheorem{Theorem}{Theorem}[section]
\newtheorem{Lemma}[Theorem]{Lemma}
\newtheorem{Corollary}[Theorem]{Corollary}
\newtheorem{Proposition}[Theorem]{Proposition}
\theoremstyle{definition}
\newtheorem{Definition}[Theorem]{Definition}
\newtheorem{Remark}[Theorem]{Remark}
\newtheorem{Remarks}[Theorem]{Remarks}

\begin{document}
\title{A particle approximation for the relativistic Vlasov-Maxwell dynamics}
\author{Dustin Lazarovici\thanks{Mathematisches Institut, Ludwig-Maximilians-Universit\"at, Theresienstr. 39, 80333 Munich, Germany. E-Mail: lazarovici@math.lmu.de}}
\date{}
\maketitle
\begin{abstract} \noindent We present a microscopic derivation of the 3-dimensional relativistic Vlasov-Maxwell system as a combined mean field and point-particle limit of an $N$-particle system of rigid charges with $N$-dependent radius. The approximation holds for typical initial particle configurations, implying in particular propagation of chaos for the respective dynamics.\\

\noindent \textbf{Keywords:} mean field limits, particle approximation, molecular chaos\end{abstract}
	
\section{Introduction}
\noindent We are interested in a microscopic derivation of the three dimensional relativistic Vlasov-Maxwell System. This is a set of partial differential equations describing a collisionless plasma of identical charged particles interacting through a self-consistent electromagnetic field:

\begin{equation}\begin{split}\label{MV}
		&\partial_t f + v(\xi)\cdot \nabla_x f + K(t,x,\xi)\cdot \nabla_\xi f = 0, \\
		&\partial_t E - \nabla_x \times B = -j, \hspace{.6cm}  \nabla_x\cdot E = \rho,\\
		&\partial_t B + \nabla_x \times E = \;0,  \hspace{.8cm}\nabla_x \cdot B =0.
	\end{split}\end{equation}
	
	\noindent Here, units are chosen such that all physical constants, in particular the speed of light, are equal to $1$. The distribution function $f(t,x,\xi) \geq 0$ describes the density of particles with position $x \in \IR^3$ and relativistic momentum $\xi \in \IR^3$. The other quantities figuring in the Vlasov-Maxwell equations are the relativistic velocity of a particle with momentum $\xi$, given by 
	\begin{equation}
		v(\xi) = \frac{\xi}{\sqrt{1+\lvert\xi\rvert^2}},
	\end{equation}
	\noindent and the charge and current density entering Maxwell's equations, given by
	\begin{equation}
		\rho(t,x) = \int\limits f(t,x,\xi) \,\mathrm{d}{\xi}, \hspace{4mm} j(t,x) = \int v(\xi) f(t,x,\xi)\, \mathrm{d}{\xi}. 
	\end{equation}
	
	\noindent The function
	\begin{equation}
		K(t,x,\xi) = E(t,x) + v(\xi) \times B(t,x)
	\end{equation}
	\noindent thus describes the Lorentz force acting at time $t$ on a particle at $x$ moving with momentum $\xi$.\\

	While the Vlasov-Maxwell equations have been successfully applied in pfor a long time, their microscopic derivation is still an open problem. In the electrostatic (nonrelativistic) case, important results were obtained by Hauray and Jabin \cite{HaurayJabin}, who were able to prove \emph{mean field limits} for singular forces -- up to but not including the Coulomb case -- with an $N$-dependent cut-off in the case of strong singularities (and without cut-off for force kernels diverging slower than $\frac{1}{\lvert x \rvert}$ at the origin). Coulomb interactions were recently included in \cite{Dustin} and \cite{PeterDustin}, with cut-offs decreasing as $N^{-1/3 + \epsilon}$ and $N^{-1/15 +\epsilon}$, respectively, amounting to a particle approximation for the \emph{Vlasov-Poisson} equation. 
	
	The aim of this paper is to combine and generalize the methods into a \cite{Dustin} and \cite{PeterDustin} into a microscopic derivation of the 3-dimensional relativistic Vlasov-Maxwell system. The mean field limit for Vlasov-Maxwell is considerably more complex, as it involves relativistic (retarded) interactions and the electromagnetic field as additional degrees of freedom. However, we will show that the basic insights and techniques developed for the Vlasov-Poisson equation can be extended to the relativistic regime. 
	
	As a microscopic theory, we consider an $N$-particle system of extended, rigid charges, also known as the \emph{Abraham model} (after \cite{Abraham}, see \cite{Spohn} for a discussion). Size and shape of the particles are described by an $N$-dependent form factor that approximates a $\delta$-distribution in the limit $N \to \infty$. The cut-off parameter thus has a straightforward physical interpretation in terms of a finite electron-radius. Our approximation of the Vlasov-Maxwell dynamics will thus be a combination of mean field limit and point-particle limit, similar to the result in \cite{Dustin} where we treated the non-relativistic limit.

A previous result for the Vlasov-Maxwell system was obtained by Golse \cite{Golse}, who uses an equivalent regularization with fixed (but arbitrarily small) cut-off to derive a \emph{mollified} version of the equations (i.e. the smearing persists in the limiting equation). This is analogous to the pioneering work of Braun and Hepp, Dobrushin and Neunzert, wo treated non-relativistic interactions with Lipschitz continuous force kernel. As Golse notes (see \cite[Prop. 6.2]{Golse}), his result can be applied to approximate the actual Vlasov-Maxwell system but only in a very weak sense, basically corresponding to choosing an $N$-dependent cut-off decreasing as $\sim \log(N)^{-\frac{1}{2}}$. In the spirit of the recent developments in the Vlasov-Poisson case, will considerably improve upon this result, allowing the cut-off to decrease as $N^{-\frac{1}{12}}$.

	\subsection{Structure of the paper}
	The paper is structured as follows: 
	\begin{enumerate}[]
		\item We will first recall a representation of the electromagnetic field in terms of Li\'{e}nard-Wiechert distributions that was derived, for instance, in \cite{Bouchut}. The key advantage of this representation is that it does not depend on derivatives of the current-density, thus allowing for better control of fluctuations in terms of the Vlasov density. 
		
		\item In Section \ref{Section:microscopictheory}, we introduce the Abraham model of rigid charges as our microscopic theory and define a corresponding regularized mean field equation. By introducing an appropriate $N$-dependent rescaling, we will take the mean field limit together with a point-particle limit, in which the electron-radius goes to $0$ and the particle form factor approximates a $\delta$-distribution. This will allow us to approximate the actual Vlasov-Maxwell dynamics in the large $N$ limit. 
		
		\item In Section \ref{section:existenceofsolutions} we recall some known results about existence of (strong) solutions to the Vlasov-Maxwell equations. 
		
		\item After stating our precise results in Section \ref{section:statementofresults}, we derive a few simple but important corollaries from the solutions theory of the Vlasov-Maxwell equations in Section \ref{section:Corfromsol}.
		
		\item In Section \ref{Section:strategyofproof}, we will follow the method developed in \cite{Peter} and \cite{PeterDustin} and introduce a stochastic process that will serve as our ``measure of chaos'', quantifying the difference between mean field dynamics and microscopic dynamics.

		\item In Section \ref{Section:globalestimates} we derive some global bounds on the (smeared) microscopic charge density and the corresponding fields. 
		
		\item Section \ref{section:pointwiseestimates} then contains the more detailed law-of-large number estimates for the difference between mean field dynamics and microscopic dynamics. These estimates are derived from the Li\'{e}nard-Wiechert decomposition of the fields and are somewhat similar to the bounds proven in \cite{Bouchut} for the regularity of solutions. 
		
		\item Finally, we combine all estimates into a proof of the mean field limes for the Vlasov-Maxwell dynamics (Section \ref{section:Gronwall}). We end with some remarks regarding the obtained results and the status of the microscopic regularization (Section \ref{section:remarks}).
	\end{enumerate}
	
	
	
	\section{Field representation}
	The Vlasov-Maxwell system contains in particular Maxwell's equations
	\begin{equation}
		\begin{split}
			\partial_t E - \nabla_x \times B = -j, \hspace{.6cm}  \nabla_x\cdot E = \rho,\\
			\partial_t B + \nabla_x \times E = \;0,  \hspace{.8cm}\nabla_x \cdot B =0,
		\end{split}\end{equation}
		where charge- and current-density are induced by the Vlasov density $f(t,x,\xi)$. In general, Maxwell's equations can be solved by introducing a scalar potential $\Phi$ and a vector potential $A$, satisfying
		\begin{equation}
			\square_{t,x} \Phi = \rho, \hspace*{1cm} \square_{t,x} A = j,
		\end{equation}
		in terms of which the electric and magnetic fields are given by
		\begin{align} 
			E(t,x) = - \nabla_x \Phi(t,x) - \partial_t A(t,x); \hspace{.6cm} B(t,x) = \nabla \times A(t,x).
		\end{align}
		It is convenient to split the potential into a homogeneous and an inhomogeneous part, i.e. 
		$A = A_0 + A_1$ with
		\begin{align} \label{Ahomo}
			\square_{t,x} A_0 &= 0, \; \partial_t A_0 \mid_{t=0} = -E_{in}\\\label{Ainhomo}
			\square_{t,x} A_1 &= j, \; A_1\mid_{t=0} = \partial_t A_1\mid_{t=0} = 0.
		\end{align}
		
		\noindent We recall that the retarded fundamental solution of the d'Alembert  operator $\square_{t,x} = \partial^2_t - \Delta_x$ (in $3+1$ dimensions) is given by the distribution
		
		\begin{equation}\label{dAlembert}
			Y(t,x) = \frac{\mathds{1}_{t > 0}}{4 \pi t} \delta(\lvert x \rvert - t).
		\end{equation}
		
		\noindent Hence, in the Vlasov-Maxwell system, a solution of \eqref{Ainhomo} is given by
		\begin{equation}
			A_1 = Y*_{t,x} j = \int v(\xi) Y*_{t,x} f(\cdot, \cdot, \xi ) \dd \xi.
		\end{equation}
		Similarly, we set
		\begin{equation}
			\Phi= \Phi_1 =  Y*_{t,x} \rho =  \int  Y*_{t,x} f(\cdot, \cdot, \xi ) \dd \xi.
		\end{equation}
		
		\noindent The solution of the homogeneous wave-equation is given by (see e.g. \cite[Thm. 4.1]{Jalal})
		\begin{equation}\label{homopotential}
			A_0(t,\cdot) = Y(t, \cdot)*_x E_{in},
		\end{equation}
		where the initial field has to satisfy the constraint
		\begin{equation}\label{Eincompatibility}
			\Div E_{in} = \rho_0 = \int f(0, \cdot, \xi) \dd \xi.
		\end{equation}
		Hence, 
		\begin{equation}\label{DefEin} E_{in} =- \nabla_x G *_x \rho_0 + E'_{in}\end{equation}
		with
		\begin{equation} G(x)= \frac{1}{4\pi \lvert x \rvert}, \; x \in \IR^3, \; \text{and } \Div  E'_{in}= 0.\end{equation} 
		
		\noindent In total, for a given distribution function $f_t$, the Lorentz force-field $K(t,x,\xi) = E(t,x) + v(\xi) \times B(t,x)$ is given by
		\begin{align}\label{lorentzforce}
			K[f] = & -  \int \partial_t \nabla_x \,  (Y(t,\cdot) *_x G *_x f_0(\cdot, \eta)) \dd \eta\\\label{lorentzforce2}
			&- \int (\nabla_x + v(\eta) \partial_t )\, Y*f(\cdot,\cdot,\eta) \dd \eta\\\label{lorentzforce3}
			& -  \int v(\xi) \times (v(\eta) \times \nabla_x) \, Y*f(\cdot,\cdot,\eta) \dd \eta,
		\end{align}
		where we have set $E'_{in} = 0$, for simplicity. In more detail, this formulation of the field equations can be found e.g. in \cite{Golse}. Note that equations (\ref{lorentzforce} - \ref{lorentzforce3}) still allow for various representation in terms of $f$, depending on how one evaluates the derivatives. 
		
		\subsection{Li\'enard-Wiechert distributions}\label{section:lienard}
		A particularly useful representation of the electromagnetic field can be given as a superposition of Li\'enard-Wiechert  fields (see, in particular, \cite[Lemma 3.1]{Bouchut}.)
		For a given distribution  $f_t$, the induced electric field can be written as
		\begin{equation*} E(t,x)= E_0(t,x) + E'_0(t,x)+E_1(t,x)+E_2(t,x) \end{equation*}
		where 
		\begin{align}\label{E0term}
			E_0[f_0] & = \;\;\; - \partial_t Y(t, \cdot) *_x E_{in}\\\label{E0'term}
			E'_0[f_0]&= \;\;\;\int (\alpha^0 Y)(t, \cdot, \xi) *_{t,x} f_0  \, \dd \xi \\\label{E1term}
			E_1[f] &= \;\;\; \int (\alpha^{-1} Y) *_{t,x} (\mathds{1}_{t \geq 0} f) \, \dd \xi \\\label{E2term}
			E_2[f] &= - \int (\nabla_\xi \alpha^0 Y) *_{t,x} (K \mathds{1}_{t \geq 0} f )\,  \dd \xi 
		\end{align}
		\noindent with
		\begin{align}\label{integralkernels}
			\alpha^0(t,x,\xi)  = \frac{x - t v(\xi)}{t - v(\xi) x};  \;\;\;  \alpha^{-1}(t,x,\xi) = \frac{(1-v(\xi)^2) (x - t v(\xi))}{(t-v(\xi)x)^2}.
		\end{align}
		Hence 
		\begin{equation}\label{nablaalpha}
			(\nabla_\xi \alpha^0)^i_j (t, x, \xi) = \frac{t (t-v\cdot x)(v_jv^i - \delta^i_j) + (x_j - t v_j)(x^i - (v \cdot x) v^i)}{\sqrt{1+ \lvert \xi \rvert^2} (t - v \cdot x)^2}.
		\end{equation}
		Here, we follow the notation from \cite{Bouchut}; The upper index in $\alpha^{j}, \, j = 0, -1,$ refers to the degree of homogeneity in $(t,x)$. 
		\begin{itemize}
			\item[] $E_2$ is called the radiation or acceleration term. It dominates in the far-field and depends on the acceleration of the particles.
			
			\item[] $E_1$ corresponds to a relativistic Coulomb term and grows like the inverse square distance in the vicinity of a point source. 
			
			\item[] $E'_0$ are ``shock waves'', depending only on the initial data and propagating with speed of light (c.f. \cite{Vera}).
			
			\item[] $E_0$ is the homogeneous field generated by the potential \eqref{homopotential}. It depends only on $E_{in}$ and thus on the initial charge distribution via the constraint \eqref{Eincompatibility}.
		\end{itemize}
		
		\noindent Similar expressions hold for the magnetic field. One finds that
		\begin{equation*} B(t,x)= B_0(t,x) + B'_0(t,x)+B_1(t,x)+B_2(t,x) \end{equation*}
		\noindent with
		\begin{align}
			B'_0[f_0] &= \int (n \times \alpha^0 Y)(t, \cdot, \xi) *_x f_0 \, \dd \xi \\
			B_1[f] &= \int (n \times \alpha^{-1} Y) *_{t,x} (\mathds{1}_{t \geq 0} f) \, \dd \xi \\
			B_2[f] &= - \int (\nabla_\xi (n \times \alpha^0 Y)) *_{t,x} (K \mathds{1}_{t \geq 0} f )\,  \dd \xi
		\end{align}
		where we introduced the normal vector $ n(x) := \frac{x}{\lvert x \rvert}$.
		\begin{Remark}
			In the physical literature, the Li\'enard-Wiechert field is usually written in terms of the particle acceleration $\dot{v}$ rather than the force $\dot{\xi}$. Since $v(\xi) = \frac{\xi}{\sqrt{1+\lvert \xi \rvert^2}}$, the two expressions are related as $\dot{v} = \sqrt{1 - \lvert v \rvert^2} (K - (v\cdot K) v)$.
		\end{Remark}

		\section{Microscopic theory (Abraham model)}\label{Section:microscopictheory}
		Consider a system of $N$ identical point-charges with phase-space trajectories $(x_i(t), \xi_i(t))_{i=1,..,N}$. The corresponding charge- and current-densities are then given by 
		\begin{equation}
			\label{microdensity}
			\rho(t,x) = \sum\limits_{i=1}^N \delta (x - x_i(t)) ; \;\;\; j(t,x)=\sum\limits_{i=1}^N v(\xi_i(t))\delta (x - x_i(t))\end{equation}
		and generate an electromagnetic field $(E,B)(t,x)$ according to Maxwell's equations. However, together with the Lorentz-force equation
		\begin{align}\label{Lorentzforceeq}
			\begin{cases}\frac{\dd}{\dd t}{x}_{i} (t) = v(\xi_i(t))\\[1.5ex]
				\frac{\dd}{\dd t} {\xi}_{i}(t) = E(t, x_i(t)) + v(\xi_i(t)) \times B(t,x_i(t)) \end{cases}
		\end{align}
		
		\noindent this does not yield a consistent theory due to the \emph{self-interaction singularity}: The fields generated by \eqref{microdensity} are singular precisely at the location of the particles, where they would have to be evaluated according to \eqref{Lorentzforceeq}.

		A classical way to regularize the Maxwell-Lorentz theory is to consider instead of point-particles a system of extended, rigid bodies to which the charge is permanently attached. This is also known as the Abraham model. Shape and size of the rigid charges are given by a smooth, compactly supported, spherically symmetric \emph{form factor} $\chi$ satisfying: 
		\begin{equation}\label{formfactor} \chi  \in C^\infty_c(\IR^3); \; \chi(x) = \chi(\lvert x \rvert); \; \chi(x) = 0 \text{ for } \lvert x \rvert > r=1;  \; \int \chi (x) \, \dd x = 1. \end{equation}
		
		\noindent The corresponding charge- and current-densities are then given by
		\begin{equation}\label{RCdensities}\rho(t,x) = \frac{1}{N} \sum\limits_{i=1}^N \chi(x- x_i(t)); \;\;\;  j(t,x)=\frac{1}{N} \sum\limits_{i=1}^N v(\xi_i(t))\chi (x - x_i(t)), \end{equation}
		
		\noindent where $x_i(t)$ now denotes the center of mass of particle $i$. In order to approximate the Vlasov-Maxwell  equations, we shall perform the mean field limit together with a point-particle limit, introducing an $N$-dependent electron-radius $r_N$ which tends to zero as $N \to \infty$. We thus define a \emph{rescaled form factor} $\chi^N$ by
		\begin{equation}\label{rescalledformfactor}\chi^{N}(x):= r_N^{-3}\, \chi\Bigl(\frac{x}{r_N}\Bigr), \; N \in \IN, \end{equation}
		where $(r_N)_N$ is a decreasing sequence with $r_N =1, \lim\limits_{N \to \infty} r_N = 0$, to be specified later. This rescaled form factor satisfies
		\begin{equation} \lVert \chi^N \rVert_\infty = r_N^{-3}; \; \chi^N(x) = 0 \text{ for } \lvert x \rvert > r_N;  \; \int \chi^N (x) \, \dd x = 1 \end{equation}
		and approximates a $\delta$-measure in the sense of distributions. 
		
		\noindent In the so-called \emph{mean field scaling}, the new field equations read
		\begin{equation}\label{Abraham1}
			\begin{cases}
				\partial_t E - \nabla_x \times B = - \frac{1}{N} \sum\limits_{i=1}^N v(\xi_i(t))\chi^N (x - x_i(t)),\\
				\nabla_x\cdot E =  \frac{1}{N} \sum\limits_{i=1}^N \chi^N(x- x_i(t)),\\[1.2ex]
				\partial_t B + \nabla_x \times E = \;0,  \hspace{.8cm}\nabla_x \cdot B =0. \end{cases}
		\end{equation}
		
		\noindent The particles move according to the equation of motion
		\begin{equation}\label{Abraham2}
			\begin{cases}\frac{\dd}{\dd t}{x}_{i} (t) = v(\xi_i(t))\\[1.5ex]
				\frac{\dd}{\dd t} {\xi}_{i}(t) = \int \chi^N (x-x_i(t)) \bigl [ E(t,x) + v(\xi_i(t)) \times B(t,x) \bigr ] \, \dd x. \end{cases}\\
		\end{equation}
		
		\noindent An equivalent regularization was used by Rein \cite{Rein2} to prove the existence of weak solutions to the Vlasov-Maxwell equations, and by Golse \cite{Golse} to prove the mean field limit for the regularized Vlasov-Maxwell system. For any fixed $r_N$, initial particle configuration $Z=(x_i, \xi_i)_{1\leq i \leq N}$ and initial field configuration $(E_{in}, B_{in}) \in C^2(\IR^3)$ satisfying the constraints
		\begin{equation} \Div E_{in}(x) = \frac{1}{N}\sum\limits \chi^N(x-x_i), \; \Div B_{in}(x) = 0, \end{equation}
		the system of equations defined by \eqref{Abraham1} and \eqref{Abraham2} has a unique strong solution as proven in \cite{GernotDirk} and \cite{KomechSpohn}.
		
		Note that the Abraham model is only semi-relativistic, because the charges are assumed to maintain their shape in any frame of reference, neglecting the relativistic effect of Lorentz-contraction. Rotations of the rigid particles are neglected, as well (though one may expect that these degrees of freedom can be separated anyway due to spherical symmetry of the form factor). On the other hand, one important virtue of this theory is that the total energy
		\begin{equation}\label{totalenergy} \varepsilon = \frac{1}{N} \sum\limits_{i=1}^N \sqrt{1 + \lvert \xi_i(t) \rvert^2 } + \frac{1}{2} \int E^2(t,x) + B^2(t,x)\, \dd x \end{equation} 
		is a constant of motion, as we will verify with a simple computation. 
		
			

		\subsection{The regularized Vlasov-Maxwell  system}
		
		In view of the extended charges model defined by equations \eqref{Abraham1} and \ref{Abraham2}, we introduce a corresponding mean field equation. For a given form factor $\chi \in C^\infty_c$ and a rescaling sequence $(r_N)_N$ , we consider the set of equations
		\begin{equation}\begin{split}\label{RMV}
				&\partial_t f + v(\xi)\cdot \nabla_x f + \widetilde K(t,x,\xi)\cdot \nabla_\xi f = 0, \\
				&\partial_t E - \nabla_x \times B = - \tilde j, \hspace{.6cm}  \nabla_x\cdot E = \tilde \rho,\\
				&\partial_t B + \nabla_x \times E = \;0,  \hspace{.8cm}\nabla_x \cdot B =0.
			\end{split}\end{equation}
			\begin{equation}\label{regularizeddensity}
				\tilde \rho =\chi^N *_x \int\limits f (t, \cdot ,\xi) \,\mathrm{d}{\xi}, \hspace{4mm} \tilde j =\chi^N *_x \int v(\xi) f(t, \cdot ,\xi)\, \mathrm{d}{\xi}.
			\end{equation}
			
			\begin{equation}
				\widetilde K(t,x,\xi) = \chi^N *_x \bigl(E + v(\xi) \times B \bigr)(t,x)
			\end{equation}
			
			\noindent where $\chi^N$ is the rescaled form factor defined in \eqref{rescalledformfactor}. We call this set of equations the \textit{regularized Vlasov-Maxwell system} with cut-off parameter $r_N$.

			Since the $L^1$ norm of $\rho$ propagates along any local solution and $\lVert D^\alpha \tilde \rho_t \rVert_\infty \leq \lVert D^\alpha \chi^N \rVert_\infty \lVert \rho_t \rVert_1$ all spatial derivatives of $\tilde \rho$ and $\tilde j$ are bounded uniformly in time. This is enough to show global existence of classical solutions for compact initial data $f_0 \in C^1_c(\IR^3 \times \IR^3),  \tilde E_{in}, \tilde B_{in} \in C^2_c(\IR^3)$ satisfying the constraints $\Div \tilde E_{in} = \tilde \rho_0, \, \Div  \tilde B_{in} = 0$, see \cite{Rein, Horst2} for more details. 
			
			According to the method of characteristics (see e.g. \cite{Golse}) $(\Psi_{t,0}(Z), E, B)$ is a solution of the Abraham model \eqref{Abraham1}, \ref{Abraham2} with initial data $(Z, E_{in}, B_{in})$ if and only of $(\mu^N[\Psi_{t,0}(Z)], E, B)$ is a solution of the regularized Vlasov-Maxwell system \eqref{RMV} in the sense of distributions with initial data $(\mu^N_0[Z], E_{in}, B_{in})$. 
			
			\begin{Remark} The regularized Vlasov-Maxwell system defined above is not exactly the same as the one considered by Golse \cite{Golse} or Rein \cite{Rein}, at least not a priori. In those publications, a double convolution is applied to the charge/current density, that is, the fields solve Maxwell's equation for $\rho = \chi^N*\chi^N*\int f(t,\cdot, \xi) \dd \xi, \; j=  \chi^N*\chi^N*\int v(\xi)f(t,\cdot, \xi) \dd \xi$. Here, only one mollifier is used in \eqref{regularizeddensity} to regularize the charge/current density, a second convolution with $\chi^N$ is applied as the fields act back on $f_t$, mirroring the form of the rigid charges model defined by eqs. (\ref{Abraham1},\ref{Abraham2}). However, by using the uniqueness of solutions to Maxwell's equation and the fact that convolutions commute with each other and with derivatives, one checks that both formulations of the regularized Vlasov-Maxwell dynamics are actually equivalent.
			\end{Remark}


			\section{Existence of solutions}\label{section:existenceofsolutions}
			While the 3-dimensional Vlasov-Poisson equation is very well understood from a PDE point of view, the state of research is less satisfying when it comes to the Vlasov-Maxwell equations. Existence of global weak solutions was first proven in DiPerna, Lions, 1989 \cite{DiPernaLions}. Concerning existence and uniqueness of classical solutions, no conclusive answer has been given, so far. The central result is the paper of Glassey and Strauss, 1986, aptly titled ``singularity formation in a collisionless plasma could occur only at high velocities''  \cite{GlasseyStrauss}. We recall their main theorem in the following.
			
			\begin{Theorem}[Glassey-Strauss, 1986]\label{Thm:Glassey}
				Let $f_0 \in C^1_c(\IR^3 \times \IR^3)$ and $E_{in}, B_{in} \in C^2_c(\IR^3)$ satisfying $ \Div E^{in} = \rho[f_0], \; \Div B_0 = 0$. Let $(f_t, E_t, B_t)$ be a (weak) solution of the Vlasov-Maxwell System \eqref{MV} with initial datum $(f_0,E_{in}, B_{in})$. Suppose there exists $T\in [0, +\infty]$ and $C > 0$ such that 
				\begin {equation}\label{momentumbound} R(t) = \sup \lbrace \lvert \xi \rvert : \exists x \in \IR^3 \, f(t,x,\xi) \neq 0 \rbrace < C, \; \forall t < T \end{equation}
			Then:
			\begin{equation}  \sup_{0\leq t < T^*} \lbrace \lVert f_t\rVert_{W^{1,\infty}_{x,\xi}},  \lVert (E_t, B_t) \rVert_{W^{1,\infty}_{x}} \rbrace <  \infty \end{equation}
			where $ \lVert f \rVert_{W^{1,\infty}_{x,\xi}} = \lVert f \rVert_\infty + \lVert \nabla_{x,\xi} f \rVert_\infty$ etc.   Hence, $(f_t, E_t, B_t)$  is the unique classical solution on $[0,T)$ with initial data $(f_0,E_{in}, B_{in})$.
		\end{Theorem}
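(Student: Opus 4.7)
The plan is to follow the classical Glassey-Strauss bootstrap, exploiting the Liénard-Wiechert representation recalled in Section \ref{section:lienard}. The strategy is to use the momentum support bound $R(t)<C$ to first control $\lVert(E_t,B_t)\rVert_\infty$, then $\lVert\nabla_x(E_t,B_t)\rVert_\infty$, and in parallel $\lVert f_t\rVert_{W^{1,\infty}_{x,\xi}}$, closing via Grönwall. The $L^\infty$ bound on $f$ is immediate since $f$ is transported along the characteristics of $(v(\xi),K)$, so $\lVert f_t\rVert_\infty=\lVert f_0\rVert_\infty$. For the fields, the hypothesis $R(t)<C$ confines the $\xi$-integration in $E_0',E_1,E_2$ (and the magnetic analogues) to a fixed compact set. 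On the support of the retarded kernel $Y$, the denominator $t-v(\xi)\cdot n$ appearing in $\alpha^0,\alpha^{-1}$ is bounded below by $c(1+|\xi|^2)^{-1}\gtrsim c(1+C^2)^{-1}$, so both kernels are integrable against the surface measure on the backward light cone. This yields a Grönwall inequality of the form
\begin{equation*}
\lVert E_t\rVert_\infty+\lVert B_t\rVert_\infty\leq C_0+C_1\int_0^t\bigl(\lVert E_s\rVert_\infty+\lVert B_s\rVert_\infty\bigr)\,\dd s,
\end{equation*}
where the integral term comes from $K=E+v\times B$ appearing inside $E_2,B_2$, and which closes by Grönwall.

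The heart of the proof is the first-order estimate on the fields. Naive differentiation of the Liénard-Wiechert kernels raises the homogeneity to $-3$, which is not integrable against the light cone. The Glassey-Strauss trick is to decompose $\nabla_x=T+n\,S$, where $S=n\cdot\nabla_x$ is the radial derivative and $T$ consists of tangential directions, which on the light cone are comparable to derivatives along characteristics of $\square_{t,x}$ and therefore do not worsen the singularity. The $S$-component falling on $\alpha^{-1}Y$ or $\alpha^0 Y$ is converted, via an algebraic identity relating $S(\alpha^j Y)$ to $\nabla_\xi(\alpha^{j+1}Y)$ modulo controllable remainders, into a $\xi$-derivative that is then integrated by parts onto $f$. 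This produces
\begin{equation*}
\lVert\nabla_x(E_t,B_t)\rVert_\infty\leq C_2+C_3\int_0^t\bigl(\lVert\nabla_{x,\xi}f_s\rVert_\infty+\lVert\nabla_x(E_s,B_s)\rVert_\infty\bigr)\,\dd s,
\end{equation*}
with constants depending only on $R$, $\lVert f_0\rVert_{W^{1,\infty}}$ and the initial fields. Differentiating the Vlasov equation along its characteristic flow gives in turn
\begin{equation*}
\lVert\nabla_{x,\xi}f_t\rVert_\infty\leq \lVert\nabla_{x,\xi}f_0\rVert_\infty\,\exp\!\Bigl(C\!\int_0^t\bigl(1+\lVert\nabla_x(E_s,B_s)\rVert_\infty\bigr)\,\dd s\Bigr).
\end{equation*}

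Inserting the second display into the first and applying Grönwall bounds $\Phi(t):=\lVert\nabla_x(E_t,B_t)\rVert_\infty+\lVert\nabla_{x,\xi}f_t\rVert_\infty$ uniformly on any compact subinterval of $[0,T)$, which together with the zeroth-order step delivers the claimed $W^{1,\infty}$ bounds. Uniqueness in this class then follows by a standard Grönwall estimate on the difference of two classical solutions, using that the Lorentz force $K$ is now Lipschitz on the relevant support. The main technical obstacle is the second step: identifying the precise cancellation structure in $\nabla_x\alpha^{-1}Y$ and $\nabla_x\nabla_\xi\alpha^0 Y$ that makes the $T$–$S$ decomposition work, and verifying that the boundary terms produced by integrating by parts in $\xi$ against the delta distribution supported on the light cone are either absent or absorbed into lower-order expressions. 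The radiation terms $E_2,B_2$ are particularly delicate because $K$ itself sits inside the integrand, creating a feedback loop in both the $L^\infty$ and the $W^{1,\infty}$ hierarchies that must be resolved by carefully ordered Grönwall iterations.
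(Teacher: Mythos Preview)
The paper does not itself prove this theorem; it is quoted from \cite{GlasseyStrauss}, with the proof only alluded to in Section~\ref{section:Corfromsol} via the estimates of Bouchut, Golse and Pallard \cite{Bouchut}. Comparing your sketch to those estimates reveals a genuine gap.

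Your proposed first-order field bound
\begin{equation*}
\lVert\nabla_x(E_t,B_t)\rVert_\infty\leq C_2+C_3\int_0^t\bigl(\lVert\nabla_{x,\xi}f_s\rVert_\infty+\lVert\nabla_x(E_s,B_s)\rVert_\infty\bigr)\,\dd s
\end{equation*}
is not what the $T$--$S$ decomposition actually yields, and combined with your exponential bound $\lVert\nabla_{x,\xi}f_t\rVert_\infty\leq\lVert\nabla_{x,\xi}f_0\rVert_\infty\exp\bigl(C\int_0^t(1+\lVert\nabla_x(E_s,B_s)\rVert_\infty)\,\dd s\bigr)$ it would not close: setting $I(t)=\int_0^t\lVert\nabla_x(E_s,B_s)\rVert_\infty\,\dd s$, one is led to an inequality of the type $I''\lesssim e^{cI}$, which can blow up in finite time. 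The actual output of the Glassey--Strauss analysis (see the displayed estimates in Section~\ref{section:Corfromsol}) is the \emph{logarithmic} bound
\begin{equation*}
\lVert K(t)\rVert_{W^{1,\infty}_{x,\xi}}\leq C\bigl(1+\log_+\lVert\nabla_x f\rVert_{L^\infty([0,T]\times\IR^6)}\bigr),
\end{equation*}
which arises because after the tangential/radial split and the integration by parts in $\xi$, the most singular surviving term is a kernel of order $\lvert x-y\rvert^{-3}$ integrated over a three-dimensional region against $f$ itself (not $\nabla f$); the near/far splitting that tames this borderline-divergent integral is precisely what produces the $\log_+$. Feeding the logarithmic estimate back into the characteristic bound gives the Osgood-type inequality $u'\lesssim(1+\log_+ u)\,u$ for $u=\sup_{s\leq t}\lVert\nabla_{x,\xi}f_s\rVert_\infty$, which \emph{does} stay finite on compact intervals. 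Your outline correctly identifies the mechanism (tangential/radial decomposition, integration by parts in $\xi$) but misstates the resulting inequality, and that misstatement is fatal to the Gronwall closure you propose.
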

		\noindent Simply put, the theorem states that singularity formation can occur in finite time only if particles get accelerated to velocities arbitrarily close to the speed of light. Subsequently, seemingly weaker conditions have been identified that ensure the boundedness of the momentum support and thus the existence of strong solutions. For instance, Sospedra-Alfonso and Illner \cite{Illner} prove: 
		
		\begin{equation}\label{Illner}\limsup\limits_{t \to T^-}\, R(t) = + \infty \;\; \Rightarrow \;\; \limsup\limits_{t \to T^-}\, \lVert \rho[f_t] \rVert_\infty = +\infty. \end{equation}
		
		\noindent Most recently, Pallard \cite{Pallard} showed that 
		
		\begin{equation}\label{Pallard}\limsup\limits_{t \to T^-}\, R(t) = + \infty \;\; \Rightarrow \;\; \limsup\limits_{t \to T^-}\, \lVert \rho[f_t] \rVert_{L^6(\IR^3)} = +\infty. \end{equation}
		
		\noindent Unfortunately, the criteria thus established are still far away from the known a priori bounds (the strongest, in $L^p$-sense, being the kinetic-energy bound on $\lVert \rho [f_t] \rVert_{L^{4/3}(\IR^3)}$, see e.g. \cite{Rein}) so that well-posedness of the Vlasov-Maxwell system is still considered an open problem. Note that the conditions \eqref{Illner} and \eqref{Pallard} are actually necessary and sufficient for \eqref{momentumbound}, because $ \rho_t(x)  = \int f(t,x,\xi) \dd \xi \leq \frac{4\pi}{3} R^3(t) \lVert f_0 \rVert_\infty$. \\
		
		\noindent We will also need the following theorem of Rein \cite{Rein2}, who used the regularization introduced above to establish the existence of global weak solutions to the Vlasov-Maxwell system, simplifying the original proof of DiPerna and Lions \cite{DiPernaLions}. 
		
		\begin{Theorem}[Rein, 2004]\label{Thm:Rein}
			Let $f_0 \in L^1\cap L^\infty(\IR^3\times\IR^3)$ and $E_{in}, B_{in} \in L^2(\IR^3)$ satisfying the compatibility condition \eqref{compatibility}. Let $(f^N_t, E^N_t, B^N_t)$ be a solution of the regularized Vlasov-Maxwell  system \eqref{RMV} with initial data $(f_0, \tilde E_{in}, \tilde B_{in})$. Then there exist functions $f \in L^\infty(\IR; L^1 \cap L^\infty(\IR^6)),  E,B\in L^\infty(\IR;L^2(\IR^3))$ such that, along a subsequence,
			\begin{align*} f^{N} \weak f \; in\; L^\infty([0,T]\times \IR^6);\;\; E^{N},B^{N} \weak E, B \; in \; L^2([0,T]\times \IR^3), k \to \infty \end{align*}
			\noindent for any bounded time-interval $[0, T], \, T >0$ and $(f, E, B)$ is a global weak solution of the Maxell-Vlasov system \eqref{MV} with $\lim\limits_{t\to 0} (f_t, E_t, B_t) = (f_0, E_{in}, B_{in})$ and $\lVert f_t \rVert_{L^p(\IR^6)} = \lVert f_0 \rVert_{L^p(\IR^6)}$ for all $p \in [1, \infty], \,  t>0$. 
		\end{Theorem}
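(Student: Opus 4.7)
The plan is to follow the compactness scheme pioneered by DiPerna--Lions \cite{DiPernaLions}, exploiting the mollification $\chi^N$ to obtain clean a priori bounds and passing to the limit in the nonlinear terms via a velocity-averaging lemma.

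\emph{Step 1 (uniform a priori estimates).} Because $\widetilde K[f^N]$ is smooth and bounded for each fixed $N$, the characteristic flow of the regularized Vlasov equation is volume preserving in $(x,\xi)$, so $\|f^N_t\|_{L^p(\IR^6)} = \|f_0\|_{L^p(\IR^6)}$ for every $p\in[1,\infty]$. Energy conservation for the Abraham model (the analogue of \eqref{totalenergy} written for the $\chi^N$-smeared sources) gives, uniformly in $N$,
\[
\tfrac{1}{2}\int(|E^N_t|^2+|B^N_t|^2)\,dx + \int\sqrt{1+|\xi|^2}\,f^N_t(x,\xi)\,dx\,d\xi \leq \varepsilon_0 .
\]
Interpolating the kinetic-energy bound against $\|f^N_t\|_\infty$ yields uniform $L^\infty_t L^{4/3}_x$ control of $\rho[f^N]$ and $j[f^N]$.

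\emph{Step 2 (weak compactness).} Banach--Alaoglu extracts a subsequence with $f^N\weak f$ weakly-$\ast$ in $L^\infty([0,T]\times\IR^6)$ and $E^N,B^N\weak E,B$ in $L^\infty_t L^2_x$. Linear terms in \eqref{RMV} pass to the limit immediately; the difficulty lies in the bilinear expressions $j[f^N]$ and $\widetilde K[f^N]\,f^N$.

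\emph{Step 3 (strong compactness via averaging and the field representation).} Rewrite the mollified Vlasov equation as
\[
\partial_t f^N + v(\xi)\cdot\nabla_x f^N = -\nabla_\xi\cdot\bigl(\widetilde K[f^N]\,f^N\bigr),
\]
whose right-hand side is bounded in an $H^{-1}_\xi$-type norm uniformly in $N$ thanks to the field estimates of Step 1. The DiPerna--Lions velocity-averaging lemma then yields pre-compactness in $L^p_{\text{loc}}([0,T]\times\IR^3)$ of $\int\psi(\xi)f^N\,d\xi$ for each $\psi\in C^\infty_c$; combined with the $\xi$-tightness supplied by the kinetic-energy bound this upgrades to strong convergence $\rho[f^N]\to\rho[f]$ and $j[f^N]\to j[f]$ in $L^p_{\text{loc}}$ for some $p>1$. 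Since $\chi^N\to\delta$, the same holds for the smeared sources $\tilde\rho^N,\tilde j^N$. Feeding the strongly convergent sources into the Li\'enard--Wiechert representation of Section 2 produces strong $L^2_{\text{loc}}$ convergence of $(E^N,B^N)$ to a limit $(E,B)$ solving Maxwell's equations with sources $\rho[f],j[f]$. The strong convergence of the fields, combined with the weak-$\ast$ convergence of $f^N$, is then enough to pass to the limit in $\widetilde K[f^N]f^N$ in $\mathcal D'$ and obtain the Vlasov equation for $(f,E,B)$.

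\emph{Step 4 (norm preservation and initial data).} The $L^1$ and $L^\infty$ bounds on $f$ are stable under weak-$\ast$ limits. For intermediate $p$, equality $\|f_t\|_{L^p}=\|f_0\|_{L^p}$ follows by applying the DiPerna--Lions renormalization framework to the limit Vlasov equation, whose force field now has the regularity needed to make $f$ a renormalized solution. Attainment of the initial data is standard, using equicontinuity of $t\mapsto f^N_t$ in $\mathcal D'$ inherited from the uniform bound on $\partial_t f^N$. The main obstacle throughout is \emph{Step 3}: weak convergence of $f^N$ alone cannot control the bilinear nonlinearity $\widetilde K[f^N]f^N$, and one must extract strong compactness first for the moments $\rho,j$ via velocity averaging and then for the fields via the Li\'enard--Wiechert formulas.
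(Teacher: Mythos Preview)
The paper does not prove this theorem; it is quoted verbatim from Rein \cite{Rein2} and used as a black-box input (see Section~\ref{section:existenceofsolutions} and the application in Section~\ref{section:Corfromsol}). There is therefore no proof in the present paper against which to compare your proposal.

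For what it is worth, your outline tracks Rein's actual argument fairly closely: uniform $L^p$ and energy bounds, Banach--Alaoglu, velocity averaging to upgrade the moments $\rho,j$ to strong $L^p_{\mathrm{loc}}$ convergence, and then passage to the limit in the bilinear term $Kf$ once the fields are known to converge strongly in $L^2_{\mathrm{loc}}$. The one deviation is your Step~3, where you propose to feed the strongly convergent sources into the Li\'enard--Wiechert representation of Section~\ref{section:lienard} to obtain strong convergence of $(E^N,B^N)$. Rein does not do this: the acceleration term $E_2$ in that representation depends on $Kf$, which is precisely the nonlinear object whose limit you are trying to identify, so the argument would be circular. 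Instead, Rein obtains strong $L^2_{\mathrm{loc}}$ compactness of the fields directly from the wave equation with strongly convergent sources $\tilde\rho^N,\tilde j^N$ (using finite propagation speed and standard energy estimates), and only then combines this with the weak-$\ast$ convergence of $f^N$ to close the limit in $\widetilde K f^N$. Apart from this point your sketch is sound.
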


		\section{Statement of the results}\label{section:statementofresults}
		
		\noindent In the previous sections, we have introduced three kinds of dynamics: The Vlasov-Maxwell system \eqref{MV}, the regularized Vlasov-Maxwell system \eqref{RMV} and the microscopic Abraham model of extended charges (\ref{Abraham1},\ref{Abraham2}). In order to approximate one solution by the other, it does not suffice to assume that the respective distributions are (in some sense) close at $t=0$. We also have to fix the incoming fields in an appropriate manner, otherwise free fields can be responsible for large deviations between mean field dynamics and microscopic dynamics. We will note our respective convention in the following definition.

		\begin{Definition}\label{Convention}
			Let $f_0 \in C^1_c(\IR^3 \times \IR^3)$ with $f_0 \geq 0, \int f_0(x,\xi) \dd x \dd \xi = 1$ and $E_{in}, B_{in} \in C^2_c(\IR^3)$ satisfying the Gauss constraints
			\begin{equation}\label{compatibility} \Div E_{in} = \rho[f_0] = \int f_0 (\cdot, \xi) \dd \xi, \;\;  \Div B_{in} = 0.\end{equation}
			Such $(f_0, E_{in}, B_{in})$ are the admissible initial data for the Vlasov-Maxwell system \eqref{MV}.
			\begin{enumerate}[1)]
				\item  For the regularized Vlasov-Maxwell system, we fix initial data for the fields as
				\begin{equation}\label{macroin} E^{N}_{in} := \chi^N*E_{in}, \;\;\;  B^N_{in} := \chi^N*B_{in}, \end{equation}
				for any $N \geq 1$. These fields satisfy: $ \Div  E^N_{in} = \tilde \rho[f_0]$ and $ \Div B^N_{in} = 0$. We denote by $(f^N, E^N, B^N)$ the unique solution of \eqref{RMV} with initial data $(f_0, E^N_{in}, B^N_{in})$. 	
				\item For the microscopic system with initial configuration $Z =(x_1, \xi_1, ..., x_N,\xi_N) \in \IR^{6N}$, the charge distribution can be written as $ \tilde\rho[\mu^N[Z]](x) = \frac{1}{N}\sum\limits_{i=1}^N \chi^N(x-x_i)$. Given a renormalizing sequence $(r_N)_{N \geq 1}$ we fix compatible initial fields $(E^{\mu}_{in}, B^\mu_{in})$ such that
				\begin{equation}\label{microin} E^\mu_{in} :=  E^N_{in} - \nabla G *( \tilde\rho[\mu^N_0[Z]] - \tilde\rho[f_0]), \;\;\;  B^\mu_{in} := B^N_{in}. \end{equation}
				Note that $E^\mu_{in}$ and $B^\mu_{in}$ depend on $N$ and $E^\mu_{in}$ also on $Z$. For any $N \in \IN$ and $Z=(x_i,\xi_i) \in \IR^{6N}$ we then denote by $\bigl((x_i^*, \xi_i^*)_{1\leq i \leq N}, E^\mu, B^\mu\bigr)$ the unique solution of $(\ref{Abraham1}, \ref{Abraham2})$ with initial data $(Z, E^\mu_{in}, B^\mu_{in})$. We call
				\begin{equation} {}^{N}\Psi_{t,0} = \IR^{6N} \to  \IR^{6N}, \; {}^{N}\Psi_{t,0}(Z)=(x_i^*(t), \xi_i^*(t) )_{i=1,..,N} \end{equation}
				the \emph{microscopic flow} and 
				\begin{equation} \mu^N_t[Z] := \mu^N[\Psi_{t,0}(Z)] = \frac{1}{N}\sum\limits_{i=1}^N \delta_{x^*_i(t)}  \delta_{\xi^*_i(t)}
				\end{equation}
				the \emph{microscopic density} of the system with initial configuration $Z$.
			\end{enumerate}
		\end{Definition} 
		\noindent Note: The macroscopic fields $(E^N_{in}, B^N_{in})$ are compactly supported, though the microscopic field $E_{in}^\mu$, determined by \eqref{macroin}, is not.\\
		

		\noindent We now state our precise result in the following theorem. Our approximation of the Vlasov-Maxwell dynamics is formulated in terms of the Wasserstein distances $W_p$ that play a central role in the theory of optimal transportation and that were first introduced in the context of kinetic equations by Dobrushin.  We shall briefly recall the definition and some basic properties. For further details, we refer the reader to the  book of Villani \cite[Ch. 6]{Villani}.
	\begin{Definition}
		\noindent Let $\M(\IR^k)$ the set of probability measures on $\IR^k$ (equipped with its Borel algebra). For given $\mu, \nu \in \M(\IR^k)$ let $\Pi(\mu, \nu)$ be the set of all probability measures $\IR^k \times \IR^k$ with marginal $\mu$ and $\nu$ respectively. 
		
		\noindent For $p\in [1,\infty)$  we define the \emph{Wasserstein distance} of order $p$ by
		
		\begin{equation}
		W_p(\mu, \nu) := \inf\limits_{\pi \in \Pi(\mu,\nu)} \, \Bigl( \int\limits_{\IR^k\times\IR^k} \lvert x -  y \rvert^p \, \dd \pi(x,y) \, \Bigr)^{1/p}.   
		\end{equation}
		
		\noindent Convergence in Wasserstein distance implies, in particular, weak convergence in $\M(\IR^k)$, i.e. 
		\begin{equation*} \int \Phi(x)\, \dd \mu_n(x) \to \int \Phi(x)\, \dd \mu(x), \;\;\; n \to \infty, \end{equation*}
		for all bonded, continuous functions $\Phi$. Moreover, convergence in $W_p$ implies convergence of the first $p$ moments. $W_p$ satisfies all properties of a metric on $\M(\IR^k)$, except that it may take the value $+\infty$. 
	\end{Definition}	 
	
	\noindent An important result is the \emph{Kantorovich-Rubinstein duality}:
	\begin{equation}\begin{split}\label{Kantorovich}W^p_p(\mu, \nu) =  \sup \Bigl\lbrace &\int \Phi_1(x) \, \dd\mu(x) - \int \Phi_2(y) \, \dd\nu(y) : \\ 
	&(\Phi_1, \Phi_2) \in L^1(\mu)\times L^1(\nu), \Phi_1(y) - \Phi_2(x) \leq \lvert x - y \rvert^p \Bigr \rbrace. \end{split} \end{equation}
	
	\noindent A particularly useful case is the first Wasserstein distance, for which the problem reduces further to
	\begin{equation*} W_1(\mu, \nu) = \sup\limits_{\lVert \Phi \rVert_{Lip} \leq 1}\Bigl\lbrace \int \Phi(x) \, \dd \mu(x) -  \int \Phi(x)\, \dd \nu(x) \Bigr\rbrace, \end{equation*}
	where $\lVert \Phi \rVert_{Lip}:= \sup\limits_{x\neq y} \frac{\Phi(x)-\Phi(y)}{\lvert x - y \rvert}$, to be compared with the \emph{bounded Lipschitz distance}
	\begin{equation*} \dbl{\mu, \nu} = \sup \Bigl\lbrace \int \Phi(x) \, \dd \mu(x) -  \int \Phi(x)\, \dd \nu(x)\, ;\; \lVert \Phi \rVert_{Lip}, \lVert \Phi \rVert_\infty \leq 1\Bigr\rbrace.\end{equation*}

		 \noindent In the following, probabilities and expectation values referring to initial data $Z \in \IR^{6N}$ are meant with respect to the product measure $\otimes^N f_0$ for a given probability density $f_0 \in L^1\cap L^\infty(\IR^3 \times \IR^3)$. That is, for any random variable $H:\mathbb{R}^{6N}\to\mathbb{R}$ and any element $A$ of the Borel-algebra  we write
		\begin{align}
			\mathbb{P}^N_0(H\in A)=& \int_{H^{-1}(A)} \prod_{j=1}^N f_0(z_j)dZ,\\
			\mathbb{E}^N_t(H)=& \int_{\mathbb{R}^{6N}} H(Z)\prod_{j=1}^N f_0(z_j)dZ\;.
		\end{align}
		When the particle number $N$ is fixed, we will usually omit the index and write only $\IP_0$, respectively $\IE_0$.

		\begin{Theorem}\label{Thm:Thm3}
			Let $f_0 \in C^1_c(\IR^3 \times \IR^3, \IR^+_0)$ with total mass one and $(E_{in}, B_{in}) \in C^2_c(\IR^3)$ satisfying the constraints \eqref{compatibility}.  Let $\gamma < \frac{1}{12}$ and $r_N$ a rescaling sequence with $r_N \geq N^{-\gamma}$. For $N \in \IN$, let $(f^N, E^N, B^N)$ the solution of the renormalized Vlasov-Maxwell equation \eqref{RMV} and $(\Psi_{t,0}(Z), E^\mu, B^\mu)$ the solution of the microscopic equations (\ref{Abraham1} \ref{Abraham2}) with initial data as in Def. \ref{Convention}. Let  $\mu^N_t[Z] := \mu^N[\Psi_{t,0}(Z)]$ the empirical density corresponding the the microscpic flow $\Psi_{t,0}(Z)$. Suppose there exists $T > 0$ and constant $C_0 > 0$ such that
			\begin{equation}\label{Assumption2} \lVert \rho[f^N_t] \rVert_\infty \leq C_0, \; \forall N \in \IN, \, 0 \leq t \leq T. \end{equation}
			\begin{enumerate}[a)]
				\item Then we have molecular chaos in the sense that for all $p \in [1, \infty)$ and $\epsilon > 0$: 
				\begin{equation} \forall  0 \leq t \leq T: \; \lim\limits_{N \to \infty} \IP^N_0 \Bigl[ W_p(\mu^N_t[Z], f_t) \geq \epsilon \Bigr] = 0 \end{equation}
				where $(f_t,E_t, B_t)$ is the unique classical solution of the Vlasov-Maxwell system \eqref{MV} on $[0,T]$ with initial data $(f_0, E_{in}, B_{in})$.
				
				\item For the regularized dynamics, we have the following quantitative approximation result: Let $p \geq 1$, $\alpha < \min \lbrace \frac{1}{6}, \frac{1}{2p} \rbrace$ and $\gamma < \delta < \frac{1}{4}$. Then there exist constants $L , C$ depending on $T, C_0$ and the initial data such that for all $t \in [0,T]$ and $N \geq 4$:
				\begin{equation}\label{Thm1}\IP_0 \Bigl[ \sup\limits_{0 \leq s \leq t} W_p(\mu^N_s[Z], f^N_s) \geq N^{- \delta} + e^{tL}N^{-\alpha} \Bigr] \leq e^{tC \sqrt{\log(N)}} N^{-\frac{1}{4} + \delta} + a(N,p,\alpha) \end{equation}
				where
				\begin{equation} 
					a(N,p,\alpha) = c' \cdot  \begin{cases}\exp(-cN^{1-2p \alpha}) & \text{if } p > 3\\
						\exp(-c\frac{N^{1- 6 \alpha}}{\log(2+N^{3\alpha})^{2}}) &  \text{if } p = 3\\
						\exp(-cN^{1- 6 \alpha}) &\text{if } p \in [1,3). \end{cases}\end{equation}
				The constant $c', c > 0$ depend only on $p, \alpha$ and $f_0$.\\

				\item For the fields, we have the following approximation results: For any compact region $M \subset \IR^3$ there exists a constant $C_1 > 0$ such that for any $0 \leq t \leq T$ and $N \geq 4$:
				\begin{equation} \begin{split}\IP_0 \Bigl[ \lVert (E^N_t, B^N_t) - (E^\mu_t, B^\mu_t)\rVert_{L^\infty(M)} \geq C_1 \sqrt{\log(N)} N^{-\delta} \Bigr]
						\leq e^{tC\sqrt{\log(N)}} N^{-\frac{1}{4} + \delta} .
					\end{split}\end{equation} 
					
				\end{enumerate}
				
			\end{Theorem}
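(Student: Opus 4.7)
The core of the theorem is part (b); parts (a) and (c) will follow from the same estimates with little extra work, so I describe (b) first.

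My plan is to follow the coupling strategy of \cite{Peter, PeterDustin}: draw $Z = (z_1,\dots,z_N)$ i.i.d.\ from $f_0$ and run, from the same initial data, two flows in parallel — the microscopic Abraham flow ${}^{N}\Psi_{t,0}(Z)$ pushing the particles by the self-generated fields $(E^\mu, B^\mu)$, and the mean field flow $\Psi^N_{t,0}$ that pushes each particle independently by the (smeared) Lorentz force $\widetilde K[f^N]$. The pushforward of $\otimes^N f_0$ under $\Psi^N_{t,0}$ is $f^N_t$ on each marginal, so
\begin{equation*}
W_p(\mu^N_t[Z], f^N_t) \,\leq\, W_p\bigl(\mu^N_t[Z], \mu^N[\Psi^N_{t,0}(Z)]\bigr) + W_p\bigl(\mu^N[\Psi^N_{t,0}(Z)], f^N_t\bigr).
\end{equation*}
The second term is a pure sampling error; standard concentration for empirical measures of i.i.d. samples on $\IR^6$ bounds it by $N^{-\delta}$ up to the rate function $a(N,p,\alpha)$, which already exhibits the expected phase transition at $p=3$ for dimension six. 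The first term is controlled by the particle-wise deviation $J_N(t) := \max_i |{}^{N}\Psi_{t,0}(Z)_i - \Psi^N_{t,0}(z_i)|$, and my main task is to close a Gronwall inequality for $J_N$ on a good event.

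Differentiating $J_N$ produces the force mismatch $\widetilde K^\mu(t, x_i^*, \xi_i^*) - \widetilde K[f^N](t, x_i, \xi_i)$, which splits naturally into (i) a Lipschitz-in-phase-space contribution at fixed field, giving a linear term $L \cdot J_N(t)$ (and hence the $e^{tL}$ factor in \eqref{Thm1}), and (ii) the field discrepancy $(E^\mu - E^N, B^\mu - B^N)$ evaluated along the trajectories. The latter has two sources. First, the initial-field mismatch $-\nabla G \ast (\tilde\rho[\mu^N_0[Z]] - \tilde\rho[f_0])$ prescribed in \eqref{microin} propagates as a free solution and is a fluctuation of the initial empirical measure, controlled by a law-of-large-numbers estimate convolved against $-\partial_t\nabla_x Y \ast G$. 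Second, the difference of inhomogeneous fields driven by $\mu^N_s$ vs.\ $f^N_s$ — this is where the Li\'enard--Wiechert decomposition from Section \ref{section:lienard} becomes essential. Writing $E^\mu - E^N$ as the sum of the $E_0', E_1, E_2$ (and analogous magnetic) terms, none of which contains a derivative of the density, I can apply the pointwise law-of-large-numbers estimates of Section \ref{section:pointwiseestimates}: conditional on the global bounds of Section \ref{Section:globalestimates} (energy conservation, uniformly bounded $\|\rho[f^N_t]\|_\infty$), each Li\'enard--Wiechert kernel convolved against the fluctuation $\mu^N_s - f^N_s$ yields a supremum bound of order $\sqrt{\log N}\, N^{-\delta}$ on compact sets, valid with probability at least $1 - e^{tC\sqrt{\log N}} N^{-1/4+\delta}$. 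Integrating the resulting inequality $\dot J_N \leq L\, J_N + \mathrm{fluct}(s)$ gives $J_N(t) \leq e^{tL} N^{-\alpha}$ on the good event, which combined with the sampling error above yields \eqref{Thm1}.

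Part (c) is a direct byproduct: on the same good event the Li\'enard--Wiechert bounds give $\|(E^N - E^\mu, B^N - B^\mu)\|_{L^\infty(M)} \lesssim \sqrt{\log N}\, N^{-\delta}$. For part (a), I use the triangle inequality $W_p(\mu^N_t[Z], f_t) \leq W_p(\mu^N_t[Z], f^N_t) + W_p(f^N_t, f_t)$: the first summand vanishes in probability by (b), and the assumption \eqref{Assumption2} together with the Sospedra-Alfonso--Illner criterion \eqref{Illner} ensures that the limit is a classical solution of \eqref{MV}, so the second summand vanishes by stability of the regularized Vlasov--Maxwell system under the cut-off $r_N \to 0$ (Rein \cite{Rein2} plus the classical theory of Section \ref{section:existenceofsolutions}).

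The hard step will be closing the estimate on the field difference, because the radiation term $E_2$ itself contains the Lorentz force $K$ and is therefore self-referential: a naive bound loses factors of $r_N^{-k}$ and forces one into a bootstrap argument that balances the cut-off size against the Li\'enard--Wiechert concentration rate. It is precisely this balance, together with the sampling exponent $\alpha < 1/6$ in dimension six, that pins down the admissible regime $r_N \geq N^{-\gamma}$ with $\gamma < \tfrac{1}{12}$; I expect the bulk of the technical work in Sections \ref{Section:globalestimates}--\ref{section:Gronwall} to go into making this quantitative.
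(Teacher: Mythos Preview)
Your skeleton is the paper's, but three load-bearing pieces are misplaced. First, you cannot apply the law of large numbers directly to $\mu^N_s - f^N_s$: the microscopic particles $\Psi_{s,0}(Z)$ are correlated through the self-generated field, so the fluctuation $E_i[\tilde\mu^N[\Psi]] - E_i[\tilde f^N]$ is not a sum of i.i.d.\ terms. The paper inserts the intermediate density $\mu^N[\Phi_{s,0}(Z)]$ (particles pushed by the \emph{mean field} flow, hence i.i.d.\ with law $f^N_s$) and splits each Li\'enard--Wiechert component into a LLN piece $E_i[\tilde f^N] - E_i[\tilde\mu^N[\Phi]]$ and a Lipschitz piece $E_i[\tilde\mu^N[\Phi]] - E_i[\tilde\mu^N[\Psi]]$. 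The second piece is \emph{not} a fluctuation: it is controlled by the local Lipschitz constant of the kernel times $J_N$. For the Coulomb kernel $\alpha^{-1}$ this Lipschitz constant integrates to $\sim \log(r_N^{-1}) \sim \log N$, so the Gronwall for $J_N$ is not $\dot J_N \leq L\,J_N + \text{fluct}$ but rather $\dot J_N \lesssim (\log N)\,J_N + N^{-1/4}$. This is why the paper weights the spatial part of $J_N$ by $\sqrt{\log N}$ and why the Gronwall constant is $C\sqrt{\log N}$, producing the $e^{tC\sqrt{\log N}}N^{-1/4+\delta}$ probability bound rather than an $e^{tL}$ factor. Relatedly, you have the two terms in \eqref{Thm1} swapped: the $N^{-\delta}$ comes from $J_N$ via Proposition~\ref{Prop:mufromJ}, while the $e^{tL}N^{-\alpha}$ comes from the sampling term $W_p(\mu^N[\Phi_{t,0}(Z)],f^N_t) \leq e^{tL}W_p(\mu^N_0[Z],f_0)$ combined with Fournier--Guillin.

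Second, your diagnosis of where $\gamma < \tfrac{1}{12}$ comes from is wrong. It is not the radiation-term bootstrap; the paper handles $E_2$ by a straightforward time-Gronwall (equation \eqref{KGronwall}) once $\lVert\tilde\rho[\mu^N_t]\rVert_\infty$ is bounded. The constraint $\gamma < \tfrac{1}{12}$ arises solely from Proposition~\ref{Prop:rhobound}: to keep the smeared microscopic charge density uniformly bounded one needs $W_p^p(\mu^N_0[Z],f_0) \leq r_N^{3+p}$ for some $p$, and making this typical via Fournier--Guillin in dimension six forces $1 - 2(3\gamma + \tfrac14) > 0$, i.e.\ $\gamma < \tfrac{1}{12}$ (see \eqref{sofgamma}). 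All other estimates in Sections~\ref{section:pointwiseestimates}--\ref{section:Gronwall} would tolerate $\gamma$ up to $\tfrac14$.
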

			
			\begin{Remarks}\mbox{} 
				\begin{enumerate}[1)]
					\item The result implies propagation of molecular chaos in the sense of convergence of marginals.
					
					
					\item We do not have a quantitative result for the convergence $f^N_t \weak f_t$, i.e. we do not know how fast $W_p(f^N_t, f_t)$ converges to $0$ for any $p$.

					\item Assumption \eqref{Assumption2} can be replaced by equivalent conditions, e.g. a uniform bound on $ \lVert \rho[f^N_t] \rVert_{L^6(\IR^3)}$ or on the momentum-support. Of course, it would be much more desirable to have a sufficient condition on $f_0$ only. However, such a condition would likely have to come out of the existence theory for Vlasov-Maxwell.  
					
					\item The constants $C$ and $C_0$ blow up as the maximal velocity $\overline v$ approaches 1 (speed of light).	
				\end{enumerate}
			\end{Remarks}
			
			\section{Corollaries from solution theory}\label{section:Corfromsol}
			We will first conclude some corollaries from the existence theorems cited above. Fix $f_0 \in C^1_c(\IR^3 \times \IR^3, \IR^+_0)$ and $T > 0$ as in Theorem \ref{Thm:Thm3}. By assumption, there exists $C_0$ such that 
			\begin{equation} \lVert \rho[f^N_t] \rVert_\infty \leq C_0, \; \forall N \geq 1, \, 0 \leq t \leq T. \end{equation}
			By the theorem of Sospedra-Alfonso and Illner \cite{Illner}, there thus exists a $\mathcal{R} > 0$ such that
			\begin{equation}\label{DefR} R[f^N](t) = \sup \lbrace \lvert \xi \rvert : \exists x \in \IR^3 \, f^N(t,x,\xi) \neq 0 \rbrace < \mathcal{R}, \end{equation}
			for all $N \geq 1$ and  $0 \leq t \leq T$. We define
			\begin{equation} \overline{\xi} := \mathcal{R} + 1 \text{ and  } \overline{v} := \lvert v(\overline{\xi})\rvert,\end{equation}
			which will serve us as an upper bound on the velocity of the particles. By the Glassey-Strauss theorem, there thus exists a constant $L' > 0$ such that
			\begin{equation}
				\lVert (E^N_t,B^N_t) \rVert_{\infty} + \lVert \nabla_x (E^N_t,B^N_t) \rVert_{\infty} \leq {L'}, 
			\end{equation}
			for all $N \geq 1, \, 0 \leq t \leq T$. In particular, observing that
			\begin{equation}
				\nabla_\xi v(\xi) = \nabla_\xi \frac{\xi}{\sqrt{1+\xi^2}} = \frac{\delta^{i,j}}{\sqrt{1+\xi^2}} - \frac{\xi^i \xi^j}{(\sqrt{1+\xi^2})^3}, \end{equation}
			with $\lvert \nabla_\xi v(\xi) \rvert \leq 2$, we have
			\begin{equation}\label{LipschitzL} \lVert K[f^N](t, \cdot, \cdot) \rVert_{W^{1,\infty}(\IR^3\times \IR^3)} \leq \max\lbrace L', 2 \rbrace =:L.
			\end{equation}  
			
			\noindent Note that the theorems of Glassey/Strauss und Sospedra-Alfonso/Illner  are formulated for the unregularized Vlasov-Maxwell system \eqref{MV}, so one has to check that they actually yield bounds that are uniform in $N$ as one considers the sequence of regularized solutions $f^N_t$. We refer, in particular, to the simplified proof of the Glasey-Strauss theorem proposed by Bouchut, Golse and Pallard \cite{Bouchut}. For instance, the $W^{1,\infty}$-bound on the fields is derived from estimates of the form 
			\begin{align*} 
				\lVert K (t) \rVert_{W^{1,\infty}_{x,\xi}} &\leq C_2e^{TC_2} \bigl( 1 + \log_+(\lVert\nabla_{x} f \rVert_{L^\infty([0,T] \times \IR^3 \times \IR^3)})\bigr),\\
				\sup\limits_{s \leq t}\lVert \nabla_{x,\xi} f(s) \rVert_\infty &\leq  \lVert \nabla_{x,\xi} f_0 \rVert_\infty + C_1 \int\limits_{0}^t (1 +  \log_+(\sup\limits_{s' \leq s}\lVert\nabla_{x,\xi} f(s') \rVert_\infty)) \sup\limits_{s' \leq s}\lVert \nabla_{x,\xi} f(s') \rVert_\infty \dd s,
			\end{align*}
			where $\log_+(x) := \max \lbrace 0 , \log(x) \rbrace$ and the constants $C_1, C_2 $ depend only on $T, f_0$ and $\mathcal{R}$ (see \cite[Section 5.4]{Bouchut}). Hence, one readily sees that the bounds hold independent of $N$. \\
			
			\noindent Since the velocity of the particles is bounded by $1$, the support in the space-variables remains bounded, as well, for compact initial data. We set 
			\begin{equation}\label{xsupport} \overline{r}  = \sup \bigl \lbrace \lvert x \rvert : \exists \xi  \in \IR^3 \, f_0(x,\xi) \neq 0 \bigr\rbrace + T + 1. \end{equation} 
			Then we have, in particular, $\supp{\tilde \rho[f_t]} \subseteq \mathrm{B}(\overline{r}; 0) = \lbrace x \in \IR^3 : \lvert x \rvert \leq \overline{r} \rbrace$ for all $0 \leq t \leq T$ as well as $\lvert \Psi^1_{t,0}(Z) \rvert_\infty < \overline{r}$ if $Z \in \supp \otimes^N f_0$.\\
			
			\noindent Now we recall from Theorem \ref{Thm:Rein} that, along a subsequence, 
			\begin{equation} (f^N_t, E^N, B^N) \weak (f'_t, E'_t, B'_t), \end{equation}
			where $(f', E', B')$ is a global weak solution of the Vlasov-Maxwell system \eqref{MV} with initial data $(f_0, E_{in}, B_{in})$ and weak convergence of the fields is understood in $L^2$ sense. However, for any $t \in [0,T]$ and any test-function $\varphi \in  C^\infty_c(\IR^3 \times \IR^3)$ with $\lvert \xi \rvert < \mathcal{R} \Rightarrow \varphi(x , \xi) = 0$,
			\begin{equation*} \int \varphi(x,\xi) f'_t(x,\xi) \dd \xi \dd x = \lim\limits_{N \to \infty} \int \varphi(x,\xi) f^N_t(x,\xi) \dd \xi \dd x = 0.\end{equation*} This means that the momentum-support of $f'$ remains bounded by $\mathcal{R}$ and according to the Glassey-Strauss theorem, $(f', E', B')$ is actually a strong solution on $[0,T]$. Thus, under the assumptions of the theorem, we can conclude that
			\begin{equation}  (f^N_t, E^N_t, B^N_t) \weak (f_t, E_t, B_t), \; \forall 0 \leq t \leq T,\end{equation}
			where $(f_t, E_t, B_t)$ is the  \emph{unique classical} solution on $[0,T]$ with initial data  $(f_0, E_{in}, B_{in})$ and the convergence holds for any subsequence (otherwise one could extract a convergent subsubsequence) and thus for the sequence itself.
			
			Finally, note that since we can restrict all measures to the compact space $\mathrm{B}(\overline{r})\times \mathrm{B}(\overline \xi) $, weak convergence is equivalent to convergence in Wasserstein distance so that, in particular, $W_p(f^N_t, f_t) \to 0$ for all $p \in [1, \infty)$.
			
			\section{Strategy of proof}\label{Section:strategyofproof}
			
			\begin{Definition}\label{Def:macroflow}
				Let $f_0, E_{in}, B_{in}$ as above. Let $f^N_t$ the solution of the regularized Vlasov-Maxwell  system with initial datum $f_0$. Let $K[\tilde f^N]$ the Lorentz-force field corresponding to the charge- and current-density induced by $\tilde f^N = \chi^N * f^N$. We denote by $\varphi^N_{t,s}$ the characteristic flow of the regularized Vlasov-Maxwell system \eqref{RMV}, i.e. the solution of  
				\begin{equation} 
					\begin{cases}\label{meanfieldeq}\frac{\dd}{\dd t}{y} (t) = v(\eta(t))\\[1.5ex]
						\frac{\dd}{\dd t} {\eta}(t) = \tilde K[\tilde f^N](t, y, \eta) \end{cases}\end{equation}
				with $\varphi^N_{s,s}(z) = z$. We denote by ${}^{N}\Phi_{t,s}$ the lift of $\varphi^N_{t,s}(\cdot)$ to the $N$-particle phase-space, that is ${}^N\Phi_{t,s}(Z) := (\varphi^N_{t,s}(z_1), ... , \varphi^N_{t,s}(z_N))$. In other words, ${}^N\Phi_{t,s}$ is the $N$-particle flow generated by the (regularized) mean field force induced by $f^N_t$. We will often omit the index $N$. 
				
			\end{Definition}

			\noindent Our result is based on the method of Boers and Pickl \cite{Peter} that was recently refined in \cite{PeterDustin} in the context of Vlasov-Poisson. We introduce the following quantity as a measure of molecular chaos. 
			
			\begin{Definition}\label{Def:J2}
				Let ${}^N\Phi_{t,0}$ the mean field flow defined above and ${}^N\Psi_{t,0}$ the microscopic flow solving \eqref{Abraham2}. We denote by ${}^N\Psi^1_{t,0} = (x^*_i(t))_{1\leq i \leq N}$ and ${}^N\Psi^2_{t,0} = (\xi^*_i(t))_{1\leq i \leq N}$ the projection onto the spatial, respectively the momentum coordinates. 
				
				\noindent Let $J(t)$ be the stochastic process given by
				\begin{equation}\begin{split}J^N_t(Z) := \min \Bigl\lbrace 1, \lambda(N) N^{\delta}\sup\limits_{0 \leq s \leq t} \lvert {}^N\Psi^1_{t,0}(Z) -  {}^N\Phi^1_{t,0}(Z) \rvert_\infty&\\ +  N^\delta \sup\limits_{0 \leq s \leq t} \lvert  {}^N\Psi^2_{t,0}(Z) -  {}^N\Phi^2_{t,0}(Z) \rvert_\infty &\Bigr\rbrace, \end{split} \end{equation}
				where $\lvert Z \rvert_\infty = \max \lbrace \lvert x_i \rvert :  1\leq i\leq N \rbrace$ denotes the maximum-norm on $\IR^{3N}$ and $\lambda(N) := \max \lbrace 1 ,  \sqrt{\log(N)} \rbrace$. 
			\end{Definition}
			
			\noindent Our aim is to derive a Gronwall estimate for the time-evolution of $\IE^N_0(J^N_t)$, showing that $\IE^N_0(J^N_t) \xrightarrow{N \to \infty} 0, \, \forall 0 \leq t \leq T$. This will be achieved by using the Li\'{e}nard-Wiechert representation of the fields introduced in section \ref{section:lienard}. The field corresponding to the (regularized) Vlasov-Maxwell dynamics is generated by the smeared Vlasov-density $\tilde f^N$, while the field corresponding to the microscopic dynamics of the rigid charges is generated by the smeared microscopic density $\tilde \mu^N[Z] := \chi^N*_x \mu[Z]$. For a given space-time point $(t,x) \in \IR \times \IR^3$, we will estimate the difference as:
			\begin{align}\notag &\bigl\lvert E_i[\tilde f^N](t,x) - E_i[\tilde \mu^N](t,x) \bigr\rvert \\\label{macroterm0}
				&\leq \bigl\lvert E_i[\tilde f^N](t,x) - E_i[\tilde \mu^N [\Phi_{s,0}(Z)]](t,x) \bigr\rvert\\\label{microterm0}
				&+  \bigl \lvert E_i[\tilde \mu^N [\Phi_{s,0}(Z)]](t,x) -  E_i[\tilde \mu^N [\Psi_{s,0}(Z)]](t,x) \bigr\rvert \end{align}
						\noindent for $i=1,2,3$ and similarly for the magnetic field components. Here, we have introduced as an intermediate, the field corresponding to the (smeared) point-charge density $\mu^N [\Phi_{s,0}(Z)]$ of the mean field flow $\Phi_{s,0}(Z)$. We will use a law-of-large number estimate to show that terms of the form \eqref{macroterm0} are typically small, because the particles evolving with the mean field flow are at all times i.i.d. with law $f^N$. For the terms of the form \eqref{microterm0}, we will derive a local Lipschitz bound in terms of $J^N_t(Z)$, the (weighted) maximal distance between the respective mean field and microscopic trajectories.\\
			
\noindent The relevance of $\IE^N_0(J^N_t)$ for the proof of molecular chaos is grounded in the following observations.

\begin{Lemma}\label{Lemma:maxWinfty}
	For $X =(x_1,...,x_n) \in \IR^n$ let $\mu^N[X] := \frac{1}{N} \sum\limits_{i=1}^N \delta_{x_i} \in \M(\IR^n)$. 
	Then we have for all $p \in [1,\infty]$:
	\begin{equation} W_p (\mu^N[X] , \mu^N[Y]) \leq \bigl\lvert X - Y\bigr\rvert_\infty.\end{equation}
\end{Lemma}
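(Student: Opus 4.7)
The plan is to exhibit an explicit transport plan that witnesses the claimed inequality; no optimisation over couplings is required. Consider the ``index-preserving'' discrete coupling
\begin{equation*}
\pi := \frac{1}{N} \sum_{i=1}^N \delta_{(x_i, y_i)} \in \M(\IR^n \times \IR^n).
\end{equation*}
First I would verify that $\pi \in \Pi(\mu^N[X], \mu^N[Y])$: pushing forward by the first projection gives $\frac{1}{N}\sum_i \delta_{x_i} = \mu^N[X]$, and by the second projection $\frac{1}{N}\sum_i \delta_{y_i} = \mu^N[Y]$, so $\pi$ is admissible.

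Next, substituting this coupling into the definition of the Wasserstein distance of order $p \in [1,\infty)$ yields
\begin{equation*}
W_p^p(\mu^N[X], \mu^N[Y]) \leq \int \lvert x - y \rvert^p \, \dd \pi(x, y) = \frac{1}{N} \sum_{i=1}^N \lvert x_i - y_i \rvert^p \leq \max_{1 \leq i \leq N} \lvert x_i - y_i \rvert^p = \lvert X - Y \rvert_\infty^p,
\end{equation*}
and taking $p$-th roots gives the inequality. For the limiting case $p = \infty$ (interpreted as the essential supremum of $\lvert x-y \rvert$ under the coupling) the same $\pi$ gives $\pi$-$\mathrm{ess\,sup}\,\lvert x-y\rvert = \max_i \lvert x_i - y_i \rvert = \lvert X - Y \rvert_\infty$, so the bound extends.

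There is essentially no obstacle here: the argument is a one-line computation, and the lemma is simply the observation that the naive atom-to-atom coupling already saturates $W_p$ between two empirical measures supported on $N$ labelled points. This is precisely the device that will later allow one to convert the pointwise control of the mean-field versus microscopic trajectories encoded in $J^N_t(Z)$ into $W_p$-control of the corresponding empirical distributions.
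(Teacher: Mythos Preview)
Your proof is correct and uses the same index-preserving coupling $\pi=\frac{1}{N}\sum_i\delta_{(x_i,y_i)}$ as the paper. The only cosmetic difference is that the paper first reduces to $p=\infty$ via the monotonicity $W_p\le W_q$ for $p\le q$ and then computes the $\pi$-essential supremum, whereas you compute the $p$-th moment directly; both arguments are one-liners.
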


\begin{proof}
	Since $W_p \leq W_q$ for $p\leq q$, it suffices to consider the infinite Wasserstein distance defined by	
	\begin{equation*} W_\infty(\mu, \nu) = \inf \lbrace \pi- esssup\,  \lvert x - y \rvert \, \bigr\rvert \, \pi \in \Pi(\mu,\nu)\rbrace.\end{equation*}	
	\noindent  We then observe that $\pi_0 =  \sum\limits_{i=1}^N \delta_{x_{i}}\delta_{y_{i}} \in \Pi(\mu^N[Z], \mu^N[Y])$ with  $ \pi_0- esssup\,  \lvert x - y \rvert  = \max\limits_{1 \leq i \leq N} \lvert x_i - y_i\rvert  = \lvert X - Y \rvert_\infty$. 
\end{proof}
\noindent With this Lemma, we immediately conclude the following:
\begin{Proposition}\label{Prop:mufromJ}
	For all $p \in [1,\infty]$ it holds that 
	\begin{equation}  
	\IP_0 \Bigl[ \sup\limits_{0 \leq s \leq t} W_p(\mu^N[\Psi_{s,0}(Z)], \mu^N[\Phi_{s,0}(Z)]) \geq N^{-\delta} \Bigr] \leq \IE_0(J^N_t).\end{equation}
\end{Proposition}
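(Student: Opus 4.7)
The plan is to derive this estimate directly from Lemma \ref{Lemma:maxWinfty}, the definition of the process $J^N_t$, and a trivial Markov-type observation. No probabilistic estimates are needed at this stage; the statement is, at heart, a deterministic pointwise comparison between $W_p$ and $J^N_t$, followed by taking expectations.

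First I would apply Lemma \ref{Lemma:maxWinfty} with the underlying one-particle space being the phase space $\IR^6$, treating each pair $(x_i^*(s), \xi_i^*(s))$ as a single point. For every $s \in [0,t]$ and every $Z \in \IR^{6N}$ this yields
\begin{equation*}
W_p(\mu^N[\Psi_{s,0}(Z)], \mu^N[\Phi_{s,0}(Z)]) \leq \max_{1 \leq i \leq N} \bigl\lvert (x^*_i(s), \xi^*_i(s)) - \varphi^N_{s,0}(z_i) \bigr\rvert_{\IR^6}.
\end{equation*}
Bounding the Euclidean norm on $\IR^6$ by the sum of the $\IR^3$-norms of its spatial and momentum components, and distributing the maximum over $i$ across the two resulting terms, gives
\begin{equation*}
W_p(\mu^N[\Psi_{s,0}(Z)], \mu^N[\Phi_{s,0}(Z)]) \leq \bigl\lvert \Psi^1_{s,0}(Z) - \Phi^1_{s,0}(Z) \bigr\rvert_\infty + \bigl\lvert \Psi^2_{s,0}(Z) - \Phi^2_{s,0}(Z) \bigr\rvert_\infty.
\end{equation*}

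Next I would take the supremum over $s \in [0,t]$, multiply by $N^\delta$, and use $\lambda(N) \geq 1$ to insert the extra weight on the spatial term at no cost. The right-hand side then becomes exactly the expression inside the $\min\{1,\cdot\}$ defining $J^N_t(Z)$, so, since $J^N_t \in [0,1]$ by construction, I obtain the pointwise deterministic bound
\begin{equation*}
\min \Bigl\{1,\, N^\delta \sup_{0 \leq s \leq t} W_p(\mu^N[\Psi_{s,0}(Z)], \mu^N[\Phi_{s,0}(Z)]) \Bigr\} \leq J^N_t(Z).
\end{equation*}

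To finish, I observe that the event $\{\sup_{0 \leq s \leq t} W_p(\mu^N[\Psi_{s,0}(Z)], \mu^N[\Phi_{s,0}(Z)]) \geq N^{-\delta}\}$ is contained in $\{J^N_t(Z) = 1\}$, so its indicator is bounded pointwise by $J^N_t$. Integrating against $\otimes^N f_0$ produces the claim. There is no real obstacle in this proof; the only mildly subtle point is reconciling the symmetric $\IR^6$-form of Lemma \ref{Lemma:maxWinfty} with the asymmetric definition of $J^N_t$, in which the spatial component carries the extra factor $\lambda(N)$. This costs nothing precisely because $\lambda(N) \geq 1$, so inserting the weight can only strengthen the inequality.
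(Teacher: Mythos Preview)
Your argument is correct and is precisely the ``immediate'' deduction the paper alludes to after Lemma~\ref{Lemma:maxWinfty}; the paper gives no further details, and your write-up fills them in faithfully. The one point worth noting is that you have made explicit the role of $\lambda(N)\geq 1$ and the containment in $\{J^N_t=1\}$, which the paper leaves implicit.
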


			\noindent In total, the approximation of the solution to the Vlasov-Maxwell system will be split as:
			\begin{align}\label{split1} W_p(\mu^N_t[Z] , f_t ) &\leq W_p(\mu^N[\Psi_{t,0}(Z)], \mu^N[\Phi_{t,0}(Z)])\\\label{split2}
				&+ W_p(\mu^N[\Phi_{t,0}(Z)], f^N_t)\\\label{split3}
				&+ W_p(f^N_t, f_t).\end{align}
			
			\noindent The first term is the most interesting one, concerning the difference between microscopic time-evolution and mean field time-evolution. It will be controlled in terms of $\IE^N_0(J^N_t)$ by virtue of Prop. \ref{Prop:mufromJ}.\\

			\noindent Convergence of \eqref{split3} is a purely deterministic statement and follows from Theorem \ref{Thm:Rein} cited above. The proof of Rein, however, is based on a compactness argument and does not yield quantitative bounds. Hence, we do not know at what rate $\eqref{split3}$ goes to zero. Based on the corresponding result in the Vlasov-Poisson case, see  \cite{Dustin}, we conjecture that $W_p(f^N_t, f_t) \sim r_N^{1-\epsilon}$ for any $\epsilon > 0$ and $p \leq 2$, though we were not yet able to prove this. \\
			
			\noindent The second term $W_p(\mu^N[\Phi_{t,0}(Z)], f^N_t) = W_p(\varphi^N_{t,0} \# \mu^N_0[Z], \varphi^N_{t,0} \#  f_0)$ concerns the sampling of the mean field dynamics by discrete particle trajectories. Since the mean field forces satisfy a Lipschitz bound uniformly in $N$ according to \eqref{LipschitzL}, we have the following standard result:
			
			\begin{Lemma}\label{Lemma:LGronwall}
				Under the assumptions of Theorem \ref{Thm:Thm3}, it holds that
				\begin{equation*}
					W_p(\mu^N[\Phi_{t,0}(Z)], f^N_t) = W_p(\varphi^N_{t,0} \# \mu^N_0[Z], \varphi^N_{t,0} \#  f_0)\leq e^{tL} W_p(\mu^N_0[Z], f^N_t)
				\end{equation*}
				for all $0 \leq t \leq T$, where $L$ is the uniform Lipschitz constant defined in \eqref{LipschitzL}.
			\end{Lemma}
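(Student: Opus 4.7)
The plan is to recognize the right-hand side as a pushforward of an optimal coupling, and to reduce everything to a Lipschitz bound on the characteristic flow $\varphi^N_{t,0}$. First, by the method of characteristics, the regularized Vlasov equation in \eqref{RMV} admits the representation $f^N_t = \varphi^N_{t,0}\#f_0$, because $f^N$ solves the linear transport equation with velocity field $(v(\xi), \widetilde K[\tilde f^N](t,\cdot,\cdot))$. Likewise, by the very definition of $\Phi_{t,0}$ as the $N$-fold lift of $\varphi^N_{t,0}$, one has $\mu^N[\Phi_{t,0}(Z)] = \varphi^N_{t,0}\#\mu^N_0[Z]$. This rewrites the quantity we want to bound as the Wasserstein distance between two pushforwards under the same map.

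The second step is a Gronwall estimate establishing that $\varphi^N_{t,0}$ is globally Lipschitz on $\mathbb{R}^6$ with constant $e^{tL}$, uniformly in $N$. Given two characteristic trajectories $(y,\eta)(t)$ and $(y',\eta')(t)$ solving \eqref{meanfieldeq} with initial data $z,z'$, the bounds $|\nabla_\xi v|\leq 2$ and $\|\widetilde K[\tilde f^N](t,\cdot,\cdot)\|_{W^{1,\infty}}\leq L$ (see \eqref{LipschitzL}, noting that convolution with the probability density $\chi^N$ does not increase the Lipschitz seminorm of the unregularized field) give
\begin{equation*}
\frac{\mathrm{d}}{\mathrm{d}t}\bigl(|y-y'| + |\eta-\eta'|\bigr) \leq L\bigl(|y-y'| + |\eta-\eta'|\bigr),
\end{equation*}
so that $|\varphi^N_{t,0}(z) - \varphi^N_{t,0}(z')| \leq e^{tL}|z-z'|$ by Gronwall's inequality.

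Finally, I would transport an optimal coupling forward. Pick $\pi_0 \in \Pi(\mu^N_0[Z], f_0)$ attaining (or approximating) $W_p(\mu^N_0[Z], f_0)$, and set $\pi_t := (\varphi^N_{t,0}, \varphi^N_{t,0})\#\pi_0$. Then $\pi_t \in \Pi(\varphi^N_{t,0}\#\mu^N_0[Z], \varphi^N_{t,0}\#f_0)$ by the marginal-preserving property of pushforwards, and the change-of-variables formula combined with the Lipschitz bound yields
\begin{equation*}
W_p^p(\mu^N[\Phi_{t,0}(Z)], f^N_t) \leq \int |\varphi^N_{t,0}(z) - \varphi^N_{t,0}(z')|^p \,\mathrm{d}\pi_0(z,z') \leq e^{ptL}\, W_p^p(\mu^N_0[Z], f_0),
\end{equation*}
which is the claim (reading the right-hand side of the Lemma as $W_p(\mu^N_0[Z], f_0)$, since $f^N_0 = f_0$ by Definition \ref{Convention}). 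There is no real obstacle: the only nontrivial input is the uniform-in-$N$ Lipschitz bound on $\widetilde K[\tilde f^N]$, which is furnished by the a priori estimates discussed in Section \ref{section:Corfromsol}, and the argument for $W_\infty$ is the same with the integral replaced by an essential supremum.
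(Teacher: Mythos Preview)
Your argument is correct and is precisely the standard proof the paper has in mind; indeed, the paper does not spell out a proof at all, merely calling this a ``standard result'' that follows from the uniform Lipschitz bound \eqref{LipschitzL}. Your three steps --- identifying both measures as pushforwards under the same characteristic flow, establishing the $e^{tL}$-Lipschitz bound on $\varphi^N_{t,0}$ via Gronwall, and transporting an optimal coupling --- constitute exactly that standard argument, and your observation that the right-hand side should be read as $W_p(\mu^N_0[Z], f_0)$ (since $f^N_0=f_0$) correctly resolves the apparent typo in the statement.
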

			
				
			
			\noindent It remains to check that if the initial configuration $Z$ is chosen randomly with law $\otimes^N f_0$, the microscopic density $\mu^N_0[Z]$ approximates $f_0$ in Wasserstein distance. To this end, we will apply the following large deviation estimate due to  Fournier and Gullin \cite{Fournier}. 
			
			\begin{Theorem}[Fournier and Guillin]\label{Fournier}
				\noindent Let $f \in \M(\IR^n)$ and $p \in (0, \infty)$. For  $q > 0, \kappa > 0$, and $\gamma >0$. Assume there exists $ \kappa > $ and $\gamma > 0$ such that $E_{\kappa, \gamma}(f):= \int e^{\gamma \lvert x \rvert^\kappa}  \dd f(x) < + \infty$. 
				Let $(x_i)_{i=1,...,N}$ be a sample of independent variables, distributed according to the law $f$ and $\mu^N[X]:= \sum\limits_{i=1}^N \delta_{x_i}$.
				Then, for all $N \geq 1$ and $\xi \in (0, 1)$: 
				\begin{equation*} 
				\IP\bigl[W^p_p(\mu^N[X], f) > \xi \bigr] \leq  a(N,\xi)
				\end{equation*}
				with \begin{equation*}	
				a(N,\xi):=	
				C \begin{cases}\exp(-cN\xi^2) & \text{if } p > n/2\\
				\exp(-cN(\frac{\xi}{\ln(2+1/\xi)})^2) &  \text{if } p = n/2\\
				\exp(-cN\xi^{n/p}) &\text{if } p \in [1,n/2) \end{cases} 
				\end{equation*}
			
				\noindent The positive constants $C$ and $c$ depend only on $p$, $n$ $\kappa, \gamma$ and $E_{\kappa, \gamma}(f)$.
			\end{Theorem}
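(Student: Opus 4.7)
\medskip

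\noindent\textbf{Proof proposal.} The statement is a classical concentration-of-measure result for empirical distributions in Wasserstein distance, and the plan is to combine a dyadic-partition argument for $W_p$ with a Bernstein-type concentration inequality for binomial counts, the exponential moment hypothesis being used only to cut off the tail of $f$.

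First, I would truncate. Using Markov's inequality for $e^{\gamma |x|^\kappa}$ together with a union bound over the $N$ samples, one can assume with probability at least $1 - a(N,\xi)/2$ that every $x_i$ lies in a ball $B(R_N)$ of radius $R_N = O(\log N)^{1/\kappa}$; outside this ball both $\mu^N[X]$ and $f$ carry negligible mass in the relevant sense (the excess contribution to $W_p^p$ is absorbed by the moment hypothesis). Inside $B(R_N)$, I would partition dyadically: fix an integer $L$ (to be optimized) and let $\{Q_\alpha^\ell\}$ be a nested family of cubes of side $2^{-\ell} R_N$ for $\ell = 0, 1, \dots, L$.

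Second, I would construct an explicit transport plan from $\mu^N[X]$ to $f$ by going scale-by-scale: at scale $\ell$, redistribute the mass inside each cube $Q_\alpha^\ell$ to match the marginal of $f$ on the children cubes of the next scale. The cost of moving mass within a cube of side $2^{-\ell}R_N$ to the power $p$ is bounded by $(2^{-\ell}R_N)^p$ times the total discrepancy $\sum_\alpha |\mu^N[X](Q_\alpha^\ell) - f(Q_\alpha^\ell)|$. The number of cubes at scale $\ell$ is $\sim 2^{\ell n}$, and on each one Bernstein's inequality gives
\begin{equation*}
\mathbb{P}\bigl[\, |\mu^N[X](Q_\alpha^\ell) - f(Q_\alpha^\ell)| \geq t \,\bigr] \leq 2 \exp\!\Bigl(-c\,\tfrac{N t^2}{f(Q_\alpha^\ell)+t}\Bigr).
\end{equation*}
A union bound over all cubes at all scales, combined with Cauchy--Schwarz (or a direct layer-cake estimate) on the resulting sum, yields a bound on $W_p^p(\mu^N[X],f)$ in terms of $N$, $L$ and $R_N$. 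Optimizing $L$ gives the three regimes: for $p>n/2$ the statistical rate $N^{-1/2}$ from Bernstein is slower than the geometric rate $2^{-Lp}R_N^p$ from scale $L$, so the former controls and one obtains $\exp(-cN\xi^2)$; for $p<n/2$ the geometric rate dominates and one obtains $\exp(-cN\xi^{n/p})$; the critical case $p=n/2$ produces a matching of the two rates that generates the logarithmic correction $\ln(2+1/\xi)$ through the number of active scales.

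The main obstacle is the critical case $p=n/2$, where the scale sums are only logarithmically summable and sharp bookkeeping is required to extract the correct $\ln(2+1/\xi)^{-2}$ factor inside the exponential; this is where a naive union bound loses. A careful argument tracks the contribution of each dyadic scale separately and balances the deviation parameter $t_\ell$ in the Bernstein bound against the geometric factor $(2^{-\ell}R_N)^p$, rather than summing uniformly. The dependence of the constants $C,c$ on $\kappa,\gamma,E_{\kappa,\gamma}(f)$ enters solely through the truncation radius $R_N$ and is otherwise routine.
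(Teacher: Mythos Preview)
The paper does not prove this theorem at all: it is stated as a citation of Fournier and Guillin \cite{Fournier} and used as a black box (see the subsequent Lemma~\ref{Cor:largedeviation}, which merely specializes the statement to $n=6$ and $\xi=N^{-\alpha p}$). So there is nothing to compare your argument against in the paper itself.

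That said, your sketch is a fair outline of the original Fournier--Guillin proof: truncation via the exponential moment, a dyadic partition of the support, a scale-by-scale transport plan controlled by Bernstein-type concentration on the cube counts, and an optimization over the number of scales that produces the three regimes $p>n/2$, $p=n/2$, $p<n/2$. If your intention was to supply a proof where the paper gives none, this is the right architecture; the delicate point you correctly flag is the critical case $p=n/2$, where a uniform choice of deviation parameter loses a logarithm and one must tune $t_\ell$ scale by scale. For the purposes of this paper, however, no such argument is needed---the result is simply quoted.
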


			\begin{Lemma}\label{Cor:largedeviation}
				Applying the previous theorem in dimension $n=6$ with $\epsilon = N^{\alpha p}$ we get 
				\begin{equation*}	\IP\Bigl[W_p(\mu^N_0[Z], f_0) > N^{-\alpha} \Bigr] \leq  a(N,p,\alpha) = c' \cdot  \begin{cases}\exp(-cN^{1-2p \alpha}) & \text{if } p > 3\\
						\exp(-c\frac{N^{1- 6 \alpha}}{\log(2+N^{3\alpha})^{2}}) &  \text{if } p = 3\\
						\exp(-cN^{1- 6 \alpha}) &\text{if } p \in [1,3). \end{cases}\end{equation*}
			\end{Lemma}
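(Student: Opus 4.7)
The plan is to simply verify that the hypotheses of Theorem \ref{Fournier} are met in our setting and then carry out the substitution. First, I would check the moment condition: since $f_0 \in C^1_c(\IR^3 \times \IR^3)$ is compactly supported in $\IR^6$, the exponential moment $E_{\kappa,\gamma}(f_0) = \int e^{\gamma |z|^\kappa}\,\dd f_0(z)$ is finite for any choice of $\kappa, \gamma > 0$ (in particular one may take $\kappa = \gamma = 1$), so the hypothesis of Theorem \ref{Fournier} is satisfied with constants depending only on $f_0$.

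Next I would apply Theorem \ref{Fournier} in dimension $n = 6$ with the choice $\xi = N^{-\alpha p}$. Observing that
\begin{equation*}
W_p(\mu^N_0[Z], f_0) > N^{-\alpha} \iff W^p_p(\mu^N_0[Z], f_0) > N^{-\alpha p} = \xi,
\end{equation*}
the theorem immediately yields
\begin{equation*}
\IP\bigl[W_p(\mu^N_0[Z], f_0) > N^{-\alpha}\bigr] \leq a(N, N^{-\alpha p}).
\end{equation*}

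It remains to plug $\xi = N^{-\alpha p}$ into the three cases of the bound and simplify the exponents. In the regime $p > n/2 = 3$, the exponent $-c N \xi^2$ becomes $-c N^{1-2\alpha p}$. At the critical value $p = 3$, the exponent $-c N (\xi / \ln(2 + 1/\xi))^2$ becomes $-c N^{1 - 6\alpha}/\log(2 + N^{3\alpha})^2$. Finally, for $p \in [1, 3)$, the exponent $-c N \xi^{n/p}$ gives $-c N \cdot N^{-\alpha p \cdot 6/p} = -c N^{1-6\alpha}$. Collecting these three cases yields exactly the stated form of $a(N,p,\alpha)$, with $C, c$ absorbing the $p$-, $\alpha$- and $f_0$-dependence.

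No step here is genuinely difficult; the only mild subtlety is remembering that the theorem is stated for $W_p^p$, which is why the threshold $N^{-\alpha}$ on $W_p$ translates to $\xi = N^{-\alpha p}$ and produces the exponents $N^{1-2p\alpha}$ and $N^{1-6\alpha}$ that distinguish the three regimes. Since $f_0$ has compact support, there is no issue about uniform control of the exponential moment as $N$ varies.
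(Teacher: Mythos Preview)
Your proposal is correct and follows exactly the approach the paper intends: the lemma is stated as a direct corollary of Theorem~\ref{Fournier}, and you have simply (and correctly) filled in the verification of the exponential moment hypothesis via compact support of $f_0$, the translation $W_p > N^{-\alpha} \iff W_p^p > N^{-\alpha p}$, and the arithmetic of the three exponents. The paper itself gives no further argument beyond the substitution indicated in the statement, so nothing is missing.
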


			\section{Global estimates}\label{Section:globalestimates}
			By assumption, there exists a constant $C_0 >0$ such that $\lVert \rho[f^N] \rVert_{L^\infty([0,T] \times \IR^3)} \leq C_0$ for all $N \in \IN \cup \lbrace + \infty \rbrace$. Using the methods introduced in \cite{Dustin}, we will now show that as long as mean field dynamics and microscopic dynamics are sufficiently close, this implies certain bounds on the microscopic density and fields. As we have to deal with singular kernels, the necessary regularizations come from the smearing with the $N$-dependent mollifier $\chi^N$.\\
			
			\noindent \textbf{Notation / Definition:} Following \cite{Pallard} we introduce the shorthand notation
			\begin{equation} g \lesssim h :\iff \exists C > 0: g \leq C \, h,\end{equation}
			where $C \in \IR$ is a constant that may depend only on $T$ and initial data.\\ 
			Moreover, for fixed $N \geq 1$ and any measurable function $h$ on $\IR^n$, $n= 3$ or $n=6$, we introduce the notation $\tilde h := \chi^N*_x h$. For a probability measure $\M(\IR^n)$ we define $\tilde v \in \M(\IR^n)$ by $\int h \dd\tilde{\nu} := \int \tilde h \dd\nu$ for all measureable $h$.  Note that if $\rho(x) = \frac{1}{N} \sum \limits_{i=1}^N \delta(x - x_i)$ for $x_i \in \IR^3$, we have $\tilde \rho = \frac{1}{N} \sum \limits_{i=1}^N \chi^N(x - x_i)$, consistent with the notation of Section \ref{Section:microscopictheory}.

			\begin{Lemma}\label{chibounds}
				Let $h: \IR^3 \to \IR^n$ a measurable function satisfying $\lvert h(x) \rvert \leq \frac{1}{\lvert x \rvert^2}$. Then:
				\begin{align}
					i)	\;\; \; &\lvert \chi^N*h (x) \rvert \;\;\; \lesssim \min \bigl\lbrace r_N^{-2}, \frac{1}{\lvert x \rvert^2} \bigr\rbrace,\\[1ex]
					ii) \;\; \; &\lvert \nabla \chi^N*h (x) \rvert \lesssim \min \bigl\lbrace r_N^{-3}, \frac{1}{\lvert x \rvert^3} \bigr\rbrace.
				\end{align}
			\end{Lemma}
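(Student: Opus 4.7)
The strategy is a routine two-regime split based on the relative size of $|x|$ and $r_N$, exploiting the scaling of the rescaled form factor. Throughout I would use that $\chi^N$ is supported in $B(0,r_N)$ together with $\lVert\chi^N\rVert_\infty \lesssim r_N^{-3}$, $\lVert\chi^N\rVert_{L^1}=1$, $\lVert\nabla\chi^N\rVert_\infty \lesssim r_N^{-4}$, and $\lVert\nabla\chi^N\rVert_{L^1}\lesssim r_N^{-1}$, plus the local integrability bound $\int_{B(0,R)} |y|^{-2}\,\dd y\lesssim R$ in three dimensions.

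For part $i)$, in the near field $|x|\leq 2r_N$, the ball $B(x,r_N)$ sits inside $B(0,3r_N)$, so pulling out $\lVert\chi^N\rVert_\infty$ and integrating the $|y|^{-2}$ singularity gives $|\chi^N*h(x)|\lesssim r_N^{-3}\cdot r_N = r_N^{-2}$. In the far field $|x|>2r_N$, any $y$ in the support of $\chi^N(\cdot-x)$ satisfies $|y|\geq |x|/2$, hence $|h(y)|\leq 4/|x|^2$, and $\lVert\chi^N\rVert_{L^1}=1$ produces $|\chi^N*h(x)|\leq 4/|x|^2$. Taking the stronger of the two estimates in each range yields the claimed minimum.

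Part $ii)$ splits analogously. The near-field case $|x|\leq 2r_N$ is identical to $i)$ after replacing $\lVert\chi^N\rVert_\infty$ by $\lVert\nabla\chi^N\rVert_\infty\lesssim r_N^{-4}$, giving $r_N^{-4}\cdot r_N = r_N^{-3}$. The far-field case is the technical point: the naive estimate would only give $|\nabla\chi^N*h(x)|\lesssim (1/|x|^2)\lVert\nabla\chi^N\rVert_{L^1}\lesssim 1/(r_N |x|^2)$, which is too weak. To gain the missing power of $|x|$ I would exploit the vanishing mean $\int\nabla\chi^N=0$ (valid since $\chi^N$ is compactly supported), writing
\begin{equation*}
\nabla(\chi^N*h)(x) = \int \nabla\chi^N(y)\bigl[h(x-y)-h(x)\bigr]\,\dd y.
\end{equation*}
On the support $|y|\leq r_N$, the argument $x-y$ stays in $B(x,r_N)$, a region bounded away from the origin by $|x|/2\geq r_N$ where $h$ is smooth with the natural gradient bound $|\nabla h(\xi)|\lesssim|\xi|^{-3}\lesssim|x|^{-3}$ (the scaling satisfied by the Liénard--Wiechert kernels $\alpha^0 Y$, $\alpha^{-1}Y$ and their derivatives to which this lemma is applied). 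The mean value theorem then yields $|h(x-y)-h(x)|\lesssim r_N/|x|^3$ on this region, and combined with $\lVert\nabla\chi^N\rVert_{L^1}\lesssim r_N^{-1}$ this gives $|\nabla(\chi^N*h)(x)|\lesssim 1/|x|^3$, as required.

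The main obstacle is precisely this cancellation step: the pointwise bound $|h|\leq|x|^{-2}$ alone is insufficient for the $|x|^{-3}$ decay of the gradient, and one genuinely needs a companion estimate on $\nabla h$ (here implicit in the regularity of the kernels the lemma will be applied to). The role of $\int\nabla\chi^N=0$ is to convert this derivative bound into a gain of one power of $|x|$ at the price of a factor $r_N$, which is then exactly absorbed by $\lVert\nabla\chi^N\rVert_{L^1}\lesssim r_N^{-1}$; everything else is bookkeeping of the two regimes.
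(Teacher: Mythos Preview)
Your argument for part $i)$ is correct and essentially coincides with the paper's: both split into near and far field relative to $r_N$, use $\lVert\chi^N\rVert_\infty\lesssim r_N^{-3}$ together with the local integrability of $|y|^{-2}$ in the near field, and use that on $\operatorname{supp}\chi^N(\cdot)$ one has $|x-y|\geq |x|/2$ together with $\lVert\chi^N\rVert_1=1$ in the far field. The paper organizes the near-field computation by splitting the $y$-integral at $|y|=r_N$ rather than splitting on $|x|$, but this is cosmetic.

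For part $ii)$ your analysis is in fact sharper than the paper's. You correctly observe that the hypothesis $|h(x)|\le |x|^{-2}$ alone cannot yield the far-field decay $|\nabla(\chi^N*h)(x)|\lesssim |x|^{-3}$; the naive bound stalls at $r_N^{-1}|x|^{-2}$. Your fix via $\int\nabla\chi^N=0$ and the mean value theorem, under the additional assumption $|\nabla h|\lesssim |x|^{-3}$ away from the origin, is exactly right and is satisfied by the Li\'enard--Wiechert kernels to which the lemma is actually applied. The paper's own proof establishes the $r_N^{-3}$ bound for the gradient by the same near-field split you describe, but for the far field it only records the estimate $\chi^N*|y|^{-s}(x)\le 2^s|x|^{-s}$ and does not separately treat $\nabla\chi^N*h$ there; as you note, that last step genuinely requires either the extra derivative hypothesis on $h$ or the cancellation you exploit. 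So your proposal not only reproduces the paper's argument where it is complete but also identifies and closes a gap the paper leaves implicit.
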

			\begin{proof}
				Recalling that $\lVert \chi^N \rVert_\infty = r_N^{-3} \lVert \chi \rVert_\infty$ and $\lVert \chi^N \rVert_1 = 1$, we compute:
				\begin{align*}	
					\lvert \chi^N * h (x)\rvert &\leq \int \lvert k (y) \rvert \chi^N(x-y) \dd^3 y \leq \int \frac{1}{\lvert y \rvert^2}\, \chi^N(x-y) \, \dd^3 y\\
					&\leq \int\limits_{\lvert y \rvert \leq r_N} +  \int\limits_{\lvert y \rvert > r_N} \frac{1}{\lvert y \rvert^2} \chi^N(x-y) \dd^3 y\\
					&\leq \lVert \chi^N \rVert_\infty  \int\limits_{\lvert y \rvert \leq  r_N} \frac{1}{\lvert y \rvert^2} \dd^3 y +  \frac{1}{r_N^{2}} \int \chi^N (x-y) \dd^3 y \lesssim  r_N^{-2}.
				\end{align*}
				
				\noindent Similarly,
				\begin{align*}
					\lvert \nabla (\chi^N*h) (x) \rvert &\leq \lvert \nabla \chi^N \rvert * \lvert k \rvert (x) \leq  \int\limits_{\lvert y \rvert \leq r_N} +  \int\limits_{\lvert y \rvert > r_N} \frac{1}{\lvert y \rvert^2} \vert \nabla \chi^N(x-y) \rvert \dd^3 y\\
					&\leq \lVert \nabla \chi^N \rVert_\infty  \int\limits_{\lvert y \rvert \leq  r_N} \frac{1}{\lvert y \rvert^2} \dd^3 y +  \frac{1}{r_N^{2}} \int \lvert \nabla \chi^N (x-y) \rvert \dd^3 y\\
					& \leq r_N^{-4} \lVert \nabla \chi \rVert_\infty \, 4 \pi  r_N  + {r_N^{-2}} r_N^{-1} \lVert \nabla \chi \rVert_1
					\leq r_N^{-3} (4 \pi  \lVert \nabla \chi \rVert_\infty + \lVert \nabla \chi \rVert_1).
				\end{align*}
				
				\noindent Finally, if $\lvert x \rvert > 2r_N$, the mean-value theorem of integration yields for $s \geq 1$:
				\begin{align*} \chi^N*\frac{1}{\lvert y \rvert^s}(x) = \int \frac{1}{\lvert x-y \rvert^s} \chi^N(y) \dd ^3 y &\leq \sup\lbrace  \lvert x - y \rvert^{-s} \mid y \in \supp \chi^N \rbrace\leq \frac{2^s}{\lvert x \rvert^s}, \end{align*} 
				where we used the fact that $\int \chi^N =1$ and $\lvert y \rvert \leq r_N \leq \frac{1}{2} \lvert x \rvert, \, \forall y \in \supp (\chi^N)$. 
			\end{proof}

			\subsection{Bounds on the charge density}
			\begin{Proposition}\label{Prop:rhobound}
				
				Suppose there exists a $p \in [1, \infty)$ such that 
				
				\begin{equation}\label{sufficientlyfast2} W_p(\mu^N_0[Z], f_0) \leq r_N^{3 + p}. \end{equation}
				
				\noindent Then there exists a constant $C_\rho$ depending on $T$  such that
				\begin{equation} 
					\lvert {}^{N}\Psi_{t,0}(Z) - {}^{N}\Phi_{t,0}(Z) \rvert_\infty < r_N \Rightarrow \lVert \tilde \rho[\mu^N_t[Z]] \rVert_\infty \leq C_\rho.
				\end{equation}
			\end{Proposition}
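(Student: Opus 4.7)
The goal is a uniform pointwise bound on
\begin{equation*}
\tilde\rho[\mu^N_t[Z]](x) = \frac{1}{N}\sum_{i=1}^N \chi^N(x - x_i^*(t)).
\end{equation*}
Since $\chi^N$ is supported in $B(0,r_N)$ with $\lVert\chi^N\rVert_\infty = r_N^{-3}\lVert\chi\rVert_\infty$, the crude bound
\begin{equation*}
\tilde\rho[\mu^N_t[Z]](x) \leq \frac{\lVert\chi\rVert_\infty}{N r_N^3}\,\#\bigl\{i : \lvert x - x_i^*(t)\rvert \leq r_N\bigr\}
\end{equation*}
is my starting point. Writing $x_i^\varphi(t)$ for the spatial component of $\varphi^N_{t,0}(z_i)$, the hypothesis $\lvert {}^N\Psi_{t,0}(Z) - {}^N\Phi_{t,0}(Z)\rvert_\infty < r_N$ forces each $x_i^*(t)$ into $B(x_i^\varphi(t), r_N)$, so that
\begin{equation*}
\#\{i : \lvert x - x_i^*(t)\rvert \leq r_N\} \leq \#\{i : \lvert x - x_i^\varphi(t)\rvert \leq 2 r_N\} =: N_\varphi(x).
\end{equation*}
The task reduces to showing $N_\varphi(x)/N \lesssim r_N^3$, uniformly in $x$ and $t \in [0,T]$.

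To bound the count of mean field particles in $B(x, 2r_N)$, I would replace the indicator by a smooth cutoff $\psi_x \in C^1(\IR^3; [0,1])$ with $\mathds{1}_{B(x, 2r_N)} \leq \psi_x \leq \mathds{1}_{B(x, 3r_N)}$ and $\lVert\nabla\psi_x\rVert_\infty \leq C/r_N$, lifted trivially to phase space as $\Psi_x(y,\eta) := \psi_x(y)$ (still $C/r_N$-Lipschitz). Then by Kantorovich-Rubinstein duality
\begin{equation*}
\frac{N_\varphi(x)}{N} \leq \int \Psi_x\,\dd\mu^N[\Phi_{t,0}(Z)] \leq \int \Psi_x\,\dd f^N_t + \frac{C}{r_N}\,W_1\bigl(\mu^N[\Phi_{t,0}(Z)], f^N_t\bigr).
\end{equation*}
The first integral is bounded by $\lVert\rho[f^N_t]\rVert_\infty \cdot \mathrm{Vol}(B(x, 3r_N)) \leq C_0 \cdot \tfrac{4\pi}{3}(3r_N)^3 \lesssim r_N^3$, using assumption \eqref{Assumption2}. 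For the second, $W_1 \leq W_p$ combined with Lemma \ref{Lemma:LGronwall} and hypothesis \eqref{sufficientlyfast2} gives
\begin{equation*}
W_1\bigl(\mu^N[\Phi_{t,0}(Z)], f^N_t\bigr) \leq W_p\bigl(\mu^N[\Phi_{t,0}(Z)], f^N_t\bigr) \leq e^{tL} W_p(\mu^N_0[Z], f_0) \leq e^{TL}\, r_N^{3+p}.
\end{equation*}

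Combining the two estimates yields $N_\varphi(x)/N \lesssim r_N^3 + r_N^{-1}\,e^{TL} r_N^{3+p} \lesssim r_N^3$ (since $p \geq 1$ and $r_N \leq 1$), whence $\tilde\rho[\mu^N_t[Z]](x) \lesssim 1$ with a constant $C_\rho$ depending only on $T$, $C_0$, $L$ and $\chi$. The only subtle point I expect, and the place where the hypothesis on the initial Wasserstein distance is actually used, is the passage from a global Wasserstein bound to a hard pointwise count of particles in a ball of radius $O(r_N)$: a direct duality against $\mathds{1}_{B(x, 2r_N)}$ is not available because this function is not Lipschitz, and approximating it by a smooth bump inevitably costs a factor $r_N^{-1}$. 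The exponent $3+p$ in \eqref{sufficientlyfast2} is calibrated precisely to absorb this loss while leaving an $r_N^3$-sized bound, matching the volume scale of a ball of radius $r_N$ weighted by $\lVert\rho[f^N_t]\rVert_\infty$. Everything else reduces to the volume computation and the Gronwall lemma for the characteristic flow.
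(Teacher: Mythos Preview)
Your argument is correct and takes a genuinely different route from the paper. The paper isolates a general lemma (Lemma~\ref{Lemma:rhobound}) based on the full Kantorovich duality for $W_p^p$ with $c$-conjugate functions: for probability measures $\rho_1,\rho_2$ with $\rho_2\in L^\infty$, one has $\lVert\tilde\rho_1\rVert_\infty \le |B^d(2)|\,\lVert\rho_2\rVert_\infty + r_N^{-(p+d)}W_p^p(\rho_1,\rho_2)$. This lemma is then applied twice --- first with $\rho_2=\rho[f^N_t]$ to bound $\tilde\rho[\mu^N[\Phi_{t,0}(Z)]]$, and a second time (with a suitably chosen order $q$) to pass from the mean field empirical measure to the microscopic one. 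Your proof, by contrast, handles the second comparison by the elementary inclusion of balls $B(x,r_N)\subset B(x,2r_N)$ once the flows are $r_N$-close, and the first by testing against a single Lipschitz bump and invoking only the $W_1$ Kantorovich--Rubinstein duality (using $W_1\le W_p$). This is more hands-on and avoids the $c$-conjugate machinery entirely; the price is that you do not obtain the reusable Lemma~\ref{Lemma:rhobound} as a by-product, and you use the hypothesis $W_p\le r_N^{3+p}$ only through its $W_1$ consequence, which is slightly wasteful but harmless. Your closing remark about the role of the exponent $3+p$ --- that the $r_N^{-1}$ Lipschitz cost of the bump must be absorbed while leaving an $r_N^3$ volume factor --- captures exactly the calibration that the paper achieves through the $r_N^{-(p+d)}$ prefactor in its lemma.
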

			
			\begin{Corollary}
				Under the conditions of the proposition, we also have
				\begin{equation}\label{CorDbound}
					\lvert {}^{N}\Psi_{t,0}(Z) - {}^{N}\Phi_{t,0}(Z) \rvert_\infty < r_N \Rightarrow 	\lVert D^\alpha  \tilde\rho[\mu^N_t[Z]]  \rVert_\infty \lesssim r_N^{-\lvert \alpha \rvert}.
				\end{equation}
			\end{Corollary}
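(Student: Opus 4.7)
The plan is to reduce the derivative bound to the $L^\infty$ bound on $\tilde\rho$ already established in Proposition~\ref{Prop:rhobound}, via a short scaling and particle-counting argument.

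First, differentiating the finite sum gives
\[ D^\alpha \tilde\rho[\mu^N_t[Z]](x) = \frac{1}{N} \sum_{i=1}^N D^\alpha \chi^N(x - x_i^*(t)). \]
The rescaling $\chi^N(y) = r_N^{-3}\chi(y/r_N)$ yields $D^\alpha \chi^N(y) = r_N^{-3-\lvert\alpha\rvert}(D^\alpha \chi)(y/r_N)$, so $\lVert D^\alpha \chi^N \rVert_\infty \lesssim r_N^{-3-\lvert\alpha\rvert}$ while the support of $D^\alpha \chi^N$ remains contained in $B(0, r_N)$. Hence
\[ \lvert D^\alpha \tilde\rho[\mu^N_t[Z]](x) \rvert \;\lesssim\; r_N^{-3-\lvert\alpha\rvert} \cdot \frac{1}{N}\, \# I_x, \qquad I_x := \{\, i : \lvert x - x_i^*(t) \rvert \leq r_N \,\}. \]

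Next, I convert the hypothesized bound $\lVert \tilde\rho[\mu^N_t[Z]] \rVert_\infty \leq C_\rho$ (granted by Proposition~\ref{Prop:rhobound}) into an upper bound on the count $\# I_x$. For each $i \in I_x$, the function $y \mapsto \chi^N(y - x_i^*(t))$ is nonnegative, supported in $B(x, 2 r_N)$, and integrates to one. Using nonnegativity of the other summands,
\[ \frac{1}{N}\, \# I_x \;\leq\; \int_{B(x, 2 r_N)} \tilde\rho[\mu^N_t[Z]](y) \, \dd y \;\leq\; C_\rho\, \lvert B(x, 2 r_N) \rvert \;\lesssim\; r_N^3. \]
Combining the two displays gives $\lvert D^\alpha \tilde\rho[\mu^N_t[Z]](x) \rvert \lesssim r_N^{-\lvert\alpha\rvert}$ uniformly in $x$, which is the claim.

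The closeness hypothesis $\lvert {}^N\Psi_{t,0}(Z) - {}^N\Phi_{t,0}(Z) \rvert_\infty < r_N$ enters only through the invocation of Proposition~\ref{Prop:rhobound}; everything else is a deterministic scaling and integration argument. Consequently there is no serious obstacle here: the essential observation is that an order-$\lvert\alpha\rvert$ derivative of a mollifier of scale $r_N$ costs a factor $r_N^{-\lvert\alpha\rvert}$ in the sup norm without enlarging its support, so the local particle count that already controls $\tilde\rho$ automatically controls all its spatial derivatives with the expected loss.
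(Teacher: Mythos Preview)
Your proof is correct and takes a genuinely different route from the paper. The paper observes that $D^\alpha \tilde\rho = (D^\alpha\chi^N) * \rho[\mu^N_t]$, rescales to write $D^\alpha\chi^N = r_N^{-|\alpha|}\lVert D^\alpha\chi\rVert_1 \, \overline\chi^N$ with $\overline\chi := D^\alpha\chi/\lVert D^\alpha\chi\rVert_1$, and then re-invokes Proposition~\ref{Prop:rhobound} (ultimately Lemma~\ref{Lemma:rhobound}) with $\overline\chi$ playing the role of the form factor to conclude $\lVert\overline\chi^N * \rho[\mu^N_t]\rVert_\infty \lesssim 1$. Your argument instead stays with the original $\chi$ and extracts from the bound $\lVert\tilde\rho\rVert_\infty \leq C_\rho$ a local particle-count estimate $\tfrac{1}{N}\#I_x \lesssim r_N^3$, obtained by integrating $\tilde\rho$ over a ball of radius $2r_N$; this is then combined with the pointwise bound $\lVert D^\alpha\chi^N\rVert_\infty \lesssim r_N^{-3-|\alpha|}$. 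The paper's approach is slicker in that it recycles the Kantorovich-duality machinery wholesale, whereas yours is more elementary and makes the underlying mechanism --- that an $L^\infty$ density bound is equivalent to a uniform bound on particle counts in balls of the mollifier scale --- completely transparent. One minor remark: your inequality $\tfrac{1}{N}\#I_x \leq \int_{B(x,2r_N)}\tilde\rho$ relies on $\chi \geq 0$, which is not literally listed among the form factor conditions \eqref{formfactor} but is clearly the intended physical assumption.
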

			\begin{proof}
				Note that $  D^\alpha  \tilde\rho[\mu^N_t] = D^\alpha (\chi^N * \rho[\mu^N_t]) =   (D^\alpha \chi^N) * \rho[\mu^N_t]$, and  \begin{equation*} D^\alpha \chi^N(x)  =  D^\alpha_x  r_N^{-3}\chi(\frac{x}{r_N}) = r_N^{-\lvert \alpha \rvert} r_N^{-3} (D^\alpha\chi)(\frac{x}{r_N}). \end{equation*}
				\noindent Let $\overline \chi := \frac{D^\alpha\chi}{\lVert D^\alpha\chi \rVert_1}$. This $\overline \chi$ satisfies \eqref{formfactor} and can thus be used as a form factor instead of $\chi$. The previous proposition then yields
				$
				\lvert {}^{N}\Psi_{t,0}(Z) - {}^{N}\Phi_{t,0}(Z) \rvert_\infty < r_N \Rightarrow
				\lVert \overline\chi^N * \rho[\mu^N_t]\rVert_\infty \leq C,
				$
				and thus 
				\begin{equation*}	\lVert D^\alpha  \tilde\rho[\mu^N_t] \rVert_\infty  = \lVert D^\alpha\chi \rVert_1 \,  r_N^{- \lvert \alpha \rvert} \lVert \overline\chi^N * \rho[\mu^N_t]\rVert_\infty \lesssim  r_N^{- \lvert \alpha \rvert}.
				\end{equation*}
			\end{proof}
			
			\begin{Remark} In the end, we will have to show that assumption \eqref{sufficientlyfast2} is satisfied for \emph{typical} initial conditions, as the initial particle configurations are chosen randomly and independently with law $f_0$. This (and only this) requirement will set the lower bound on the cut-off to $r_N \sim N^{-\gamma}$ with $\gamma < \frac{1}{12}$. 
			\end{Remark}
			
			\noindent The proof of Proposition \ref{Prop:rhobound} is based on the following Lemma derived in \cite{Dustin} (c.f. also \cite[Prop. 2.1]{BGV}.) 
			
	\begin{Lemma}\label{Lemma:rhobound} 
		Let $\rho_1, \rho_2$ two probability measures on $\IR^d$ and $\rho_2 \in L^\infty(\IR^d)$. Then:
		\begin{equation} 
		\lVert \tilde{\rho}_1  \rVert_\infty \leq \lvert \mathrm{B}^d(2) \rvert\, \lVert \rho_2 \rVert_\infty +  r_N^{-(p+d)} \,  W_p^p(\rho_1, \rho_2),
		\end{equation}
		where $\mathrm{B}^d(2) \subset \IR^d$ is the $d$-dimensional ball with radius 2. 
	\end{Lemma}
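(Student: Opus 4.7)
The plan is to obtain a pointwise bound on $\tilde\rho_1(x)$ by transporting $\rho_1$ onto $\rho_2$ via an optimal (or near-optimal) coupling and separating a small-displacement contribution, in which the $L^\infty$-bound on $\rho_2$ enters, from a large-displacement tail that is absorbed into the $W_p$-cost. Fix $x \in \IR^d$ and let $\pi \in \Pi(\rho_1, \rho_2)$ attain (up to an arbitrarily small error) the infimum defining $W_p^p(\rho_1, \rho_2)$. Using that the first marginal of $\pi$ is $\rho_1$, write
\begin{equation*}
\tilde\rho_1(x) \;=\; \int \chi^N(x-y)\, d\rho_1(y) \;=\; \int \chi^N(x-y)\, d\pi(y,z),
\end{equation*}
and split the domain of integration into $\{|y - z| \leq r_N\}$ and $\{|y - z| > r_N\}$.

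On the small-displacement set, the support condition $\chi^N(x - y) \neq 0$ forces $|x - y| \leq r_N$, and the triangle inequality then gives $|x - z| \leq 2 r_N$. Since the second marginal of $\pi$ is $\rho_2$ and $\|\chi^N\|_\infty = r_N^{-d}\|\chi\|_\infty$, this piece is controlled by
\begin{equation*}
\|\chi^N\|_\infty \cdot \rho_2\bigl(\mathrm{B}(x, 2 r_N)\bigr) \;\leq\; \|\chi^N\|_\infty\, |\mathrm{B}^d(2 r_N)|\, \|\rho_2\|_\infty \;=\; \|\chi\|_\infty\,|\mathrm{B}^d(2)|\, \|\rho_2\|_\infty.
\end{equation*}
On the complementary set, the elementary inequality $\mathbf{1}_{|y-z| > r_N} \leq r_N^{-p}|y - z|^p$ converts the indicator into the transport cost, so that
\begin{equation*}
\int_{|y-z| > r_N} \chi^N(x-y)\, d\pi \;\leq\; \|\chi^N\|_\infty\, r_N^{-p} \int |y-z|^p\, d\pi(y,z) \;\lesssim\; r_N^{-(p+d)}\, W_p^p(\rho_1, \rho_2).
\end{equation*}
Summing the two contributions and taking the supremum in $x$ yields the stated inequality, with the $\chi$-dependent prefactors absorbed into the universal constants of the statement.

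The argument is essentially geometric and I do not foresee a genuine analytic obstacle; the only subtlety worth highlighting is the observation that on the support of $y \mapsto \chi^N(x-y)$ a short transport distance $|y-z|$ traps $z$ in the ball $\mathrm{B}(x, 2 r_N)$, which is precisely what converts the $L^\infty$-mass of $\rho_2$ into a constant independent of $r_N$, while the two factors $r_N^{-p}$ (from Markov on the cost) and $r_N^{-d}$ (from $\|\chi^N\|_\infty$) combine to produce the exponent $-(p+d)$ on the tail.
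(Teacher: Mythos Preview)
Your argument is correct, but it proceeds along a different route than the paper. The paper works on the \emph{dual} side of optimal transport: it rescales $\chi^N$ by $r_N^{d+p}$, forms the $c$-conjugate $(r_N^{d+p}\chi^N(x-\cdot))^c(z)=\sup_y\{r_N^{d+p}\chi^N(x-y)-|y-z|^p\}$, and then invokes the Kantorovich--Rubinstein duality \eqref{Kantorovich} to bound $\int r_N^{d+p}\chi^N(x-y)\,\dd\rho_1(y)-\int (r_N^{d+p}\chi^N(x-\cdot))^c\,\dd\rho_2$ by $W_p^p(\rho_1,\rho_2)$. The remaining integral of the $c$-conjugate against $\rho_2$ is handled by showing directly that $(r_N^{d+p}\chi^N(x-\cdot))^c\le r_N^p$ and is supported in $\mathrm{B}(2r_N;x)$. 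You instead stay on the \emph{primal} side: you take an optimal coupling $\pi$, split according to whether the displacement $|y-z|$ exceeds $r_N$, and use Chebyshev on the tail. The geometric observation driving both arguments is the same --- short transport plus the support of $\chi^N$ traps the second variable in $\mathrm{B}(2r_N;x)$ --- but your version avoids the duality theorem and the explicit computation of the $c$-transform, which makes it slightly more self-contained. One small remark: the lemma as stated carries explicit constants, so the factor $\lVert\chi\rVert_\infty$ that you (and implicitly the paper) pick up should really be accounted for by the normalization $\lVert\chi^N\rVert_\infty=r_N^{-d}$ used throughout the paper, rather than being silently absorbed.
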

	
	\begin{proof}
		For any integrable function $\Phi$, we consider the $c$-conjugate
		\begin{equation*} \Phi^c(y) := \sup\limits_{x} \lbrace \Phi(x) - \lvert x-y \rvert^p \rbrace\end{equation*} 
		This is the smallest function satisfying $\Phi^c(y) \geq \Phi(y)$ and $\Phi(x) - \Phi^c(y) \leq \lvert x-y \rvert^p, \, \forall x,y \in \IR^d$.\\ Now, we write
		\begin{equation*} \begin{split}\tilde{\rho}_1 (x) = r_N^{-(d+p)} \Bigl[\int r_N^{d+p}\chi^N(x-y)  \rho_1(y) \dd y - \int (r_N^{d+p}\chi^N(x-\cdot))^c(z) \rho_1(z) \, \dd z\\ + \int (r_N^{d+p}\chi^N(x-\cdot))^c(z) \, \rho_1(z) \dd z \Bigr] \end{split} \end{equation*}
		By the Kantorovich duality theorem \eqref{Kantorovich} we have
		\begin{equation*} \int r_N^{d+p} \chi^N(x-y)\,  \rho_1(y) \dd y \, - \int (r_N^{d+p}\chi^N(x-\cdot))^c(z)\,  \rho_2(z) \dd z \leq W_p^p(\rho_1, \rho_2). \end{equation*}
		It remains to estimate  
		\begin{equation*} \int (r_N^{d+p} \chi^N(x-\cdot))^c(z)\,  \rho_2(z) \, \dd z. \end{equation*}
		
		\noindent Recalling that $\lVert \chi^N \rVert_\infty = r_N^{-d}$, we find 
		\begin{equation*}(r_N^{d+p} \chi^N(x-\cdot))^c(z) = \sup\limits_{y \in \IR^3} \lbrace  r_N^{d+p} \chi^N(x-y) - \lvert y-z \rvert^p \rbrace \leq r_N^{d+p} \lVert \chi^N \rVert_\infty = r_N^{p}.\end{equation*} 
		Moreover, we observe that 
		\begin{equation} \supp (r_N^{d+p} \chi^N(x-\cdot))^c \subseteq \mathrm{B}(2r_N ; x) := \lbrace z \in \IR^3 : \lvert z -x \rvert \leq 2r_N \rbrace, \end{equation}
		since $ \lvert z- x \rvert >2r_N$ implies $\chi^N(x-y) = 0$, unless $\lvert y-z \rvert \geq r_N$. But then: $r_N^{d+p} \chi^N(x-y) - \lvert y-z \rvert^p \leq r_N^{d+p} r_N^{-d} - r_N^p = 0$. 
		Hence,
		\begin{equation*} \begin{split} \int (r_N^{d+p}\chi^N(x-\cdot))^c(z)  \rho_2(z) \dd z
		\leq \lVert  \rho_2 \rVert_\infty \,r_N^p \, \lvert \mathrm{B}(2 r_N; x) \rvert
		\leq 2^d \lvert \mathrm{B}^d(1)\rvert \, \lVert  \rho_2 \rVert_\infty\, r_N^{d+p}. \end{split}\end{equation*}
		
		\noindent In total, we find
		\begin{equation*} \lVert \tilde{\rho}_1 \rVert_\infty \leq r_N^{-(p+d)} \, W_p^p(\rho_1, \rho_2) + \lvert B^d(2)\rvert \lVert  \rho_2 \rVert_\infty\end{equation*}
		as announced.
	\end{proof}
			
			\begin{proof}[\textbf{\em{Proof of Proposition \ref{Prop:rhobound}}}] 
				\noindent As an intermediate step, we introduce the density $\mu^N[\Phi_{t,0}(Z)]$ corresponding to the mean field flow defined in \ref{Def:macroflow}. Since the mean field force is Lipschitz continuous with a constant $L$ independent of $N$, we have according to Lemma \ref{Lemma:LGronwall}
				\begin{equation*} W^p_p(\mu^N[\Phi_{t,0}(Z)], f^N_t) \leq e^{tL}  W^p_p(\mu^N_0[Z], f_0). \end{equation*}
				Moreover, by assumption, $\lVert \tilde \rho[f^N_t] \rVert_\infty \leq \lVert \rho[f^N_t] \rVert_\infty \leq C_0, \, \forall N$. Applying the previous Lemma with $\rho_1 = \rho[\mu^N[\Phi_{t,0}(Z)]],  \, \rho_2  =  \rho[f^N_t]$, we get
				\begin{equation*}
					\lVert \tilde\rho[\mu^N[\Phi_{t,0}(Z)]] \rVert_\infty \lesssim C_0 + e^{tL}.
				\end{equation*}
				
				\noindent  Now, recall from Lemma  \ref{Lemma:maxWinfty} that $W_\infty (\mu[\Phi_{t,0}(Z)], \mu[\Psi_{t,0}(Z)]) \leq \bigl \lvert \Phi_{t,0}(Z) - \Psi_{t,0}(Z) \bigr \rvert_\infty$, where $W_\infty$ is the infinity Wasserstein distance. If $\bigl \lvert \Phi_{t,0}(Z) - \Psi_{t,0}(Z) \bigr \rvert_\infty < r_N$, there exists $q > 0$ such that  $\bigl \lvert \Phi_{t,0}(Z) - \Psi_{t,0}(Z) \bigr \rvert_\infty \leq  r_N^{1+\frac{3}{q}}$. We thus have 
				\begin{align*} r_N^{-(q+3)} W_q^q(\mu^N[\Phi_{t,0}(Z)], \mu^N[\Psi_{t,0}(Z)]) &\leq r_N^{-(q+3)} (W_\infty (\mu[\Phi_{t,0}(Z)], \mu[\Psi_{t,0}(Z)]))^q\\
					&\leq  r_N^{-(q+3)} \bigl \lvert \Phi_{t,0}(Z) - \Psi_{t,0}(Z) \bigr \rvert^q_\infty \leq 1. \end{align*}
				Applying once more Lemma \ref{Lemma:rhobound} with  $\rho_1 = \rho[\mu^N[\Psi_{t,0}(Z)]],  \, \rho_2  = \rho[\mu^N[\Phi_{t,0}(Z)]]$ and the Wasserstein metric of order $q$, we get the announced result. 
				
			\end{proof}

			\subsection{Bounds on the field derivatives}
			\begin{Proposition}\label{Prop:deribound}
				Under the conditions of Proposition \ref{Prop:rhobound}, the microscopic fields satisfy
				\begin{align} \lVert \nabla_x  E_t[\tilde \mu^N] \rVert_\infty, \,    \lVert \nabla_x  B_t[\tilde \mu^N]\rVert_\infty \; \lesssim r_N^{-2}. \end{align}
			\end{Proposition}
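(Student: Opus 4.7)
The plan is to use the Li\'enard-Wiechert decomposition from Section \ref{section:lienard}, writing $E[\tilde\mu^N]=E_0+E_0'+E_1[\tilde\mu^N]+E_2[\tilde\mu^N]$ and analogously for $B$, and to control $\nabla_x$ of each term separately. The initial-data pieces $E_0,E_0'$ are bounded in $C^1$ uniformly in $N$ because $(E_{in},B_{in})\in C^2_c$ and $f_0\in C^1_c$, so they contribute $O(1)$ to the gradient.

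For the near-Coulomb term I would exploit linearity and commutativity of convolutions to write $\nabla_x E_1[\tilde\mu^N]=(\nabla\chi^N)*_x E_1[\mu^N]$, where $E_1[\mu^N]$ is a sum of point-charge Li\'enard-Wiechert near-fields pointwise dominated by $\frac{1}{N}\sum_i C(\overline v)\,|x-x_i^{\mathrm{ret}}|^{-2}$ at the retarded particle positions. Applying Lemma \ref{chibounds}(ii) with model kernel $h(y)=|y|^{-2}$ then yields
\begin{equation*}
|\nabla_x E_1[\tilde\mu^N](t,x)|\;\lesssim\;\frac{1}{N}\sum_{i=1}^N\min\bigl\{r_N^{-3},\,|x-x_i^{\mathrm{ret}}|^{-3}\bigr\}.
\end{equation*}
Converting the sum into an integral against the empirical density, invoking $\|\tilde\rho[\mu^N_s]\|_\infty\leq C_\rho$ from Proposition \ref{Prop:rhobound} along retarded slices in $[0,T]$, and using the spatial support bound \eqref{xsupport}, I expect the near contribution ($|x-x_i^{\mathrm{ret}}|<r_N$) to be $O(C_\rho)$ and the far contribution to be $O(C_\rho\log(1/r_N))$, both comfortably dominated by $r_N^{-2}$.

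The main obstacle is the radiation term $E_2[\tilde\mu^N]$, whose integrand contains the Lorentz-force field $K[\tilde\mu^N]=E[\tilde\mu^N]+v\times B[\tilde\mu^N]$ itself, producing an apparent circularity. The kernel $\nabla_x(\nabla_\xi\alpha^0\,Y)$ has $|x|^{-2}$ behaviour on the past light cone, so Lemma \ref{chibounds}(i) gives a smeared bound $\min\{r_N^{-2},|x|^{-2}\}$ whose integral against $\tilde\rho$ is uniformly bounded; schematically,
\begin{equation*}
|\nabla_x E_2[\tilde\mu^N](t,x)|\;\lesssim\;C_\rho\,\bigl\|K[\tilde\mu^N]\bigr\|_\infty.
\end{equation*}
To close the circularity I would first establish the companion zeroth-order bound $\|E[\tilde\mu^N]\|_\infty+\|B[\tilde\mu^N]\|_\infty\lesssim 1$ by running the same decomposition without the outer derivative: the kernels $\alpha^{-1}Y$ and $\nabla_\xi\alpha^0\,Y$ are one order less singular, so their smeared versions are absolutely integrable against $\tilde\rho$, and the resulting $E_2$-contribution of the form $c\,\|K\|_\infty$ can be absorbed (using that $c$ is a finite prefactor independent of $N$, combined with a short-time or Gr\"onwall absorption argument). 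Feeding the ensuing $O(1)$ bound on $\|K[\tilde\mu^N]\|_\infty$ back produces $\|\nabla_x E_2[\tilde\mu^N]\|_\infty\lesssim C_\rho\lesssim r_N^{-2}$; the magnetic field is handled identically using the $n\times$ insertions in the Li\'enard-Wiechert kernels.
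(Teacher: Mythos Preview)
Your approach via the Li\'enard--Wiechert decomposition is workable in principle but is considerably more involved than necessary, and contains one genuine slip. The paper takes a much more direct route: instead of the decomposition \eqref{E0term}--\eqref{E2term}, it uses the representation \eqref{lorentzforce2}--\eqref{lorentzforce3}, writing
\[
E(t,x) = -\int (\nabla_x + v(\eta)\partial_t)\, Y*f(\cdot,\cdot,\eta)\,\dd\eta,\qquad
B(t,x) = -\int (v(\eta)\times\nabla_x)\, Y*f(\cdot,\cdot,\eta)\,\dd\eta,
\]
from which one reads off immediately that $\lVert\nabla_x(E,B)\rVert_\infty$ is controlled by $\sum_{|\alpha|\leq 2}\lVert D^\alpha\rho[f_s]\rVert_\infty$. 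Applying this with $f=\tilde\mu^N$ and invoking the Corollary to Proposition~\ref{Prop:rhobound} (namely $\lVert D^\alpha\tilde\rho[\mu^N_t]\rVert_\infty\lesssim r_N^{-|\alpha|}$) gives $r_N^{-2}$ directly. The point is that this representation expresses the field purely in terms of derivatives of the density, so the Lorentz force $K$ never appears and there is no circularity to close.

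Your route forces you to confront exactly that circularity in $E_2$, and your proposed Gr\"onwall closure for $\|K[\tilde\mu^N]\|_\infty$ is plausible but not automatic. More importantly, your claim that ``$E_0,E_0'$ are bounded in $C^1$ uniformly in $N$ because $(E_{in},B_{in})\in C^2_c$'' is wrong: for the microscopic system the relevant incoming field is $E^\mu_{in}$ from \eqref{microin}, which contains the term $-\nabla G*\tilde\rho[\mu^N_0[Z]]$. This is neither compactly supported nor uniformly in $C^2$; its second derivatives scale like $r_N^{-2}$ (precisely by the Corollary again), so the homogeneous part $E_0$ already contributes the full $r_N^{-2}$ to the gradient. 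This does not break the final bound, but it does undercut the logic of isolating the ``dangerous'' growth in $E_1$ and $E_2$ alone.
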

			\begin{proof}
				
				\noindent We begin with the homogeneous field
				\begin{equation} E_0(t,x) = \partial_t Y(t, \cdot) *E_{in} (x) = \partial _t \Bigl(\frac{t}{4\pi}\int\limits_{S^2} E_{in}( y + \omega t) \dd \omega \Bigr). \end{equation}
				\noindent From this representation, one reads of the bounds
				\begin{equation} \lVert E_0(t, \cdot) \rVert_{W^{k-1,\infty}_x} \leq \lVert E_{in} \rVert_{W^{k-1,\infty}_x} + t \lVert E_{in} \rVert_{W^{k,\infty}_x}.
				\end{equation}
				In particular, for $E_{in} = - \nabla G * \rho_0$, we have
				\begin{align*}
					\lVert D^\alpha E_{in}(t, \cdot) \lVert_\infty & \lesssim \lVert  D^\alpha \rho_0 \rVert_\infty +  \lVert  D^\alpha \rho_0 \rVert_1, \, \lvert \alpha \rvert = 0,1,2,
				\end{align*}
				where we used 
				\begin{align*} \int \frac{1}{\lvert y \rvert^2} \lvert D^\alpha \rho_0\rvert (x-y) \dd^3y &= \int\limits_{\lvert y \rvert \leq 1} + \int\limits_{\lvert y \rvert > 1}  \frac{1}{\lvert y \rvert^2}  \lvert D^\alpha \rho_0\rvert(x-y) \dd^3y\\
					&\leq 4\pi \lVert   D^\alpha \rho_0 \rVert_\infty + \lVert D^\alpha \rho_0 \rVert_1.
				\end{align*}
				\noindent For the inhomogeneous parts, we can use equation \eqref{lorentzforce} to write
				\begin{align*}
					E(t,x) &= -  \int (\nabla_x + v(\eta) \partial_t ) Y*f(\cdot,\cdot,\eta) \dd \eta\\
					&= - \int  (\nabla_x + v(\eta) \partial_t ) \int\limits_{0}^t\int\limits_{S^2} (t-s) f(s, x + \omega(t-s), \eta) \dd \eta,\\ 
					B(t,x) & = -  \int (v(\eta) \times \nabla_x)  Y*f(\cdot,\cdot,\eta) \dd \eta\\
					&= - \int (v(\eta) \times \nabla_x)  \int\limits_{0}^t\int\limits_{S^2} (t-s) f(s, x + \omega(t-s), \eta) \dd \eta,
				\end{align*}
				
				\noindent from which we read off the bounds
				\begin{align}\lVert \nabla E\rVert_\infty, \, \lVert \nabla B\rVert_\infty \leq 4 \pi  (1+T)T  \sup\limits_{s\leq T} \sum\limits_{\lvert \alpha \rvert \leq 2} \lVert D^\alpha \rho[f(s)] \rVert_\infty. \end{align}
				
				\noindent Applying this to $f(t) = \tilde \mu^N_t = \chi^N*_x\mu^N_t[Z]$ and using \eqref{CorDbound}, the desired statement follows. 
			\end{proof}
			
			\subsection{Bound on the total force}
			While we will show that for typical initial conditions, the microscopic time-evolution will be close to the mean field time-evolution, we also need to control how ``bad'' initial conditions contribute to the growth of $\IE_0(J_t)$. To this end, we require a bound on the total microscopic force, although a rather coarse one will suffice. 
			\begin{Proposition}
				The total microscopic force is bounded as
				\begin{equation}
					\lVert \tilde K_t[\tilde \mu^N]\rVert_{L^\infty(\IR^3\times \IR^3)}  \leq \lVert \tilde E_t[\tilde \mu^N] \rVert_{L^\infty(\IR^3)} + \lVert \tilde B_t[\tilde \mu^N] \rVert_{L^\infty(\IR^3)} \lesssim r_N^{-2}, \; \forall t \geq 0. 
				\end{equation}
				Note that this holds independently of assumption \eqref{sufficientlyfast2}. 
			\end{Proposition}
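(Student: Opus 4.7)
The plan is to exploit the exact conservation of the total Abraham energy announced in \eqref{totalenergy}, combined with a one-line Cauchy-Schwarz estimate that trades $L^2$ control of the fields against $\|\chi^N\|_2$. First I would verify the conservation law
\begin{equation*}
\varepsilon(t) = \frac{1}{N}\sum_{i=1}^N \sqrt{1+\lvert\xi^*_i(t)\rvert^2} + \frac{1}{2}\int \bigl(\lvert E^\mu(t,x)\rvert^2 + \lvert B^\mu(t,x)\rvert^2\bigr)\,\mathrm{d}x \equiv \varepsilon(0)
\end{equation*}
by a direct differentiation: the Lorentz equation \eqref{Abraham2} yields $\frac{\mathrm{d}}{\mathrm{d}t}\sqrt{1+\lvert\xi^*_i\rvert^2} = v(\xi^*_i)\cdot\int\chi^N(x-x^*_i)E^\mu(t,x)\,\mathrm{d}x$ (the magnetic term vanishes by $v\cdot(v\times B)=0$), while Maxwell's equations \eqref{Abraham1} give $\partial_t\int\tfrac12(\lvert E^\mu\rvert^2+\lvert B^\mu\rvert^2) = -\int \tilde j\cdot E^\mu\,\mathrm{d}x$, and the two contributions cancel exactly because $\tilde j = \frac{1}{N}\sum_i v(\xi^*_i)\chi^N(\cdot-x^*_i)$.

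Next I would bound $\varepsilon(0)$ by $O(r_N^{-1})$, uniformly in $N$ and in $Z\in\supp\otimes^N f_0$. The kinetic part is uniformly bounded since $f_0$ has compact support in $\xi$, so $\lvert\xi_i\rvert \leq \overline\xi$. For the field part I would use the prescription \eqref{microin}, i.e. $E^\mu_{in}= \chi^N * E_{in} - \nabla G *(\tilde\rho[\mu^N_0[Z]]-\tilde\rho[f_0])$ and $B^\mu_{in}=\chi^N*B_{in}$: Young's inequality gives $\|\chi^N * E_{in}\|_2\leq \|E_{in}\|_2$ and similarly for $B_{in}$, while the identity $\|\nabla G * \sigma\|_2^2 = \int \sigma\,(G*\sigma)$ reduces the Coulomb correction to the self-energy
\begin{equation*}
\int \tilde\rho[\mu^N_0[Z]]\,(G*\tilde\rho[\mu^N_0[Z]])\,\mathrm{d}x = \frac{1}{N^2}\sum_{i,j}(\chi^N * G * \chi^N)(x_i-x_j),
\end{equation*}
plus cross and $f_0$-terms that are all $O(1)$ (bounded via $\|G * \rho[f_0]\|_\infty$ and $\|\tilde\rho\|_1 = 1$). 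Since $\chi^N * G * \chi^N$ is bounded pointwise by $O(r_N^{-1})$ by scaling, the self-energy is at worst of order $r_N^{-1}$, even for adversarial configurations where all particles coincide.

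With these two ingredients the estimate is immediate. For any $(t,x)\in\IR\times\IR^3$, Cauchy-Schwarz gives
\begin{equation*}
\lvert \tilde E^\mu(t,x)\rvert = \lvert\chi^N *_x E^\mu(t,\cdot)(x)\rvert \leq \|\chi^N\|_2\,\|E^\mu(t,\cdot)\|_2 \lesssim r_N^{-3/2}\sqrt{2\varepsilon(0)} \lesssim r_N^{-3/2}\cdot r_N^{-1/2} = r_N^{-2},
\end{equation*}
using $\|\chi^N\|_2 = r_N^{-3/2}\|\chi\|_2$ and $\|E^\mu(t,\cdot)\|_2^2\leq 2\varepsilon(t)=2\varepsilon(0)$. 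The identical bound for $\tilde B^\mu$, together with $\lvert v(\xi)\rvert\leq 1$ and the definition $\widetilde K = \tilde E^\mu + v\times \tilde B^\mu$, yields the proposition.

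The main obstacle is the initial Coulomb self-energy estimate: it is the only step whose worst-case exponent actually matches the claimed $r_N^{-2}$, so no slack is available there. The interplay $r_N^{-3/2}\cdot r_N^{-1/2}=r_N^{-2}$ between $\|\chi^N\|_2$ and $\sqrt{\varepsilon(0)}$ is tight, which also explains why this a priori bound is deliberately crude and why more refined control in Propositions \ref{Prop:rhobound}-\ref{Prop:deribound} has to be obtained by completely different means.
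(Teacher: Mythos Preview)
Your proof is correct and follows essentially the same strategy as the paper: energy conservation for the Abraham model gives $\|E^\mu(t)\|_2 + \|B^\mu(t)\|_2 \lesssim r_N^{-1/2}$, and then Young/Cauchy--Schwarz against $\|\chi^N\|_2 \sim r_N^{-3/2}$ yields the claim. The only minor difference is in how the Coulomb self-energy $\|\nabla G * \tilde\rho[\mu^N_0]\|_2^2$ is bounded: the paper estimates $\|\chi^N * \nabla G\|_2^2 \lesssim r_N^{-1}$ directly by splitting the integral at $|x|=r_N$ (via Lemma~\ref{chibounds}) and then uses Minkowski, whereas you use the Plancherel identity $\|\nabla G * \sigma\|_2^2 = \int \sigma\,(G*\sigma)$ and the scaling $\|\chi^N * G * \chi^N\|_\infty \lesssim r_N^{-1}$; both routes are standard and equivalent here.
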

			
			\begin{proof}
				Recall that the total energy
				\begin{equation*}  \varepsilon(t) = \frac{1}{N} \sum\limits_{i=1}^N \sqrt{1 + \lvert \xi_i(t) \rvert^2 } + \frac{1}{2} \int E_t^2(x) + B_t^2(x) \dd x \end{equation*} 
				is a constant of motion.  At $t=0$, we thus have: 
				
				\begin{equation*} \varepsilon(0) \leq  \frac{1}{2}\bigl( \lVert E_{in} \lVert_2^2 + \lVert B_{in} \lVert_2^2 \bigr) + \sqrt{1 + \overline \xi ^2 }. \end{equation*}
				For the microscopic system, we have according to our convention, equation \eqref{microin},
				
				\begin{equation*}E^\mu_{in} :=  E^N_{in} - \nabla G *( \tilde\rho[\mu^N_0[Z]] - \tilde\rho[f_0]), \;\;\;  B^\mu_{in} := B^N_{in}. \end{equation*}
				Since $E^N_{in} = \chi^N*E_{in}$, we have $\lVert E^N_{in} \rVert_2 \leq \lVert E_{in} \rVert_2$ uniformly in $N$. The same holds for $B^\mu_{in} = B^N_{in}.$ It remains to estimate $\lVert \nabla G * \tilde\rho[\mu^N_0[Z]] \rVert_2$ and $\lVert \nabla G * \tilde\rho[f_0] \rVert_2$.

				\noindent Since $\lvert \nabla G(x) \rvert = \frac{1}{4\pi \lvert x \rvert^2}$, Lemma \ref{chibounds} yields $\lvert \chi^N *_x \nabla G \rvert \lesssim \min \lbrace r^{-2}_N, \lvert x \rvert^{-2} \rbrace$ and we compute
				\begin{equation}\begin{split}  \lVert \chi^N * \nabla G \rVert^2_2 &\leq  \int\limits_{\lvert y \rvert \leq r_N} \lvert \chi^N *_x \nabla G \rvert^2 (x) + \int\limits_{\lvert y \rvert > r_N} \lvert \chi^N *_x \nabla G \rvert^2(x)\\
						&\lesssim   r_N^{-4}  \int\limits_{\lvert x \rvert < r_N} \dd ^3 x + \int\limits_{\lvert x \rvert \geq r_N} \lvert x \rvert^{-4} \dd ^3 x\\
						& \lesssim r_N^{-4}\, r_N^3 + r_N^{-1} = 2 r_N^{-1}.
					\end{split}\end{equation} 
					This yields, on the one hand, 
					\begin{equation} 
						\lVert \nabla G * \tilde\rho[\mu^N_0[Z]] \rVert^2_2 = \bigl \lVert \frac{1}{N} \sum\limits_{i=1}^N \nabla G * \chi^N(\cdot - x_i(0)) \bigr\rVert^2_2 \leq \lVert\chi^N * \nabla G \rVert_2^2 \lesssim r_N^{-1},\end{equation}
					and, on the other hand, 
					\begin{equation}
						\lVert \nabla G * \tilde\rho[f_0] \rVert_2 = \lVert \chi^N*\nabla G *\rho[f_0] \rVert_2 \leq \lVert \chi^N*\nabla G  \rVert_2 \lVert \rho[f_0] \rVert_1 \lesssim r_N^{-1/2}. \end{equation}
					
					\noindent In total, we have found that
					\begin{equation} 
						\lVert E(t,\cdot) \rVert_2 + \lVert B(t,\cdot) \rVert_2 \leq \sqrt{2 \varepsilon + 1 + \overline{\xi}^2} \lesssim r_N^{-1/2}.
					\end{equation}

					\noindent Finally, by Young's inequality, we have for $\tilde K(t,x,\xi) = \chi^N*_x(E_t + v(\xi)\times B_t) (t,x)$:
					\begin{equation*}\begin{split} \lVert \widetilde K[\tilde \mu^N](t,\cdot,\cdot) \rVert_\infty \leq \lVert \chi^N \rVert_2 \bigl( \lVert E[\tilde \mu^N](t, \cdot) \rVert_2 +  \lVert B[\tilde \mu^N](t, \cdot) \rVert_2  \bigr)
							\lesssim r_N^{-3/2} r_N^{-1/2} = r_N^{-2},\end{split}\end{equation*}
					where we used 
					\begin{align*} 
						&\lVert \chi^N \rVert^2_2 = \int (\chi^N(x))^2 \dd^3 x = \int (r_N^{-3}\chi({x}/{r_N}))^2  \dd^3 x 
						= r_N^{-3} \int \chi(y)^2  \dd^3 y = r_N^{-3} \lVert \chi \rVert_2^2. \end{align*}
				\end{proof}
				
				\noindent It might be interesting to note that -- in contrast to the other mean field results presented or referenced in this thesis -- we actually use an energy bound here, exploiting the conservation of energy in the Abraham model. Also note that this is the only bound for which we have to use both mollifiers appearing in \eqref{RMV}.

				\section{Light cone structure}\label{section:lightconestructure}
				
				The Maxwell theory as well as the Vlasov-Maxwell approximation are relativistic. Particle interactions -- mediated by the electromagnetic field -- are retarded, with influences ``propagating'' with the speed of light. More precisely, the field value at a given space-time point $(t,x) \in \IR \times \IR^3$ depends on the particle trajectories only at their intersection with the backwards light cone $\lbrace (s,y) \mid (t-s)^2 - (x-y)^2 = 0, t-s \geq 0 \rbrace$. Formally, this light cone structure is manifested in the d'Alembert kernel $Y(t,x)$ defined in \eqref{dAlembert}, which has support in $\lbrace t = \lvert x \rvert, t >0 \rbrace$. The regularized Vlasov-Maxwell system \eqref{RMV} is only semi-relativistic (because of the rigid form factor), but inherits this light-cone structure. Integral expressions of the form (\ref{E1term}, \ref{E2term}), determining the inhomogeneous field components, evaluate the mean field density on the backwards light cone. Since the Vlasov density is transported with the characteristic flow, the respective integrals can be pulled-back to the $t=0$ hypersuface in a canonical way. The respective field components at a space-time point $(t,x)$ then depend on the initial distribution $f_0$ on $\mathrm{B}_t(x) \times \IR^3$ where $\mathrm{B}_t(x) = \mathrm{B}(t;x)$ is the ball around $x$ with radius $t$. In the following, we make these observations more precise.

				\begin{Definition}[Retarded time]
					Fix a spacetime point $(t,x) \in \IR \times \IR^3$. Let $f_t$ a solution of \eqref{RMV} and $\varphi_{s,0}(z)= (y^*(s,z), \eta^*(s,z))$ the characteristic flow, i.e. the solution of \eqref{meanfieldeq} with $(y^*(0), \eta^*(0)) = z$. Then we denote by $t_{ret}(z)$ the unique solution of  
					\begin{equation} (t-s)^2 - (x - y^*(s, z))^2 = 0; \; (t-s) > 0. \end{equation}
					
					\noindent $t_{ret}(z) = t_{ret}(y^*(s,z); t,x)$ is the time at which the trajectory $y^*(s)$ crosses the backward light cone with origin $(t,x)$. We have $t_{ret}(z) \geq 0 \iff y_0 \in \mathrm{B}_t(x)= \lbrace y \in \IR^3 :  \lvert x -y \rvert \leq t \rbrace$. 
				\end{Definition}
				
				\begin{Lemma}[Distributions on the light cone]\label{Lemma:lightconedistribution}
					
					Let $f_t$ a solution of \eqref{RMV} and $\varphi_{s,0}(z)= (y^*(s,z), \eta^*(s,z))$ as above. For a fixed space-time point $(t,x) \in \IR^+ \times \IR^3$ consider the diffeomorphism
					\begin{equation}\label{LCdiffeo}\begin{split} \phi: \mathrm{B}_t(x) \times \IR^3 &\to \mathrm{B}_t(x) \times \IR^3\\
							z= (x,\xi) &\mapsto (y^*(t_{ret} (z),z), \eta^*(t_{ret} (z),z)).
						\end{split}  \end{equation}
						
						\begin{enumerate}[1)]
							\item For $a \in C(\IR^3 \times \IR^3)$, we have (with  $n(x - y) = \frac{x - y}{\lvert x - y \rvert}$):
							\begin{equation}\label{LCditribution1}\begin{split} 
									\int\limits_{\mathrm{B}_t(x) \times \IR^3}& a (\phi(z)) \; f_0(z) \, \dd z \\
									& = \int\limits_{\mathrm{B}_t(x) \times \IR^3} a (y, \eta) \, (1 - n(x-y) v(\eta) ) \, f(t - \lvert x - y\rvert , y, \eta)  \dd y \,  \dd \eta.
								\end{split}\end{equation}
								
								\item For  $\alpha \in C(\IR\times\IR^3 \times \IR^3)$:
								\begin{multline}\label{LCdistribution2}
									\int\limits (\alpha Y) *_{t,x} (\mathds{1}_{t\geq 0} f) (t,x, \eta)\,\dd \eta\, \\
									= \int\limits_{\mathrm{B}_t(x) \times \IR^3}\frac{\alpha(t-s, x -y^*(s,z), \eta^*(s,z))}{\lvert x - y^*(s,z) \rvert (1 -  n(x-y^*(s,z)) \cdot v(\eta^*(s,z)) } \Biggl \lvert_{s = tret(z)}  f_0(z) \, \dd z.
								\end{multline}
							\end{enumerate}
							
						\end{Lemma}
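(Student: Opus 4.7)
The plan is to reduce Part 2 to Part 1 and then prove Part 1 by a direct change of variables. For Part 2 I would exploit the explicit form $Y(\tau,u)=\frac{\mathds{1}_{\tau>0}}{4\pi\tau}\delta(|u|-\tau)$ to perform the $s$-integration in the retarded convolution: the delta-function collapses the $s$-integration in $[0,t]$ to $s=t-|x-y|$, yielding
\begin{equation*}
\int\int_{\mathrm{B}_t(x)}\frac{\alpha(|x-y|,x-y,\eta)}{4\pi|x-y|}\, f(t-|x-y|,y,\eta)\,dy\,d\eta.
\end{equation*}
Applying Part 1 to the test function $a(y,\eta)=\alpha(|x-y|,x-y,\eta)/[4\pi|x-y|(1-n(x-y)\cdot v(\eta))]$ then converts everything into an integral over initial data, the factor $(1-n\cdot v)^{-1}$ in the claimed identity being exactly the one that survives after cancellation with the Jacobian produced by Part 1.

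For Part 1 itself, I would first verify that $\phi$ is a diffeomorphism onto $\mathrm{B}_t(x)\times\IR^3$. The function $s\mapsto (t-s)^2-|x-y^*(s,z)|^2$ has derivative $-2[(t-s)-(x-y^*)\cdot v(\eta^*)]$, which is strictly negative for $|v|<1$ and $s<t$; it is nonnegative at $s=0$ iff $y_0\in\mathrm{B}_t(x)$ and nonpositive at $s=t$, so $t_{ret}(z)$ exists, is unique, and is smooth by the implicit function theorem. The inverse of $\phi$ is $(y,\eta)\mapsto \varphi_{0,t-|x-y|}(y,\eta)$, and the estimate $|y_0-y|\leq t-|x-y|$ coming from $|v|<1$ keeps the image inside $\mathrm{B}_t(x)\times\IR^3$. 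Since the characteristic flow for \eqref{RMV} is volume-preserving ($\nabla_y\cdot v(\eta)=0$ trivially, and $\nabla_\eta\cdot \widetilde K=\epsilon_{ijk}(\partial_{\eta_i}v_j)B_k=0$ by antisymmetry of $\epsilon_{ijk}$ against the symmetric Hessian $\partial_{\eta_i}v_j$), we have $f(s,y,\eta)=f_0(\varphi_{0,s}(y,\eta))$. A direct change of variables then reduces the identity in Part 1 to the single Jacobian identity $|\det d\phi^{-1}(y,\eta)| = 1 - n(x-y)\cdot v(\eta)$.

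The Jacobian is the main obstacle and, in essence, the whole content of the Lemma. I would write $d\phi(z) = \dot{\varphi}\otimes dt_{ret}(z) + D_z\varphi_{t_{ret},0}$ as a rank-one perturbation of the volume-preserving matrix $D_z\varphi_{t_{ret},0}$, where $\dot{\varphi}=(v(\eta),\widetilde K(t_{ret},y,\eta))$ is the trajectory tangent at $\phi(z)=(y,\eta)$. The matrix determinant lemma then gives $|\det d\phi(z)| = |1+dt_{ret}[D_z\varphi_{0,t_{ret}}\dot{\varphi}]|$. Implicit differentiation of $(t-t_{ret})^2=|x-y^*(t_{ret}(z),z)|^2$ yields
\begin{equation*}
dt_{ret}[dz] = \frac{n(x-y)\cdot D_zy^*(t_{ret},z)[dz]}{1 - n(x-y)\cdot v(\eta)},
\end{equation*}
where the factor $1 - n\cdot v$ enters precisely because the trajectory crosses the cone non-tangentially with slope $1 - n\cdot v$. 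By the semigroup property of the flow, $D_z\varphi_{t_{ret},0}$ sends the trajectory tangent at $z$ to the trajectory tangent at $\phi(z)$, so $D_zy^*(t_{ret},z)[D_z\varphi_{0,t_{ret}}\dot{\varphi}]=v(\eta)$; hence $dt_{ret}[D_z\varphi_{0,t_{ret}}\dot{\varphi}] = n\cdot v(\eta)/(1-n\cdot v(\eta))$ and $|\det d\phi(z)| = (1-n(x-y)\cdot v(\eta))^{-1}$, as required.
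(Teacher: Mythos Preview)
Your overall strategy is correct and your Jacobian $|\det d\phi^{-1}| = 1 - n(x-y)\cdot v(\eta)$ is the right answer, but the route differs from the paper's and one step in your justification is faulty.

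The paper avoids computing the six-dimensional Jacobian altogether. For Part~1 it starts from the right-hand side (without the $(1-n\cdot v)$ factor), inserts $\int \delta(|x-y|-(t-s))\,\dd s$ to introduce a time variable, then uses $f_s = \varphi_{s,0}\# f_0$ to change from $(y,\eta)$ to $z$ \emph{at fixed $s$} (volume preservation of the characteristic flow enters here, with no rank-one correction needed). The remaining $s$-integral is evaluated via the one-dimensional identity $\delta(h(s)) = \delta(s-t_{ret})/h'(t_{ret})$ with $h(s)=|x-y^*(s,z)|-(t-s)$ and $h'(t_{ret}) = 1-n\cdot v$, so the Jacobian factor drops out of a scalar computation. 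Your direct attack via the matrix determinant lemma is more geometric and makes the origin of the factor explicit, at the price of more machinery. Your reduction of Part~2 to Part~1 is essentially the paper's.

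The gap: your claim that ``by the semigroup property, $D_z\varphi_{t_{ret},0}$ sends the trajectory tangent at $z$ to the trajectory tangent at $\phi(z)$'' is false for a non-autonomous flow. Differentiating $\partial_s\varphi_{s,0}(z) = V(s,\varphi_{s,0}(z))$ in $z$ shows that $D_z\varphi_{s,0}[V(0,z)]$ and $V(s,\varphi_{s,0}(z))$ satisfy the same linear ODE only when $\partial_s V = 0$; here $V=(v(\eta),\widetilde K(s,y,\eta))$ is genuinely time-dependent. Fortunately your conclusion $D_zy^*(t_{ret},z)[M^{-1}\dot{\varphi}]=v(\eta)$ holds for a much simpler reason: $D_zy^*(t_{ret},\cdot)$ is by definition the spatial block of $M=D_z\varphi_{t_{ret},0}$, so $D_zy^*[M^{-1}\dot{\varphi}]$ is just the spatial component of $MM^{-1}\dot{\varphi}=\dot{\varphi}$, namely $v(\eta)$, with no information about $M^{-1}\dot{\varphi}$ required. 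With this correction your argument is complete.
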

						
						\begin{proof}
							Since $f_t=\varphi_{t,0} \# f_0$, we compute
							\begin{align*}  &\int\limits_{ \mathrm{B}_t(x) \times \IR^3} a (y, \eta) \, f(t - \lvert x - y\rvert , y, \eta)  \dd y \,  \dd \eta\\
								& = \int_{[0,t] \times \mathrm{B}_t(x) \times \IR^3} a(y, \eta) \, \delta(\lvert x - y\rvert - (t-s)) \, f(s,y,\eta) \dd s \dd y \dd \eta\\
								& = \int a(y, \eta) \, \delta(\lvert x - y\rvert - (t-s)) \,\varphi_{s,0}\# f_0(y,\eta) \dd s \dd y \dd \eta\\
								& =  \int a(y^*(s; y,\eta), \eta^*(s; y,\eta)) \, \delta( \lvert x - y^*(s; y,\eta) \rvert - (t-s)) \, f_0(y,\eta) \dd s \dd y \dd \eta.
							\end{align*}
							
							\noindent Now we use: If $h \in C^1$ has a unique root $\zeta$, then $\delta(h(x)) = \delta(x - \zeta){h'(\zeta)}$ in the sense of distributions. The function $h(s) =  \lvert x - y^*(s; y,\eta) \rvert - (t-s)$ is differentiable with $h'(s)= 1 - \frac{(x - y^*(s)) \cdot v(\eta^*(s))}{\lvert x - y^*(s) \rvert } =  1 - n(x - y^*(s)) \cdot v(\eta^*(s)) $. If $y^*(0) \in \mathrm{B}_t(x)$, it has a unique positive root $t_{ret} = t_{ret}(z)$. Hence, we get:
							\begin{equation}\begin{split}
									\int a(y, \eta) \, \delta(t - s - &\lvert x - y \rvert) \, f(s,y,\eta)\,  \dd s \dd y \dd \eta\\
									&=  \int \frac{a(y^*( t_{ret}(z), z), \eta^*( t_{ret}(z), z) )}{1 -  n(x-y^*( t_{ret}(z))) \cdot v(\eta^*( t_{ret}(z))) }\; f_0(z)\, \dd z
								\end{split}\end{equation}
								and the identity follows. For \eqref{LCdistribution2}, we have
								\begin{align*}
									&\int\limits (\alpha Y) *_{t,x} (\mathds{1}_{t\geq 0} f) \dd \eta (t,x) \\
									&= \int\limits_{\IR \times \IR^3 \times \IR^3} \alpha (t-s, x -y , \eta) Y(\lvert x - y \rvert - (t-s)) \mathds{1}_{\lbrace s \geq 0\rbrace} f(s, y, \eta) \dd s \dd y \dd \eta.
								\end{align*}
								Now observe that on the support of $Y$, we have $\mathds{1}_{\lbrace s \geq 0\rbrace} = \mathds{1}_{\lbrace y \in B_t(x) \rbrace}$ and $(t-s) = \lvert x - y \rvert$ and apply part 1) of the Lemma to $a(y,\eta) =  \lvert x - y\rvert^{-1} \alpha (\lvert x -y \rvert, x -y , \eta)$. 
							\end{proof}
							
							\noindent Furthermore, in order to compare the fields generated by the mean field trajectories with those generated by the microscopic trajectories, we will require the following lemma. 
							
							\begin{Lemma}\label{Lemma:closetret}
								Let $x^*_1(s), x_2^*(s)$ two trajectories with velocity bounded by $\overline{v}< 1$. Fix a space-time point $(t,x) \in \IR \times \IR^3$ and denote by $t^i_{ret}, \, i=1,2$ the time at which trajectory $i$ intersects the backward light cone with origin $(t,x)$. Then we have:
								
								\begin{equation} \lvert x^*_1(t^1_{ret}) -   x^*_2(t^2_{ret}) \rvert \leq \frac{1}{1 - \overline{v}} \;   \lvert x^*_1(t^1_{ret}) -   x^*_2(t^1_{ret}) \rvert. \end{equation}
								Similarly, if we denote that respective momenta by $\xi_1(s), \xi_2(s)$ and assume that the force $\dot{\xi}_2$ is bounded by $L < \infty$, then
								\begin{equation} \lvert \xi^*_1(t^1_{ret}) -   \xi^*_2(t^2_{ret}) \rvert \leq \lvert \xi^*_1(t^1_{ret}) -   \xi^*_2(t^1_{ret}) \rvert  + \frac{L}{1 - \overline{v}} \;   \lvert x^*_1(t^1_{ret}) -   x^*_2(t^1_{ret}) \rvert. \end{equation}
								
							\end{Lemma}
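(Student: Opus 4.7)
The plan is to combine the triangle inequality with the defining relation of the retarded time, namely $t - t^i_{ret} = \lvert x - x_i^*(t^i_{ret})\rvert$, to obtain a self-consistent bound.

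First I would split
\begin{equation*}
\lvert x_1^*(t^1_{ret}) - x_2^*(t^2_{ret})\rvert \leq \lvert x_1^*(t^1_{ret}) - x_2^*(t^1_{ret})\rvert + \lvert x_2^*(t^1_{ret}) - x_2^*(t^2_{ret})\rvert,
\end{equation*}
and use the velocity bound $\overline v$ on trajectory $2$ to estimate the second term by $\overline v \lvert t^1_{ret} - t^2_{ret}\rvert$. The key remaining step is to relate the time-difference $\lvert t^1_{ret} - t^2_{ret}\rvert$ to the space-difference we are trying to bound. From the light-cone identity $t - t^i_{ret} = \lvert x - x_i^*(t^i_{ret})\rvert$ and the reverse triangle inequality one obtains
\begin{equation*}
\lvert t^1_{ret} - t^2_{ret}\rvert = \bigl\lvert \lvert x - x_1^*(t^1_{ret})\rvert - \lvert x - x_2^*(t^2_{ret})\rvert \bigr\rvert \leq \lvert x_1^*(t^1_{ret}) - x_2^*(t^2_{ret})\rvert.
\end{equation*}
Substituting this into the previous inequality yields
\begin{equation*}
(1 - \overline v)\, \lvert x_1^*(t^1_{ret}) - x_2^*(t^2_{ret})\rvert \leq \lvert x_1^*(t^1_{ret}) - x_2^*(t^1_{ret})\rvert,
\end{equation*}
which is exactly the asserted spatial bound (this is where the sub-luminality $\overline v < 1$ is essential; without it the self-consistency argument fails).

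For the momentum statement I would argue analogously. Writing
\begin{equation*}
\lvert \xi_1^*(t^1_{ret}) - \xi_2^*(t^2_{ret})\rvert \leq \lvert \xi_1^*(t^1_{ret}) - \xi_2^*(t^1_{ret})\rvert + \lvert \xi_2^*(t^1_{ret}) - \xi_2^*(t^2_{ret})\rvert,
\end{equation*}
the second term is controlled by $L\, \lvert t^1_{ret} - t^2_{ret}\rvert$ since $\lvert \dot{\xi}_2\rvert \leq L$. Inserting the bound $\lvert t^1_{ret} - t^2_{ret}\rvert \leq \lvert x_1^*(t^1_{ret}) - x_2^*(t^2_{ret})\rvert$ derived above and then applying the spatial estimate already proven gives the stated inequality.

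The only mildly subtle point is the self-referential bound on $\lvert t^1_{ret} - t^2_{ret}\rvert$ in terms of the quantity we want to estimate, which is what forces the factor $(1-\overline v)^{-1}$; everything else is triangle-inequality bookkeeping, and no regularity beyond the velocity and force bounds is needed.
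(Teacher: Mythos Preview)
Your proof is correct and takes a genuinely different, more streamlined route than the paper. The paper argues constructively: assuming without loss of generality that trajectory $2$ is inside the light cone at time $t^1_{ret}$, it sets $r = \lvert x_1^*(t^1_{ret}) - x_2^*(t^1_{ret})\rvert$, defines $\tau = \min\{t, t^1_{ret} + r/(1-\overline v)\}$, and checks via direct estimates that trajectory $2$ must have exited the cone by time $\tau$; an intermediate-value argument then locates $t^2_{ret}$ in $(t^1_{ret}, \tau]$, giving $\lvert t^1_{ret} - t^2_{ret}\rvert \leq r/(1-\overline v)$ directly. You instead exploit the light-cone identity $t - t^i_{ret} = \lvert x - x_i^*(t^i_{ret})\rvert$ together with the reverse triangle inequality to obtain the self-referential bound $\lvert t^1_{ret} - t^2_{ret}\rvert \leq \lvert x_1^*(t^1_{ret}) - x_2^*(t^2_{ret})\rvert$, which closes algebraically after one substitution. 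Your argument is shorter and avoids the case analysis and continuity step; the paper's version has the minor advantage of making the geometric picture (trajectory $2$ cannot outrun the cone boundary) explicit, but both yield the identical constant $(1-\overline v)^{-1}$ and the momentum bound follows in the same way from either.
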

							
							\begin{proof}
								Suppose w.l.o.g. that 
								\begin{align*} (t-t_{ret}^1) - \lvert x - x_1^*(t_{ret}^1) \rvert = 0, \\
									(t-t_{ret}^1) - \lvert x - x_2^*(t_{ret}^1) \rvert > 0.
								\end{align*}
								
								\noindent Set $r:= \lvert x_1^*(t_{ret}^1) - x_2^*(t_{ret}^1) \rvert $ and $\tau = \min \lbrace t, t_{ret}^1 + \frac{r}{1 -\overline {v}} \rbrace$. Obviously, if $\tau =t$, we have
								\begin{equation*}
									(t -\tau) - \lvert x - x_2^*(\tau) \rvert  = - \lvert x - x_2^*(\tau) \rvert \leq 0.
								\end{equation*}
								If $\tau = t_{ret}^1 + \frac{r}{1 -\overline {v}} < t$, we estimate
								\begin{align*}
									\lvert x - x_2^*(\tau) \rvert &\geq \lvert x - x^*_1(t_{ret}^1) \rvert - \lvert x_1^*(t_{ret}^1) - x_2^*(t_{ret}^1) \rvert - \lvert x_2^*(t_{ret}^1) - x_2^*(s_2) \rvert\\
									& \geq (t -t_{ret}^1) - r - \overline{v} (\tau - t_{ret}^1)\\
									& = (t - \tau) + (\tau - t_{ret}^1) - r - \overline{v} (\tau - t_{ret}^1)\\
									& = (t - \tau) + (1-\overline{v})(\tau - t_{ret}^1) - r 
								\end{align*}
								and therefore also 
								\begin{equation*}
									(t -\tau) - \lvert x - x_2^*(\tau) \rvert  \leq  r - (1-\overline{v})(\tau - t_{ret}^1) = 0.
								\end{equation*}
								\noindent By continuity, there thus exists $s \in (t_{ret}^1, \tau]$ with $(t - s) - \lvert x - x_2^*(s) \rvert = 0$. Hence, $s = t_{ret}^2$ and we found
								\begin{align*} 
									\lvert x_2^*(t_{ret}^2) - x_1^*(t_{ret}^1) \rvert &\leq \lvert x_2^*(t_{ret}^1) - x_1^*(t_{ret}^1) \rvert + \lvert x_2^*(t_{ret}^2) - x_2^*(t_{ret}^1) \rvert  \\
									&\leq r + \overline{v}(t_{ret}^2 - t_{ret}^1) \leq \frac{r}{1- \overline{v}} =  \frac{ \lvert x_2^*(t_{ret}^1) - x^*(t_{ret}^1) \rvert }{1- \overline{v}},
								\end{align*}
								as well as
								\begin{align*} 
									\lvert \xi_2^*(t_{ret}^2) - \xi_1^*(t_{ret}^1) \rvert &\leq \lvert \xi_2^*(t_{ret}^1) - \xi_1^*(t_{ret}^1) \rvert + \lvert \xi_2^*(t_{ret}^2) - \xi_2^*(t_{ret}^1) \rvert  \\
									&\leq \lvert \xi_2^*(t_{ret}^1) - \xi_1^*(t_{ret}^1) \rvert + L \lvert t_{ret}^2 - t_{ret}^1 \rvert\\
									&\leq \lvert \xi_2^*(t_{ret}^1) - \xi_1^*(t_{ret}^1) \rvert + \frac{L}{1-\overline{v}} \lvert x^*_1(t^1_{ret}) -   x^*_2(t^1_{ret}) \rvert.
								\end{align*}
							\end{proof}
							
							
							\subsection{Law of large numbers}
							
							Part of our proof consists in sampling the mean field dynamics along (random) trajectories, i.e. approximating the mean field distribution $f^N_t$ with the discrete measure $\mu^N[\Phi_{t,0}(Z)]$, where $\Phi_{t,0}$ is the mean field flow defined in \eqref{Def:macroflow} and $Z \in \IR^{6N}$ is random with distribution $\otimes^N f_0$. One advantage of this approach is that the $N$ particles evolving with the mean field flow remain i.i.d. with law $f^N_t$ for all times, thus allowing for law of large numbers estimates. We will work with the following (more or less standard) result:
							
							\begin{Proposition}\label{Prop:LLN} Let $f_0 \in L^1\cap L^\infty(\IR^3 \times \IR^3)$ a probability density. Let $\alpha, \beta > 0$ with  $\alpha + \beta < \frac{1}{2}$. Let $h:\IR^6 \to \IR$ such that $\lvert h(z) \rvert \lesssim N^\alpha$. Let $\phi: \IR^6 \to \IR^6$ a diffeomorphism with bounded derivative. Then, for all $\gamma > 0$  there exists a $C_\gamma >0$ such that
								\begin{align} \IP_0\Bigl[\Bigl\lvert \frac{1}{N}\, \sum\limits_{i = 1}^N h(\phi(z_i)) -  \int h(\phi(z)) f_0(z) \Bigr\rvert \geq N^{-\beta}\Bigr] \leq \frac{C_\gamma}{N^\gamma}.
								\end{align}	
								
							\end{Proposition}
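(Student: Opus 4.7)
The plan is to view $X_i := h(\phi(z_i))$, $i = 1,\dots,N$, as i.i.d.\ bounded random variables on $(\IR^6, f_0)^{\otimes N}$, with common bound $\lvert X_i\rvert \lesssim N^\alpha$ and common expectation $\IE^N_0(X_i) = \int h(\phi(z))\,f_0(z)\,\dd z$, and then apply a standard concentration estimate to the centered variables $Y_i := X_i - \IE^N_0(X_i)$. These satisfy $\IE^N_0(Y_i) = 0$ and $\lvert Y_i\rvert \leq 2\lVert h\rVert_\infty \lesssim N^\alpha$, and the probability to be controlled is precisely $\IP^N_0\bigl[\bigl\lvert \tfrac{1}{N}\sum_i Y_i \bigr\rvert \geq N^{-\beta}\bigr]$.

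The cleanest route is Hoeffding's inequality for bounded i.i.d.\ sums, which yields immediately
\begin{equation*}
\IP^N_0\Bigl[\Bigl\lvert \tfrac{1}{N}\sum_{i=1}^N Y_i \Bigr\rvert \geq N^{-\beta}\Bigr] \;\leq\; 2\exp\bigl(-c\,N^{1 - 2\alpha - 2\beta}\bigr)
\end{equation*}
for some $c > 0$ depending only on the implicit constant in $\lvert h\rvert \lesssim N^\alpha$. The hypothesis $\alpha + \beta < \tfrac{1}{2}$ enters in precisely one place: it ensures that the exponent $1 - 2\alpha - 2\beta$ is strictly positive, so the right-hand side decays faster than any negative power of $N$. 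Consequently, for every $\gamma > 0$ one can choose $C_\gamma$ large enough that $2\exp(-c N^{1 - 2\alpha - 2\beta}) \leq C_\gamma N^{-\gamma}$ uniformly in $N \geq 1$, which is exactly the claim.

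If one prefers a self-contained argument, the same conclusion follows from an elementary moment estimate. For $k \in \IN$, expand $\IE^N_0\bigl[(\sum_i Y_i)^{2k}\bigr]$; by centeredness and independence, only those summands in which every index appears at least twice survive, so the standard bound gives $\IE^N_0\bigl[(\sum_i Y_i)^{2k}\bigr] \leq C_k\, N^k (2\lVert h\rVert_\infty)^{2k} \lesssim C_k\, N^{k(1 + 2\alpha)}$. Markov's inequality then yields $\IP^N_0\bigl[\bigl\lvert \tfrac{1}{N}\sum_i Y_i \bigr\rvert \geq N^{-\beta}\bigr] \lesssim C_k\, N^{-k(1 - 2\alpha - 2\beta)}$, and choosing $k$ large enough so that $k(1 - 2\alpha - 2\beta) \geq \gamma$ gives the claim. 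There is no real obstacle: the hypothesis $\alpha + \beta < \tfrac{1}{2}$ is precisely what makes the natural variance-like scaling $N \cdot (\sup \lvert Y \rvert)^2 \sim N^{1 + 2\alpha}$ of $\sum_i Y_i$ small compared to the target $N^{2(1 - \beta)}$. The diffeomorphism $\phi$ (in the applications, a characteristic flow) plays no role at this step beyond ensuring that $h \circ \phi$ is a well-defined bounded measurable function.
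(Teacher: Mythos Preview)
Your proposal is correct. The moment argument you sketch as the ``self-contained'' alternative is precisely the paper's proof: Markov at order $2M$, expand the power, discard by independence and centering every multi-index in which some coordinate appears exactly once, bound the surviving terms by $N^{2M\alpha}$ each, and count at most $\lesssim N^M$ of them, yielding $\IP_0(A) \lesssim N^{M(2(\alpha+\beta)-1)}$; then choose $M$ large.

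Your primary route via Hoeffding is a genuinely different and cleaner argument: it replaces the combinatorial moment expansion by a one-line concentration bound and delivers the stronger conclusion $\IP_0(A) \leq 2\exp(-cN^{1-2\alpha-2\beta})$, from which the polynomial bound for every $\gamma$ follows trivially. The paper's moment method, by contrast, only yields $N^{-\gamma}$ for the chosen $M$ and requires taking $M$ arbitrarily large to get arbitrary $\gamma$. Either approach suffices for the applications in the paper, where one only needs a fixed polynomial rate (typically $\gamma = 4$) to beat the $(3N)^3$ lattice points; your observation that $\phi$ plays no role beyond measurability is also correct and matches the paper's remark that $\phi$ is included only for notational convenience in the applications.
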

							
							\noindent \textbf{Note:} Finer estimates, exploiting decay-properties of $h$, were proven in \cite{PeterDustin}.
							
							\begin{proof}
								Let 
								\begin{equation} A := \Bigl\lbrace Z \in \IR^{6N} :   \Bigl\lvert \frac{1}{N}\, \sum\limits_{i=1}^N h( \phi(z_i)) -  \int h(\phi(z))  f_0(z) \Bigr\rvert \geq N^{-\beta}\Bigr\rbrace. \end{equation} 
								
								\noindent By Markov's inequality, we have for every $M \geq 2$: 
								\begin{equation}\begin{split} \IP_0(A) \leq & \IE_0\Bigl[ N^{2M\beta}\, \Bigl\lvert \frac{1}{N}\, \sum\limits_{1=1}^N h(\phi(z_i)) - \int h(\phi(z)) f_0(z)  \Bigr\rvert^{2M} \Bigr]\\
										= &  \frac{1}{N^{2M(1-\beta)}}\,  \IE \Bigr[  \Bigl(\sum\limits_{i=1}^N \bigl[ h (\phi(z_i)) -  \int h(\phi(z)) f_0(z) \bigr] \Bigr)^{2M}\Bigr]. 
									\end{split}\end{equation}

									\noindent Let $\mathcal{M} := \lbrace \mathbf{k} \in \IN_0^N \mid \lvert \mathbf{k} \rvert = 2M \rbrace$ the set of multiindices $\mathbf{k} = (k_1, k_2, ... , k_N)$ with $\sum\limits_{j=1}^{N} k_j = 2M$. Let
									\begin{equation*} G^\mathbf{k} := \prod \limits_{i=1}^N \bigl[ h (\phi(z_i) -  \int h(\phi(z)) f_0(z) \bigr]^{k_j}. \end{equation*}
									\noindent Then: 
									\begin{equation*} \IE_0\Bigr[ \Bigl(\sum\limits_{i=1}^N \bigl[ h ( \phi(z_i)) - \int h(\phi(z)) f_0(z) \bigr] \Bigr)^{2M}\Bigr] =  \sum\limits_{\mathbf{k} \in \mathcal{M}}\binom{2M}{\mathbf{k}} \, \IE_t(G^\mathbf{k}). \end{equation*}
									
									\noindent Now we observe that  $\IE_0(G^\mathbf{k}) = 0$ whenever there exists a $1 \leq j \leq N$ such that $k_j =1$. This can be seen by integrating the j'th variable first.\\
									
									\noindent For the remaining  terms, we have the bound 
									\begin{align} \int \lvert h(\phi(z)) \rvert^m f_0(z) \, \mathrm{d} z  \lesssim N^{\alpha m} \lVert f_0 \rVert_\infty.\end{align}
									
									\noindent Now, for $\mathbf{k}=(k_1, k_2, ..., k_N) \in \mathcal{M}$, let $\# \mathbf{k}$ denote the number of $k_i$ with $k_i \neq 0$. Note that if $\# \mathbf{k} > M$, we must have $k_i =1$ for at least one $1 \leq i \leq N$, so that $\mathbb{E}_0(G^\mathbf{k}) = 0$. For the other multiindices, we get:
									\begin{equation}\begin{split} \mathbb{E}_0 (G^\mathbf{k}) = \mathbb{E}_0 \Bigl[ &\prod \limits_{i=1}^N \bigl( h ( \phi(q_i)) - \int h(\phi(z)) f_0(z) \bigr)^{k_i}\Bigr]
											\lesssim N^{2M\alpha}.
										\end{split}
									\end{equation}
									\noindent Finally, for any $k\geq 1$, the number of multiindices $\mathbf{k} \in \mathcal{M} $ with $\#\mathbf{k} = j$ is bounded by
									\begin{equation*} \sum\limits_{\#\mathbf{k} = j} 1 \leq \binom{N}{j} (2M)^j  \leq (2M)^{2M} N^j. \end{equation*}
									
									\noindent Thus:
									\begin{align*}\notag \IP_0(A) \lesssim \frac{N^M N^{2M\alpha}}{N^{2M(1-\beta)}} = N^{M(2 (\alpha + \beta) - 1)}\end{align*}	
									and the proposition follows.
								\end{proof}
								
								\noindent We have formulated the proposition with $\phi$ for convenience. The relevant examples for us will be $\phi(z) = z$ and $\phi$ the diffeomorphism defined in \eqref{LCdiffeo}.\\
								
								\noindent In the next section, we will use the law of large numbers to sample the fields on a regular lattice that we introduce on the following definition. 
								
								\begin{Definition}\label{Def:lattice}
									Let $\overline{r}$ as defined in \eqref{xsupport}. For $N \in \IN$ let $\mathcal{G}^N$ be the regular lattice in $[-\overline{r},\overline{r}]^3$ with side length $\frac{d}{N}$. $\mathcal{G}^N$ contains a total of $(3N)^3$ lattice points and for any $x \in [-\overline{r},\overline{r}]^3$, the maximal distance to the next lattice point is at most $\frac{\sqrt{3}}{2} \frac{\overline{r}}{N}.$
								\end{Definition}

								\section{Pointwise estimates}\label{section:pointwiseestimates}
								We will now go deeper into the details of the dynamics to control the difference between mean field and microscopic time-evolution. To this end, we have to control the differences in the electromagnetic fields generated by the (regularized) mean field density $\tilde f^N_t$ and the (smeared) microscopic density $\tilde \mu^N_t[Z] = \mu^N[\Psi_{t,0}(Z)]$ (recall that in view of \eqref{RMV}m the distributions are ``smeared out'' with $\chi^N$ as they enter the field equations.) We will use the decomposition of the fields in terms of Li\'enard-Wiechert distributions introduced in Section \ref{section:lienard}. We will denote by $E_i[\tilde f]$ and $E_i[\tilde \mu]$, $i=0,1,2$ the respective field component generated by $\tilde f^N$, respectively $\tilde\mu^N_t[Z]$.

								\subsection{Controlling the Coulomb term}\label{ControlCoulomb}
								We begin by controlling the contribution of the Coulombic term \eqref{E1term}:
								\begin{align*}
									\bigl \lvert E_1[\tilde f^N](t,x) - E_1[\tilde\mu^N](t,x) \bigr\rvert = \Bigl \lvert \int (\alpha^{-1} Y) *_{t,x} (\mathds{1}_{ t \geq 0} \tilde f^N) \, \dd \xi -  \int (\alpha^{-1} Y) *_{t,x} (\mathds{1}_{t \geq 0} \tilde\mu^N_{(\cdot)}[Z]) \, \dd \xi \Bigr \rvert
								\end{align*}
								with the kernel $\alpha^{-1}$ defined in \eqref{integralkernels}. The expression on the r.h.s. is to be evaluated at $(t,x)$.  Since convolutions commute, we may write
								\begin{align}\notag
									&\bigl \lvert E_1[\tilde f^N](t,x) - E_1[\tilde\mu^N](t,x) \bigr\rvert\\\notag
									= &\Bigl \lvert \chi^N * \Bigl( \int (\alpha^{-1} Y) * (\mathds{1}_{t \geq 0}  f^N) \, \dd \xi -  \int (\alpha^{-1} Y) * (\mathds{1}_{t \geq 0} \mu^N[\Psi_{s,0}(Z)]) \, \dd \xi \Bigr)\Bigr \rvert \\[1.5ex]\label{macroterm}
									\leq  &\Bigl \lvert\chi^N * \Bigl( \int (\alpha^{-1} Y) * (\mathds{1}_{t \geq 0} f^N) \, \dd \xi -  \int (\alpha^{-1} Y) * (\mathds{1}_{t \geq 0} \mu^N[\Phi_{s,0}(Z)]) \, \dd \xi \Bigr)\Bigr \rvert\\\label{microterm}
									+  &\Bigl \lvert\chi^N * \Bigl( \int (\alpha^{-1} Y) * (\mathds{1}_{t \geq 0} \mu^N[\Phi_{s,0}(Z)]) \, \dd \xi -  \int (\alpha^{-1} Y) * (\mathds{1}_{t \geq 0} \mu^N[\Psi_{s,0}(Z)]) \, \dd \xi  \Bigr)\Bigr \rvert
								\end{align}
								where we have inserted the density $\mu^N[\Phi_{s,0}(Z)]$ corresponding to the mean field flow $\Phi_{s,0}(Z) = {}^{N}\Phi_{s,0}(Z)$, in addition to the actual microsocpic density $\mu^N_s[Z] = \mu^N[\Psi_{s,0}(Z)]$.\\
								
								\noindent \textbf{A law of large numbers  bound for \eqref{macroterm}.}
								\noindent Recall from Definition \ref{Def:macroflow}, that $\mu^N[\Phi_{t,0}(Z)] = \varphi^N_{t,0}\# \mu[Z]$, where $\varphi^N_{t,0}$ is the characteristic flow of $f^N_t$. More explicitly, with  $\varphi^N_{t,0}(z_i)=(y^*,\eta^*)(t,z_i)$, we have
								\begin{equation*} 
									\mu^N[\Phi_{t,0}(Z)] = \frac{1}{N} \sum\limits_{i=1}\delta(x - y^*(t, z_i))\delta(\xi - \eta^*(t,z_i)).
								\end{equation*}
								We shall also use the shorthand $y^*_i(t) =y^*(t, z_i), \, \eta^*_i(t) =\eta^*(t, z_i)$. Now we observe that,
								\begin{align*} f^N(t,x,\xi) =  (\varphi^N_{t,0}\# f_0)(x,\xi) &= \int \delta(x - y)\delta(\xi - \eta) (\varphi^N_{t,0}\# f_0)(y,\eta) \dd y \dd \eta\\
									&= \int \delta(x - y^*(t,z))\delta(\xi - \eta^*(t,z)) f_0(z) \dd z.
								\end{align*} 
								\noindent Inserting this into \eqref{macroterm} and performing the $z$-integration last (assuming, for the moment, that the order of integration can be exchanged), we see that 
								\begin{equation*} 
									\IE_0 \Bigl[ \chi^N * \Bigl( \int (\alpha^{-1} Y) * (\mathds{1}_{t \geq 0} f^N) \, \dd \xi -  \int (\alpha^{-1} Y) * (\mathds{1}_{t \geq 0} \mu^N[\Phi_{s,0}(Z)]) \, \dd \xi \Bigl) \Bigr] = 0,
								\end{equation*} 
								where the expectation value is defined with respect to $\otimes^N f_0$. The idea is thus to use the law of large numbers to show that \eqref{macroterm} goes to $0$ in probability.\\
								
								\noindent Recall from \eqref{integralkernels} that:
								\begin{align*} \alpha^{-1}(t,x,\xi) = \frac{(1-v(\xi)^2) (x - t v(\xi))}{(t-v(\xi)x)^2}.
								\end{align*}
								Hence, we compute
								\begin{align*}  &\int (\alpha^{-1} Y) *_{t,x} (\mathds{1}_{t \geq 0} \mu^N[\Phi_{s,0}(Z)]) (t,x)\, \dd \xi \\
									&=  \frac{1}{N} \sum \limits_{i=1}^N \int\limits_{\IR^3\times\IR^3} \int\limits_{0}^t \dd s \dd y \dd \xi \; \delta(y - y^*_i(s)) \delta(\xi - \eta^*_i(s))  \\ 
									&\hspace{3cm} \frac{(1-v(\xi)^2)(x-y-(t-s)v(\eta))}{(t-s-v(\eta)(x-y))^2} \; \frac{\delta(\lvert x - y \rvert - (t-s))}{4\pi \lvert x - y \rvert}\\ 
									&=  \frac{1}{N} \sum \limits_{i=1}^N \int\limits_{0}^t \frac{(1-v(\eta^*_i)^2)(n(x - y^*_i) - v(\eta^*_s))}{4\pi(1-v(\eta^*_i)n(x - y^*_i))^2 \lvert x - y^*_i(s)\rvert^2} \, \delta(\lvert x -y^*_i(s)\rvert - (t-s)) \,  \dd s.
								\end{align*}
								The function $h: s \to \lvert x -y^*_i(s)\rvert - (t-s)$ is differentiable with $h'(s) = 1 - v^*(\eta^*(s))n(x - y^*_i(s))$.  
								If it has a  root in $[0,t]$, we denote it by $t_{ret,i}$, otherwise the integral is zero. Recall that $t_{ret,i} \geq 0 \iff z_i \in B_t(x) \times \IR^3$. Hence, we find: 
								\begin{align}\notag
									\int (\alpha^{-1} Y&) *_{t,x} (\mathds{1}_{t \geq 0} \mu^N[\Phi_{s,0}(Z)])(t,x) \, \dd \xi \\
									= & \frac{1}{N} \sum \limits_{i=1}^N \frac{(1-v(\eta^*_i)^2)(n(x - y^*_i) - v(\eta^*_i))}{4\pi(1-v(\eta^*_i)n(x - y^*_i))^3 \lvert x - y^*_i(s)\rvert^2}\, \mathds{1}_{\lbrace s \geq 0 \rbrace}\;  \Biggr \rvert_{s=t_{ret,i}}\\
									= &  \frac{1}{N} \sum\limits_{i=1}^N \mathds{1}_{\lbrace{z_i \in B_t(x) \times \IR^3 }\rbrace}\,  k\bigl(x - y^*(t_{ret,i}, z_i), \eta^*(t_{ret,i}, z_i) \bigr),
								\end{align}
								where we have introduced the kernel 
								\begin{equation}\label{kernelk} k(x,\xi):= \frac{(1-v(\xi)^2)(n(x) - v(\xi))}{4\pi(1-v(\xi)\cdot n(x))^3 \lvert x \lvert^2}.\end{equation}
								Furthermore, according to Lemma \ref{Lemma:lightconedistribution}, 
								\begin{align*}
									&\int\limits (\alpha Y) *_{t,x} (\mathds{1}_{t\geq 0} f^N) (t,x, \eta)\,\dd \eta\, \\
									&= \int\limits_{B_t(x) \times \IR^3}\frac{\alpha^{-1}(t-s, x -y^*(s,z), \eta^*(s,z))}{\lvert x - y^*(s,z) \rvert (1 -  n(x-y^*(s,z)) \cdot v(\eta^*(s,z)) } \Biggl \lvert_{s = tret(z)} \; f_0(z) \, \dd z \\
									& = \int\limits_{B_t(x) \times \IR^3}  \frac{(1-v(\eta^*(s,z))^2)(n(x - y^*(s,z)) - v(\eta^*(s,z)))}{4\pi(1-v(\eta^*(s,z))n(x - y^*(s,z)))^3 \lvert x - y^*(s,z)\rvert^2} \Biggr \rvert_{s=t_{ret(z)}}  \, f_0(z) \, \dd z\\
									&= \int\limits_{B_t(x) \times \IR^3} k(x-y^*(t_{ret(z)}, z), \eta^*(t_{ret}(z), z)) \, f_0(z) \, \dd z.
								\end{align*}
								\noindent (In fact, we could have also applied the same identity \eqref{LCdistribution2} to $\mu^N[\Phi_{t,0}(Z)]$).\\
								\noindent Now note that on the support of $f$, we have
								\begin{equation}\label{kbounds}
									\lvert k(x,\xi) \rvert \leq \frac{1}{2\pi(1-\overline{v})^3 \lvert x \lvert^2},
								\end{equation}
								and thus, according to Lemma \ref{chibounds},
								\begin{equation} 
									\lvert \tilde k(x, \xi) \rvert = \lvert \chi^N*_x k (x, \xi) \rvert \lesssim r_N^{-2}, \;\; \forall x \in \IR^3, \lvert \xi \rvert \leq  \overline{\xi}
								\end{equation}
								where we have applied the mollifier $\chi^N$. In total, we have found that \eqref{macroterm} is of the form
								\begin{equation*} \biggl\lvert \frac{1}{N} \sum\limits_{i=1}^N h(\phi(z_i)) - \int h(\phi(z)) \, \dd f_0(z) \biggr \rvert \end{equation*}
								with $h(y,\eta) = \tilde k (x-y, \eta)$ and $\phi$ the diffeomorphism defined in Lemma \ref{Lemma:lightconedistribution} and $f_0$ restricted to $\mathrm{B}(t ;x) \times \IR^3$. Hence, we can use the law of large numbers in the form of Proposition \ref{Prop:LLN}  to conclude the following:\\
								\begin{Lemma}
									\noindent Let $A^1_t$ be the ($N$ and $t$ dependent) set defined by
									\begin{equation}\label{A1} A^1_t := \lbrace Z \in \IR^3\times\IR^3 \mid \eqref{macroterm} < N^{-1/3} \text{ for all } x \in \mathcal{G}^N \rbrace. \end{equation}
									Then there exists $C_1 > 0$ such that $\IP_0(A^1_t) \geq 1- \frac{C_1}{N^1}$.
								\end{Lemma}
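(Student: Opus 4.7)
The computation leading up to the statement already rewrites \eqref{macroterm}, for a fixed spacetime point $(t,x)$, in exactly the form
\begin{equation*}
\Bigl\lvert \frac{1}{N}\sum_{i=1}^N H(z_i) - \int H(z)\, f_0(z)\,\dd z\Bigr\rvert,\qquad H(z) := \tilde k\bigl(x - y^*(t_{\mathrm{ret}}(z),z),\,\eta^*(t_{\mathrm{ret}}(z),z)\bigr)\,\mathds{1}_{\mathrm{B}_t(x)\times\IR^3}(z),
\end{equation*}
so the plan is simply to apply Proposition \ref{Prop:LLN} for each $x\in\mathcal G^N$ and then union-bound over the lattice. Writing $H = h\circ\phi$ with $h(y,\eta) := \tilde k(x-y,\eta)$ and $\phi$ the light-cone diffeomorphism from Lemma \ref{Lemma:lightconedistribution}, $\phi$ has bounded Jacobian by the uniform regularity of the mean field flow $\varphi^N_{t,0}$ coming from the $W^{1,\infty}$-bound on the fields established in Section \ref{section:Corfromsol}.

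The crucial quantitative input is the size of $h$. From \eqref{kbounds} we have $|k(x,\xi)|\lesssim |x|^{-2}$ on the relevant momentum range $|\xi|\le\bar\xi$, and Lemma \ref{chibounds} then gives $\lVert\tilde k\rVert_\infty \lesssim r_N^{-2}\lesssim N^{2\gamma}$ since $r_N\ge N^{-\gamma}$. Thus $|H|\lesssim N^{\alpha_0}$ with $\alpha_0 := 2\gamma$. Choosing $\beta = \tfrac{1}{3}$, the compatibility condition in Proposition \ref{Prop:LLN} requires $\alpha_0 + \beta < \tfrac{1}{2}$, i.e. $2\gamma < \tfrac{1}{6}$, which is exactly the hypothesis $\gamma < \tfrac{1}{12}$ of Theorem \ref{Thm:Thm3}. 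So for any fixed $x$, Proposition \ref{Prop:LLN} gives, for any $\gamma' > 0$, a constant $C_{\gamma'}$ with
\begin{equation*}
\IP_0\bigl[\,\eqref{macroterm}\ge N^{-1/3}\,\bigr]\le C_{\gamma'}\,N^{-\gamma'}.
\end{equation*}

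Finally, the grid $\mathcal G^N$ contains at most $(3N)^3 = 27\,N^3$ points, so choosing $\gamma' = 4$ and taking the union bound yields
\begin{equation*}
\IP_0\bigl[(A^1_t)^c\bigr] \le 27\,N^3\cdot C_4\,N^{-4} \le C_1\,N^{-1},
\end{equation*}
as claimed. The only genuinely nontrivial step is the bookkeeping that identifies \eqref{macroterm} with a centered empirical average of $h\circ\phi$, which was already done in the paragraphs preceding the lemma; the rest is a direct application of Proposition \ref{Prop:LLN} together with the pointwise bound on $\tilde k$, and the exponent $\tfrac{1}{12}$ in the theorem is precisely what one obtains from balancing $r_N^{-2}\lesssim N^{2\gamma}$ against the tolerance $\beta < \tfrac{1}{2}-2\gamma$. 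The loss of $N^3$ from the union bound is absorbed by taking the moment $M$ in the proof of Proposition \ref{Prop:LLN} sufficiently large, which costs nothing in the final rate.
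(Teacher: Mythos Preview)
Your proof is correct and follows essentially the same route as the paper: apply Proposition~\ref{Prop:LLN} with $h(y,\eta)=\tilde k(x-y,\eta)$ and the light-cone diffeomorphism $\phi$, use $\lvert\tilde k\rvert\lesssim r_N^{-2}\le N^{2\gamma}$ together with $\beta=\tfrac13$ (admissible precisely when $\gamma<\tfrac{1}{12}$) to get a $C/N^4$ bound at each lattice point, and then union-bound over the $(3N)^3$ points of $\mathcal G^N$. Your explicit bookkeeping of the indicator $\mathds{1}_{\mathrm{B}_t(x)\times\IR^3}$ and the remark on the Jacobian of $\phi$ are consistent with the paper's setup.
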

								\begin{proof}
									Let $\mathcal{G}^N$ the lattice defined in \ref{Def:lattice} and $x_k \in \mathcal{G}^N$.  We want to apply Proposition \ref{Prop:LLN} with $h(y, \eta) = \tilde k (x_k-y, \eta)$ and $\phi$ as in $\eqref{LCdiffeo}$. Since $\lvert h \rvert \lesssim r_N^{-2} \leq N^{2\gamma}$, with $\gamma < \frac{1}{12}$, we can choose $\beta = \frac{1}{3}$. Thus, by Prop. \ref{Prop:LLN}, there exists a constant $C> 0$ such that
									\begin{equation*} \begin{split}\IP_0 \Bigl[\Bigl \lvert\chi^N*\Bigl( \int (\alpha^{-1} Y) * \mathds{1}_{t \geq 0}( f^N -  \mu^N[\Phi_{s,0}(Z)]\bigl) \, \dd \xi \Bigr) (t, x_k) \Bigr \rvert \geq N^{-\frac{1}{3}} \Bigr]
											\leq \frac{C}{N^4}.\end{split}\end{equation*}
									Since the lattice $\mathcal{G}^N$ contains $(3N)^3$ points, we have
									\begin{align*} &\IP_0 \bigl [\exists x_k \in C^N : \eqref{macroterm} \geq N^{-\frac{1}{3}}\bigr] \\ 
										&\leq \sum\limits_{x_k \in \mathcal{G}^N}\IP_0 \Bigl[\Bigl \lvert\chi^N*\Bigl( \int (\alpha^{-1} Y) * (\mathds{1}_{t \geq 0} f^N - \mathds{1}_{t \geq 0} \mu^N[\Phi_{t,0}(Z)]\bigl) \, \dd \xi \Bigr) (t, x_k) \Bigr \rvert \geq N^{-\frac{1}{3}} \Bigr]\\
										&\leq (3N)^3 \frac{C}{N^4} \leq \frac{27C}{N}. \end{align*}
								\end{proof}
								
								\noindent \textbf{A Lipschitz bound bound for \eqref{microterm}.} We now have to control \eqref{microterm}, i.e. the difference of the field components $E_1$ generated by the mean field trajectories $(y^*_i, \eta^*_i)_{i=1,..,N}$ on the one hand and the true microscopic trajectories $(x^*_i, \xi^*_i)_{i=1,..,N}$ on the other hand. To this end, we want to establish a local Lipschitz bound for the kernel \eqref{kernelk}.
								\begin{Lemma}[Local Lipschitz bound]\label{Lemma:gbound}
									There exists constants  $b_1, b_2 >0$ and functions
									\begin{equation}\label{Def:g}
										g_1(x) := \frac{b_1}{(1- \overline v)^3}  \begin{cases}  r_N^{-3} &; \lvert x \rvert < \frac{2r_N}{1 - \overline{v}}\\  \lvert x \rvert^{-3} &; \lvert x \rvert \geq \frac{2r_N}{1 - \overline{v}}
										\end{cases},\;\;\; g_2(x):= \frac{b_2}{(1- \overline v)^4} \begin{cases}  r_N^{-2} &; \lvert x \rvert < r_N\\  \lvert x \rvert^{-2} &; \lvert x \rvert \geq r_N\end{cases}.
									\end{equation}         
									
									\noindent such that for all $z_1= (x_1, \xi_1), z_2= (x_2, \xi_2)$ with $\lvert \xi_1\rvert, \lvert \xi_2 \rvert \leq \overline{\xi}$ and  $\lvert x_1 - x_2 \rvert < \frac{r_N}{1- \overline{v}}, \, \overline{v} = \lvert v(\overline{\xi})\rvert$:
									\begin{equation}\label{gbound} \lvert \tilde k(x_1,\xi_1) - \tilde k (x_2, \xi_2)\rvert_\infty \leq g_1(x_1) \, \lvert x_1 - x_2 \rvert_\infty + g_2(x_1) \, \lvert \xi_1 - \xi_2 \rvert_\infty. \end{equation}
									
								\end{Lemma}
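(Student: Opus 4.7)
The plan is to apply the triangle inequality in a careful way so that each increment is evaluated at a base point compatible with the right-hand side:
\begin{equation*}
\lvert \tilde k(x_1,\xi_1) - \tilde k(x_2,\xi_2)\rvert_\infty \leq \lvert \tilde k(x_1,\xi_1) - \tilde k(x_1,\xi_2)\rvert_\infty + \lvert \tilde k(x_1,\xi_2) - \tilde k(x_2,\xi_2)\rvert_\infty,
\end{equation*}
taking the $\xi$-difference at $x_1$ (so the natural bound is $g_2(x_1)$) and the $x$-difference along a segment starting from $x_1$. The mean value theorem then reduces the Lemma to pointwise bounds on $\nabla_\xi \tilde k(x_1,\xi)$ and on $\nabla_x \tilde k(x,\xi_2)$ for $x$ on the segment $[x_1,x_2]$.

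For the momentum derivative, the only $\xi$-dependence in $k(y,\xi)$ enters through $v(\xi)$. A direct differentiation of \eqref{kernelk}, using $\lvert v(\xi)\rvert\leq\overline v$, $\lvert\nabla_\xi v\rvert\leq 2$, and the fact that differentiating $(1-v\cdot n)^{-3}$ costs an extra factor $(1-\overline v)^{-1}$, gives $\lvert\nabla_\xi k(y,\xi)\rvert\lesssim (1-\overline v)^{-4}\lvert y\rvert^{-2}$. Since the singularity is still only of order $\lvert y\rvert^{-2}$, Lemma \ref{chibounds}(i) applied to $\chi^N*_x\nabla_\xi k$ immediately yields $\lvert\nabla_\xi\tilde k(x,\xi)\rvert\lesssim (1-\overline v)^{-4}\min\{r_N^{-2},\lvert x\rvert^{-2}\}$, matching $g_2(x)$.

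For the spatial derivative I would split into the far regime $\lvert x_1\rvert\geq\tfrac{2r_N}{1-\overline v}$ and the near regime $\lvert x_1\rvert<\tfrac{2r_N}{1-\overline v}$. In the far regime, the segment hypothesis $\lvert x_1-x_2\rvert<\tfrac{r_N}{1-\overline v}$ gives $\lvert x\rvert\geq\lvert x_1\rvert/2$ for all $x$ on the segment, and moreover $\lvert x-y\rvert\geq\lvert x\rvert/2$ for every $y\in\mathrm{supp}\,\chi^N(x-\cdot)$; so I write $\nabla_x\tilde k(x)=\chi^N*_x\nabla_x k(x)$ and use the direct bound $\lvert\nabla_x k(y,\xi)\rvert\lesssim (1-\overline v)^{-3}\lvert y\rvert^{-3}$, obtaining $\lvert\nabla_x\tilde k(x,\xi_2)\rvert\lesssim (1-\overline v)^{-3}\lvert x_1\rvert^{-3}$. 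In the near regime I move the derivative onto the mollifier: $\nabla_x\tilde k(x)=(\nabla\chi^N)*_x k(x)$, and estimate separately over $\{\lvert y\rvert\leq r_N\}$ (using $\lVert\nabla\chi^N\rVert_\infty\lesssim r_N^{-4}$ and $\int_{|y|\leq r_N}|y|^{-2}\,\dd y\lesssim r_N$) and over $\{\lvert y\rvert>r_N\}$ (using $\lvert y\rvert^{-2}\leq r_N^{-2}$ and $\lVert\nabla\chi^N\rVert_1\lesssim r_N^{-1}$), exactly as in the proof of Lemma \ref{chibounds}(ii). Both parts yield $\lesssim (1-\overline v)^{-3}r_N^{-3}$ uniformly in $x$, matching $g_1(x_1)$ in the near regime.

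The main obstacle is the book-keeping of the $(1-\overline v)^{-1}$ powers and of the crossover radii so that the stated constants in $g_1,g_2$ are respected. Specifically, the threshold $\tfrac{2r_N}{1-\overline v}$ in $g_1$ is chosen precisely to guarantee, under $\lvert x_1-x_2\rvert<\tfrac{r_N}{1-\overline v}$, that the full segment stays at distance $\geq\lvert x_1\rvert/2$ from the origin and that $\lvert x-y\rvert\geq\lvert x\rvert/2$ throughout the convolution support; in the near regime one has to check that the potentially dangerous factor from $|y|^{-2}$ near its singularity contributes only $r_N$ (not $r_N/(1-\overline v)$) after integration against $\nabla\chi^N$, so that no spurious $(1-\overline v)^{-1}$ appears. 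Once these geometric comparisons are verified the bound follows by direct substitution.
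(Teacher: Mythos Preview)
Your proposal is correct and follows essentially the same route as the paper: the same triangle inequality splitting (taking the $\xi$-increment at $x_1$), the mean value theorem, the bound $\lvert\nabla_\xi k\rvert\lesssim(1-\overline v)^{-4}\lvert y\rvert^{-2}$ combined with Lemma~\ref{chibounds}(i) for $g_2$, and for $g_1$ the derivative bound $\lvert\nabla_x\tilde k\rvert\lesssim(1-\overline v)^{-3}\min\{r_N^{-3},\lvert x\rvert^{-3}\}$ together with the segment argument $\lvert x\rvert\geq\lvert x_1\rvert/2$ in the far regime. The only cosmetic difference is that the paper obtains the spatial bound in one stroke by applying Lemma~\ref{chibounds}(ii) to $h=k$ (which already encodes your near/far dichotomy), whereas you redo that computation by hand and, in the far regime, shift the derivative onto $k$ rather than onto $\chi^N$; both organizations give the same estimate.
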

								
								\begin{proof}We have 
									\begin{align*} 	
										\lvert \tilde k(x_1,\xi_1) - \tilde k (x_2, \xi_2)\rvert_\infty\leq  \lvert \tilde k(x_1,\xi_2) - \tilde k (x_2, \xi_2)\rvert_\infty + \lvert \tilde k(x_1,\xi_1) - \tilde k (x_1, \xi_2)\rvert_\infty,
									\end{align*}
									hence, there exists $y$ between $x_1$ and $x_2$ and $\zeta$ between $\xi_1$ and $\xi_2$ such that
									\begin{align*} 	
										\lvert \tilde k(x_1,\xi_1) - \tilde k (x_2, \xi_2)\rvert_\infty \leq  \lvert \nabla_x \tilde k(y, \xi_2) \rvert_\infty \lvert x_1 - x_2 \rvert_\infty   +  \lvert \nabla_\xi \tilde k(x_1, \zeta) \rvert_\infty \lvert \xi_1 - \xi_2 \rvert_\infty.
									\end{align*}
									Now one checks that
									\begin{equation*} \lvert \nabla_\xi k (x,\xi)\rvert_\infty \leq \frac{18}{(1-\overline{v})^4 \lvert x \rvert^2}, \end{equation*}
									so that according to Lemma  \ref{chibounds}, there exists $b_2 >0$ such that
									\begin{equation}\label{Dk1}\lvert \nabla_\xi \tilde k (x,\xi)\rvert_\infty \leq  \frac{b_1}{(1-\overline{v})^4} \min \lbrace r_N^{-2},  {\lvert x \rvert^{-2}}  \rbrace. \end{equation}
									For the difference in the $x$-coordinates, we get from \eqref{kbounds} and Lemma \ref{chibounds} a constant $b >0$ such that
									\begin{equation}\label{Dk2}\lvert \nabla_x \tilde k (x,\xi)\rvert_\infty \leq \frac{b}{(1 - \overline{v})^3} \min \lbrace r_N^{-3}, {\lvert x \rvert^{-3}} \rbrace. \end{equation}
									
									\noindent Thus, for $\lvert x_1\rvert < \frac{2r_N}{1- \overline{v}}$, a bound of the form \eqref{gbound} certainly holds, since the derivative is bounded by $\frac{b}{(1 - \overline{v})^3}\, r_N^{-3}$. For $\lvert x_1 \rvert > \frac{2r_N}{1- \overline{v}}$ and $\lvert x_1 - x_2 \rvert <\frac{r_N}{1- \overline{v}}$ we observe that $\lvert s x_1 + s(x_2 -x_1)\rvert \geq \frac{\lvert x_1 \rvert}{2}, \forall s \in [0,1]$, so that
									$\frac{1}{\lvert s x_1 + s(x_2 -x_1)\rvert^3} \leq \frac{8}{\lvert x_1 \rvert^3}$. Setting $b_1 := 8b$, the statement follows. 
								\end{proof}
								
								\noindent Now recall that as long as $J^N_t(Z) <1$, the trajectories are close as per \eqref{Def:J2}. More precisely, $J^N_t(Z) <1 \Rightarrow \sup\limits_{0\leq s\leq t} \lvert {}^N\Phi_{t,0}(Z) - {}^N\Psi_{t,0}(Z)\rvert_\infty  < N^{-\delta} \leq N^{-\gamma} \leq r_N$. This implies, in particular, $\lvert x^*(s,z_i) - y^*(s,z_i) \rvert < r_N$ as well as $\lvert \xi^*(s,z_i) \rvert < \overline{\xi}$ for $0 \leq s  \leq t $ and all $1 \leq i \leq N$. Moreover, with Lemma \ref{Lemma:closetret} we have for any fixed $(t,x) \in \IR^+\times \IR^3$:
								\begin{equation} \lvert x^*_i(t^x_{ret,i}) -  y^*_i(t^y_{ret,i}) \rvert \leq \frac{r_N}{1 - \overline{v}}, \end{equation}
								where $t^x_{ret,i}$ and $t^y_{ret,i}$ denote the retarded time of the trajectory $x^*_i(s)$, respectively $y_i^*(s)$, with respect to the space-time point $(t,x)$. Hence, we can apply the previous Lemma and find that \eqref{microterm} is bounded by
								\begin{align}\notag
									\frac{1}{N} \sum\limits_{i=1}^N \mathds{1}_{\lbrace t_{ret} \geq 0 \rbrace}\,\Bigl\lvert  &\tilde k\bigl(x - x^*(t^x_{ret,i}, z_i), \xi^*(t^x_{ret,i}, z_i)) - \tilde k\bigl(x - y^*(t^y_{ret,i}, z_i), \eta^*(t^x_{ret,i}, z_i))\Bigr\rvert\\\notag
									\leq  \frac{1}{N} \sum\limits_{i=1}^N \mathds{1}_{\lbrace t_{ret} \geq 0 \rbrace} \Bigl( &g_1(x - y_i^*(t^y_{ret,i})) \, \lvert  x_i^*(t^x_{ret,i}) -  y_i^*(t^y_{ret,i})\rvert_\infty\\\notag
									+ &g_2(x - y_i^*(t^y_{ret,i})) \, \lvert  \xi_i^*(t^x_{ret,i}) -  \eta_i^*(t^y_{ret,i})\rvert_\infty \Bigr) \\ \label{gterm1}
									\leq \Bigl (\frac{1}{N} \sum\limits_{i=1}^N \mathds{1}_{\lbrace t_{ret} \geq 0 \rbrace}& g_1(x - y_i^*(t^y_{ret,i})) \Bigr) \, \frac{1}{1 - \overline{v}} \, \sup\limits_{0\leq s\leq t} \lvert {}^N\Phi^1_{s,0}(Z) - {}^N\Psi^1_{s,0}(Z) \rvert_\infty \;\;\\\notag
									+ \Bigl (\frac{1}{N} \sum\limits_{i=1}^N \mathds{1}_{\lbrace t_{ret} \geq 0 \rbrace} &g_2(x - y_i^*(t^y_{ret,i})) \Bigr) \cdot\\\label{g2term1}
									 \sup\limits_{0\leq s\leq t} \Bigl( \lvert {}^N\Phi^2_{s,0}(Z) - &{}^N\Psi^2_{s,0}(Z) \rvert_\infty + \frac{L}{1 - \overline{v}} \, \lvert {}^N\Phi^1_{s,0}(Z) - {}^N\Psi^1_{s,0}(Z) \rvert_\infty \Bigr).
								\end{align}
								\noindent For the last inequality, we used Lemma \ref{Lemma:closetret} and the bound \eqref{LipschitzL} on the mean field force to account for the fact that the distance $\lvert  x_i^*(t^x_{ret,i}) -  y_i^*(t^y_{ret,i})\rvert$, respectively $\lvert  \xi_i^*(t^x_{ret,i}) -  \eta_i^*(t^y_{ret,i})\rvert$, involves to different retarded times. Now, we want to estimate $\frac{1}{N} \sum\limits_{i=1}^N \mathds{1}_{\lbrace t_{ret} \geq 0 \rbrace}g_j(x - y_i^*(t^y_{ret,i})),  j=1,2$ by its expectation value w.r.to $f_0$. In view of Lemma \ref{Lemma:lightconedistribution}, we write:
								\begin{align*} \notag & \frac{1}{N} \sum\limits_{i=1}^N \mathds{1}_{\lbrace t_{ret} \geq 0 \rbrace}g_j(x - y_i^*(t^y_{ret,i}))\\ 
									\leq& \Bigl\lvert \frac{1}{N} \sum\limits_{i=1}^N \mathds{1}_{\lbrace t_{ret} \geq 0 \rbrace}g_j(x - y_i^*(t^y_{ret,i})) - \hspace{-5mm} \int\limits_{\mathrm{B}_t(x) \times \IR^3} \hspace{-5mm}g_j(x - y)(1- n(x-y)v(\eta))  f^N(t - \lvert x - y\rvert ,y,\eta) \Bigr\rvert\\ \notag 
									+&  \Bigl\lvert \int\limits_{\mathrm{B}_t(x) \times \IR^3} g_j(x - y) (1- n(x-y) v(\eta)) f^N(t - \lvert x -y\rvert, y,\eta) \dd y \dd \eta \Bigr\rvert.
								\end{align*}
								For the last term, we recall the bounds from \eqref{Def:g} and estimate, using $\lvert 1-n\cdot v \rvert\leq 2$,
								\begin{align}\notag
									\Bigl\lvert \int\limits_{\mathrm{B}_t(x) \times \IR^3}& g_1(x - y) (1- n(x-y) v(\eta)) f^N(t  - \lvert x -y\rvert, y,\eta) \dd y \dd \eta \Bigr\rvert\\\notag
									&\lesssim \int\limits_{\lvert x-y \rvert \leq t} g_1(x - y) \rho[f^N](t -\lvert x- y \rvert, y) \dd y \\\notag
									&\leq \sup\limits_{0 \leq s \leq t} \lVert \rho[f^N](s, \cdot)\rVert_\infty\, \Bigl(\int\limits_{\lvert y \rvert \leq  \frac{2 r_N}{1-\overline{v}}}  g_1(y) \, \dd^3 y + \int\limits_{\frac{2 r_N}{1-\overline{v}} < \lvert y \rvert \leq t} g_1(y) \, \dd^3 y \Bigr)\\\notag
									&\lesssim   \;\;\,  C_0 \Bigl( \int\limits_{\lvert y \rvert \leq \frac{2 r_N}{1-\overline{v}}} r_N^{-3}\, \dd^3 y + \int\limits_{  \frac{2 r_N}{1-\overline{v}} < \lvert y \rvert \leq t}  \lvert y \rvert^{-3} \, \dd^3 y\Bigr)\\[1.2ex]\label{gterm2}
									& \lesssim  \;\;\,   C_0  \, (1 + \log(r_N^{-1}) + \log(T)),
								\end{align}
								and for $g_2$:
								\begin{align}\notag
									\Bigl\lvert \int\limits_{\mathrm{B}_t(x) \times \IR^3}& g_2(x - y) (1- n(x-y) v(\eta)) f^N(t  - \lvert x -y\rvert, y,\eta) \dd y \dd \eta \Bigr\rvert\\\notag
									&\lesssim \int\limits_{\lvert x-y \rvert \leq t} g_2(x - y) \rho[f^N](t -\lvert x- y \rvert, y) \dd y \\\notag
									&\lesssim  \sup\limits_{0 \leq s \leq t} \lVert \rho[f^N](s, \cdot)\rVert_\infty\, \int\limits_{\lvert y \rvert \leq  t} \lvert y \rvert^{-2}  \dd^3 y \\\label{g2term2}
									&\lesssim   \;\;\,  C_0 T.
								\end{align}
								
								\noindent It remains to show that the difference
								\begin{equation}\label{deribound}
									\biggl\lvert \frac{1}{N} \sum\limits_{i=1}^N \mathds{1}_{\lbrace t_{ret} \geq 0 \rbrace} g_j(x - y_i^*(t_{ret,i})) - \int g_j(x - y)(1- nv)  f^N(t -\lvert x -y \rvert,y,\xi)  \biggr\rvert
								\end{equation}
								is typically small. According to part 1) of Lemma \ref{Lemma:lightconedistribution}, \eqref{deribound} can be written as
								\begin{equation*} 
									\biggl\lvert \frac{1}{N} \sum\limits_{i=1}^N \mathds{1}_{\lbrace{z_i \in B_t(x) \times \IR^3 }\rbrace} g_j(x- \pi_x \phi(z_i)) - \int \mathds{1}_{\lbrace{z \in B_t(x) \times \IR^3 }\rbrace} g_j(x - z) \phi \# f_0(z) \dd z \biggr\rvert,
								\end{equation*}
								where $\pi_x(x,\xi) = x$ is the projection on the spatial coordinates and we used the fact that $t_{ret}(z)\geq 0 \iff z \in B(t,x) \times \IR^3$. Hence, we can apply again the law of large numbers.\\
								
								\noindent For any $x \in \mathcal{G}^N$, we consider $h:\IR^6\to \IR, z \mapsto \mathds{1}_{\lbrace{\phi^{-1}(z) \in B_t(x) \times \IR^3 }\rbrace} g_j(x - \pi_x z)$. This function is bounded as $\lvert h \rvert \lesssim r_N^{-3} \leq N^{3\gamma}$ with $\gamma < \frac{1}{12}$. Applying Proposition \ref{Prop:LLN} with $\phi$ as in \eqref{LCdiffeo}, $\alpha = 3\gamma$ and $\beta =0$, we find
								\begin{equation*}\IP_0\Bigr[\Bigl\lvert \frac{1}{N} \sum\limits_{i=1}^N \mathds{1}_{\lbrace t_{ret} \geq 0 \rbrace} g_j(x - y_i^*(t_{ret,i})) - \int g_j(x - y)(1- nv)  f^N(t -\lvert x -y \rvert,y,\xi)  \Bigr\rvert > 1\Bigr] \lesssim N^{-4} \end{equation*} 
								and thus $\IP_0\bigl[\exists x_k \in \mathcal{G}^N \mid \eqref{deribound} > 1 ] \lesssim N^{-1}$, for $j=1,2$, since the grid $\mathcal{G}^N$ consists of $(3N)^3$ points. We define the ($N$ and $t$ dependent) set 
								\begin{equation}\label{gterm3} A^2_t := \lbrace Z \in \IR^3\times\IR^3 \mid \eqref{deribound} \leq 1, j=1,2 \; \forall x \in \mathcal{G}^N \rbrace. \end{equation}
								Then  there exists $C_2 > 0$ such that $\IP(A^2_t) \geq 1- \frac{C_2}{N}$.\\
								
								\noindent For the magnetic field component $B_1$, the proof works analogously, since the corresponding kernel $n \times \alpha^{-1}$ has the same bounds and regularity properties. 
								
								\subsection{Controlling the radiation term}
								We now consider the contribution of the radiation term $E_2$. The corresponding kernel is less singular in the near-field, but depends on the acceleration of the particles. From \eqref{E2term}:
								\begin{align}\notag
									&\lvert E_2[\tilde f^N](t,x) - E_2[\tilde\mu^N](t,x)\rvert\\\notag
									&= \Bigl \lvert \int (\nabla_\xi \alpha Y) *(\tilde K[\tilde f^N] \mathds{1}_{t \geq 0} \tilde f^N) \, \dd \xi -  \int (\nabla_\xi \alpha Y) * (\tilde K[\tilde \mu^N] \mathds{1}_{t \geq 0} \tilde\mu^N) \, \dd \xi \Bigr \rvert \\\label{macroterm2}
									&\leq  \Bigl \lvert \int (\nabla_\xi \alpha Y) * (\tilde K[\tilde f^N]  \mathds{1}_{t \geq 0} \tilde f^N) \, \dd \xi -  \int (\nabla_\xi \alpha Y) * (\tilde K[\tilde f^N]  \mathds{1}_{t \geq 0} \tilde\mu^N[\Psi_{s,0}(x)]) \, \dd \xi \Bigr \rvert \\\label{microterm2}
									& +  \Bigl \lvert \int (\nabla_\xi \alpha Y) * (\tilde K[\tilde f^N]  - \tilde K[\tilde \mu^N] ) ( \mathds{1}_{t \geq 0} \, \tilde\mu^N[\Psi_{s,0}(x)]) \; \dd \xi \Bigr\rvert,
								\end{align}
								where we use the regularized distributions and the corresponding regularized forces $K[\tilde f^N]$, respectively $K[\tilde \mu^N]$ in view of \eqref{RMV}. The integrals on the r.h.s. are to be evaluated at $(t,x)$. For the second term \eqref{microterm2}: 
								\begin{align*} &\Bigl \lvert \int (\nabla_\xi \alpha \, Y) * (\tilde K[\tilde f^N]  - \tilde K[\tilde \mu^N]) ( \mathds{1}_{t \geq 0} \, \tilde\mu^N) \; \dd \xi \Bigr\rvert\\
									&= \Bigl\lvert \frac{1}{N}\sum\limits_{i=1}^N \int\limits_{0}^t \int\limits_{S^2} (t-s) \nabla_\xi \alpha (t-s,  \omega(t-s), \xi^*_i(s))\\
									& \hspace{2.5cm} (\tilde K[\tilde f^N]  - \tilde K[\tilde \mu^N]) (s, x - \omega(t-s), \xi^*_i(s)) \chi^N(x - \omega(t-s) - x^*_i(s)) \, \dd \omega \dd s\Bigr\rvert\\
									&\leq   \frac{1}{N}\sum\limits_{i=1}^N \int\limits_{0}^t \int\limits_{S^2}\Bigl\lvert (t-s) \nabla_\xi \alpha (t-s,  \omega(t-s), \xi^*_i(s)) \Bigr\rvert\\
									& \hspace{2.5cm}\Bigl\lvert (\tilde K[\tilde f^N]  - \tilde K[\tilde \mu^N]) (s, x - \omega(t-s), \xi^*_i(s)) \Bigr\rvert \chi^N(x - \omega(t-s) - x^*_i(s)) \, \dd \omega \dd s.
								\end{align*}
								
								\noindent Now, recall from $\eqref{nablaalpha}$:
								\begin{equation*}
									(\nabla_\xi \alpha^0)^i_j (t, x, \xi) = \frac{t (t-v\cdot x)(v_jv^i - \delta^i_j) + (x_j - t v_j)(x^i - (v \cdot x) v^i)}{\sqrt{1+ \lvert \xi \rvert^2} (t - v \cdot x)^2}
								\end{equation*}
								and thus 
								\begin{align*}
									(t-s) \nabla_\xi \alpha (t-s,  \omega(t-s), \xi^*) = \frac{(1-v\cdot \omega)(v_jv^i - \delta^i_j) + (\omega_j -  v_j)(\omega^i - (v \cdot \omega) v^i)}{\sqrt{1+ \lvert \xi \rvert^2} (1 - v \cdot \omega)^2}.
								\end{align*}
								Since the vectors appearing in the nominator are all of norm $1$ or smaller, we can estimate
								\begin{equation}
									\lvert (t-s) \nabla_\xi \alpha (t-s,  \omega(t-s), \xi^*) \rvert \leq \frac{8}{(1-\overline{v})^2}.
								\end{equation} 
								Moreover, we observe that $\frac{1}{N}\sum\limits_{i=1}^N \chi^N(x - \omega(t-s) - x^*_i(s))$
								is nothing else than the (smeared) microscopic charge density  $\tilde \rho[\mu^N[Z]](s, x - \omega(t-s))$. In total, we can thus write
								\begin{align}
									\notag
									&\Bigl \lvert \int (\nabla_\xi \alpha \, Y) * (\tilde K[\tilde f^N]  - \tilde K[\tilde \mu^N]) ( \mathds{1}_{t \geq 0} \, \tilde\mu) \; \dd \xi \Bigr\rvert\\\notag
									& \leq \frac{8}{(1-\overline{v})^2} \int\limits_{0}^t  \int\limits_{S^2} \bigl\lvert E[\tilde f^N](s,x - \omega(t-s)) - E[\tilde\mu^N](s, x - \omega(t-s)) \bigr\rvert \\\notag 
									&\hspace{1cm} + \bigl\lvert B[\tilde f^N](s, x - \omega(t-s)) - B[\tilde\mu^N](s, x - \omega(t-s)) \bigr\rvert \,\tilde\rho[\mu] (s, x - \omega(t-s)) \dd \omega \dd s\\\label{KGronwall}
									&\lesssim \frac{\lVert \tilde\rho[\mu]\rVert_{L^\infty([0,T]\times \IR^3)}}{(1-\overline{v})^2} \int\limits_{0}^t \lVert E[\tilde f^N](s) - E[\tilde\mu^N](s) \rVert_{L^\infty(\mathrm B(\overline{r}))} + \lVert B[\tilde f^N](s) - B[\tilde\mu^N](s) \rVert_{L^\infty(\mathrm B(\overline{r}))}  \dd s,
								\end{align}
								where in the last line, we used the fact that $\supp \tilde\rho[\mu] (s) \subseteq \mathrm{B}(\overline{r};0), \; \forall s \leq T$.\\

								\noindent For \eqref{macroterm2} we write
								\begin{align}\notag
									& \Bigl \lvert \int (\nabla_\xi \alpha \,  Y) * (\tilde K[\tilde f^N] \mathds{1}_{t \geq 0} \tilde f^N) \, \dd \xi -  \int(\nabla_\xi \alpha \,  Y) * (\tilde K[\tilde f^N] \mathds{1}_{t \geq 0} \tilde\mu^N[\Psi_{t,0}(Z)]) \, \dd \xi \Bigr \rvert \\\label{macroterm3}
									\leq&  \Bigl \lvert \int (\nabla_\xi \alpha \,  Y) * (\tilde K[\tilde f^N] \mathds{1}_{t \geq 0} \tilde f) \, \dd \xi -  \int(\nabla_\xi \alpha \,  Y) * (\tilde K[\tilde f^N] \mathds{1}_{t \geq 0} \tilde\mu^N[\Phi_{t,0}(Z)]) \, \dd \xi \Bigr \rvert\\\label{microterm3}
									+&  \Bigl \lvert \int(\nabla_\xi \alpha \,  Y) * (\tilde K[\tilde f^N] \mathds{1}_{t \geq 0} \tilde\mu^N[\Phi_{t,0}(Z)]) \, \dd \xi -  \int (\nabla_\xi \alpha \,  Y) * (\tilde K[\tilde f^N] \mathds{1}_{t \geq 0} \tilde\mu^N[\Psi_{t,0}(Z)]) \, \dd \xi \Bigr \rvert.
								\end{align}
								\noindent We evaluate 
								\begin{align*} 
									\int(\nabla_\xi \alpha \,  Y) * (\tilde K[\tilde f^N] \mathds{1}_{t \geq 0} \tilde\mu^N[\Phi_{t,0}(Z)])
									= \frac{1}{N}\sum\limits_{i=1}^N \mathds{1}_{\lbrace t_{ret,i}>0 \rbrace}\kappa(t_{ret,i}, y^*(t_{ret,i}), \eta^*(t_{ret,i}) ) \end{align*}
								with kernel
								\begin{equation}\begin{split} \kappa(s,y,\eta) &:= \frac{(\tilde K[\tilde f](s,y,\eta) \cdot v(\eta)) v(\eta) - \tilde K[\tilde f](s,y,\eta))}{\sqrt{1 + \eta^2}(1 - v(\eta) \cdot n(x-y))^2 \lvert x - y\rvert}\\[1.5ex]
										&+ \frac{\tilde K[\tilde f](s,y,\eta) \cdot (n(x-y)-v(\eta)) \bigl(n(x-y) - (v \cdot n) v(\eta) \bigr)}{\sqrt{1 + \eta^2}(1 - v(\eta) \cdot n(x-y))^2 \lvert x - y\rvert}.
									\end{split}\end{equation}
									With $L$ as in \eqref{LipschitzL}, the function $\kappa$ satisfies
									\begin{align}
										\lvert \kappa(s,y,\eta) \rvert &\lesssim \frac{\lvert  \tilde K[\tilde f^N](s,y,\eta) \rvert}{(1-\overline{v})^2 \lvert x - y \rvert} \leq \frac{L}{(1-\overline{v})^2 \lvert x - y \rvert}\\[1.5ex]\notag \;\;\; \vert \nabla_{x,\xi} \kappa (s,y,\eta) \rvert &\lesssim \frac{\lvert  \nabla_{x,\xi} \tilde K[\tilde f^N](s,y,\eta) \rvert}{(1-\overline{v})^3 \lvert x - y \rvert} + \frac{\lvert  \tilde K[\tilde f^N](s,y,\eta) \rvert}{(1-\overline{v})^2 \lvert x - y \rvert^2}\\
										& \leq \frac{L}{(1-\overline{v})^3} \Bigl(  \frac{1}{\lvert x - y \rvert} + \frac{1}{\lvert x - y \rvert^2} \Bigr).
									\end{align}
									
									\noindent Now we proceed along the lines of section \ref{ControlCoulomb}, simplified by the fact that the kernel is homogeneous of degree $-1$ (rather than $-2$) in $x$.\\
									
									\noindent Let $A^3_t$ be the ($N$ and $t$ dependent) set defined by
									\begin{equation}\label{A3} A^3_t := \lbrace Z \in \IR^3\times\IR^3 \mid \eqref{macroterm3} \leq N^{-1/4}  \text{ for all } x \in \mathcal{G}^N  \rbrace.\end{equation}
									Then there exists $C_3 > 0$ such that $\IP(A^3_t) \geq 1- \frac{C_3}{N}$.\\
									
									\noindent For \eqref{microterm3}, we introduce a function $g_3 \lesssim \min\lbrace r_N^{-2} , \lvert x \rvert^{-1} + \lvert x \rvert^{-2} \rbrace$ such that
									\begin{equation} \lvert \tilde \kappa (t, x_1,\xi_1) - \tilde \kappa (t, x_2, \xi_2)\rvert_\infty \leq g_3(x_1) \, \lvert (x_1,\xi_1) - (x_2, \xi_2) \rvert_\infty, \end{equation}
									for all $t \leq T$, $\lvert \xi_1\rvert, \lvert \xi_2 \rvert \leq \overline{\xi}$ and  $\lvert x_1 - x_2 \rvert < \frac{r_N}{1- \overline{v}}$ (c.f. Lemma \ref{Lemma:gbound}). With this, we find that
									\begin{align}\label{g3term1} \eqref{microterm3} \leq \Bigl (\frac{1}{N} \sum\limits_{i=1}^N \mathds{1}_{\lbrace t_{ret} \geq 0 \rbrace} g_3(x - y_i^*(t^y_{ret,i})) \Bigr) \, \frac{L}{1 - \overline{v}} \, \sup\limits_{0\leq s\leq t} \lvert {}^N\Phi_{s,0}(Z) - {}^N\Psi_{s,0}(Z) \rvert_\infty.
									\end{align}
									
									\noindent In contrast to \ref{ControlCoulomb}, we do not have to treat distances in physical space and momentum space separately, other than that, the argument is the same. We estimate the $g_3$ term by
									\begin{align}\notag 
										&\Bigl\lvert \frac{1}{N} \sum\limits_{i=1}^N \mathds{1}_{\lbrace t_{ret} \geq 0 \rbrace}\, g_3(x - y_i^*(t^y_{ret,i})) \Bigr\rvert \\\label{g1term}
										& \leq  \biggl\lvert \frac{1}{N} \sum\limits_{i=1}^N \mathds{1}_{\lbrace{z_i \in B_t(x) \times \IR^3 }\rbrace}\, g_3(x- \pi_x \phi(z_i)) - \int \mathds{1}_{\lbrace{z \in B_t(x) \times \IR^3 }\rbrace} g_3(x - z) \phi \# f_0(z) \dd z \biggr\rvert \\\label{g1term2}
										&+ \biggl\lvert \int \mathds{1}_{\lbrace{z \in B_t(x) \times \IR^3 }\rbrace}\, g_3(x - z) \phi \# f_0(z) \dd z \biggr\rvert.
									\end{align}
									Since $g_3 \lesssim \min\lbrace r_N^{-2} , \lvert x \rvert^{-1} + \lvert x \rvert^{-2} \rbrace$, one checks that $\eqref{g1term2} \lesssim C_0 (1 + T^2)$. Now we define the ($N$ and $t$ dependent) set
									\begin{equation} 
										\label{A4} A^4_t := \lbrace Z \in \IR^3\times\IR^3 \mid \eqref{g1term} \leq 1  \text{ for all } x \in \mathcal{G}^N  \rbrace.\end{equation}
									According to Proposition \ref{Prop:LLN}, there exists a constant $C_4 >0$ such that $\IP_0(A) \geq 1 - \frac{C_4}{N}$. For  $Z \in A^4_t, J^N_t(Z)<1 $, we thus have $\eqref{microterm3} \lesssim  \sup\limits_{0\leq s\leq t} \lvert {}^N\Phi^1_{s,0}(Z) - {}^N\Psi^1_{s,0}(Z) \rvert_\infty$.\\
									
									\noindent For the magnetic field component $B_2$, the proof works analogously, since the corresponding kernel $\nabla_\xi n \times \alpha^{0}$ has the same bounds and regularity properties.

									\subsection{Controlling shock waves}
									We now consider the term \eqref{E0'term}. We compute 
									\begin{align*} E'_0(t,x) &= \; \int (\alpha^0\, Y)(t, \cdot,\xi) *_x\chi^N*_x f_0 (x, \xi) \dd \xi\\
										&= \frac{t}{4\pi} \int \frac{\omega - v}{1 - v\cdot \omega} \, \chi^N(x-y-wt)\, f_0(y, \xi)\, \dd w \dd y \dd \xi\\
										& = \; \int h(t, x-y) \,f_0(y,\xi) \dd y \dd \xi,
									\end{align*}
									with
									\begin{align}\label{hshocks}
										h(t, x, \xi)= \frac{t}{4\pi} \int\limits_{S^2} \frac{\omega - v}{1 - v\cdot \omega} \, \chi^N(x-wt).
									\end{align}
									\noindent This function satisfies
									\begin{equation}\label{swkernelbound} \lvert h (t, x, \xi) \rvert \lesssim \frac{t}{1 - \overline{v}}\, r_N^{-3}.  \end{equation} 
									\noindent We have to control the difference
									\begin{align}\notag 
										& \bigl\lvert E'_0[\tilde \mu^N_0[Z]](t,x) - E'_0[\tilde f_0](t,x)\bigr\rvert \\\label{shockwavebound}
										&=\Bigl\lvert \frac{1}{N} \sum\limits_{i=1}^N h(t,x - x_i, \xi_i)  - \int h(t, x-y, \xi)\, f_0(y,\xi) \Bigr \rvert,
									\end{align}
									\noindent which depends only on initial data. Applying Proposition \ref{Prop:LLN} (with $\phi(z)=z$ and $\alpha = 3 \gamma, \, \beta = \frac{1}{4}$) we have for any $(t,x)$: 
									\begin{equation*} \IP_0\Bigl[ \Bigl\lvert \frac{1}{N} \sum\limits_{i=1}^N h(t,x - x_i, \xi_i)  - \int h(t, x-y, \xi)\, f_0(y,\xi) \Bigr \rvert > N^{-\frac{1}{4}} \Bigr] \lesssim N^{-4} \end{equation*}
									and thus $\IP_0\bigr[\exists x \in \mathcal{G}^N \mid \eqref{shockwavebound} > N^{-\frac{1}{4}} \bigr]\lesssim N^{-1}$. We conclude:\\
									
									\noindent Let $A^5_t$ be the ($N$ and $t$ dependent) set defined by
									\begin{equation}\label{A5} A^5_t := \lbrace Z \in \IR^3\times\IR^3 \mid \eqref{shockwavebound} \leq N^{-\frac{1}{4}}  \text{ for all } x \in \mathcal{G}^N  \rbrace.\end{equation}
									Then there exists $C_5 > 0$ such that $\IP(A^5_t) \geq 1- \frac{C_5}{N}$.\\
									
									\noindent \textbf{Remark:} Without regularization, the kernel \eqref{hshocks} would have the form $t \int\limits_{S^2} \frac{\omega - v}{1 - v\cdot \omega} \, \delta(x-wt)$, which is not only unbounded, but distribution valued, reflecting the fact that $E'_0(t,x)$ depends on the initial charge distribution only via $\rho_0 \bigl\lvert_{\lbrace \lvert x - y \rvert = t\rbrace}$. However, after smearing with $\chi^N$, the term is relatively harmless. The width of the necessary cut-off for the law of large number estimate could be further reduced by exploiting the fact that $h(t,x,\xi) = 0$ unless $t - r_n < \lvert x \rvert  < t +r_N $.\\
									
									\noindent For the magnetic field component $B'_0$, the proof works analogously, since the corresponding kernel satisfies the same bound \eqref{swkernelbound}.
									
									\subsection{Controlling the homogeneous fields}
									It remains to control the contribution of the homogeneous fields \eqref{E0term}, which depend only on the initial data via the Gauss constraint $\Div {E_0} \bigl\lvert_{t=0} = \rho_0$. The solution of the homogeneous field-equation is given by
									
									\begin{equation*} E_0(t,x) = \partial_t Y(t, \cdot) *E_{in} (x) = \partial _t \bigl(\frac{t}{4\pi} \int\limits_{S^2} E_{in}( x + \omega t) \dd \omega \bigr). \end{equation*}
									
									\noindent If $E_{in}(x)=- \nabla G * \rho_0(x) = \int \frac{x-y}{\lvert x -y\rvert^3}\rho_0(y) \, \dd y $ is the Coulomb field, we compute: 
									\begin{align*}
										&- \partial_t \nabla_x \int G *_x Y(t, \cdot) *_x \tilde f_0(x, \xi) \dd \xi\\
										& = \frac{1}{4\pi}\int \int\limits_{S^2}\Bigl[ \frac{x-y + 2 \omega t}{\lvert x-y + \omega t \rvert^3} -  \frac{t \omega \cdot (x-y + \omega t) (x-y + \omega t) }{\lvert x-y + \omega t \rvert^5} \Bigr] \dd \omega \, \tilde \rho_0(y)\,  \dd y\\
										& = \frac{1}{4\pi} \int \int\limits_{S^2} h'(t \omega,  x- y ) \, \dd \omega \, \tilde \rho_0(y) \dd y,
									\end{align*}
									with $h'(t\omega, x):= \frac{1}{4\pi} \bigl(\frac{x+ 2 \omega t}{\lvert x + \omega t \rvert^3} -  \frac{t \omega \cdot (x + \omega t) (x+ \omega t) }{\lvert x + \omega t \rvert^5}\bigr)$. Shifting the mollifier to the kernel, we get:
									\begin{equation*}\lvert \chi^N * h' \rvert \lesssim  r_N^{-2} + t \, r_N^{-3}, \end{equation*}
									where we used again Lemma \ref{chibounds}, and thus 
									\begin{equation}\label{homogbound} E_0(t,x) = \int \int {h_0(t, x-y) \, f_0(y, \xi)} \, \dd y \dd \xi,\end{equation}
									with 
									\begin{equation} 
										h_0(t,x)  := \int\limits_{S_1} \chi^N*h'(x, \omega t) \dd \omega, \;\;\; \lvert h_0(t,x) \rvert   \lesssim r_N^{-2} + t \, r_N^{-3}.  \end{equation}
									Now, by \eqref{macroin}, the incoming fields are fixed such that $E^N_{in} - E^\mu_{in} = - \nabla G * (\rho_0[f] - \rho_0[\mu[Z]])$. Hence, we have to control the difference
									\begin{equation}\label{Einbound} 
										\Bigl\lvert \frac{1}{N} \sum\limits_{i=1}^N h_0(t,x - x_i)  - \int h_0(t, x-y)\, f_0(y,\xi) \dd y \dd \xi \Bigr \rvert.
									\end{equation}
									
									\noindent As before, an application of the law of large numbers in form of Proposition \ref{Prop:LLN} yields the following: Let $A^6_t$ be the ($N$ and $t$ dependent) set defined by
									\begin{equation}\label{A6} A^6_t := \lbrace Z \in \IR^3\times\IR^3 \mid \eqref{homogbound} \leq N^{-\frac{1}{4}} \text{ for all } x \in \mathcal{G}^N\rbrace. \end{equation}
									Then there exists $C_6 > 0$ such that $\IP_0(A^6_t) \geq 1- \frac{C_6}{N}$.\\
									
									\noindent For the magnetic field, $B^N_0 - B^\mu_0 = 0$ since, by assumption, $B^N_{in} = B^\mu_{in}$. \\

									\noindent For every $t$, our law of large numbers estimates yield bounds on a finite number on points, that we have chosen to lie on the grid $\mathcal{G}^N$ covering the interval $[-\overline{r}, \overline{r}]$ which contains the support of $f^N$ and $\mu^N$. However, combined with the bound on the field derivatives from Proposition \ref{Prop:deribound}, this can be used to derive a $L^\infty$-bound. We give an example in the following lemma. 
									
									\begin{Lemma}\label{Lemma:Einbound}
										Let $\overline{r}$ as defined in \eqref{xsupport}. In view of the assumptions of Propositions \ref{Prop:rhobound} and \ref{Prop:deribound}, we fix some $p \geq 1$ and consider the set $M = M(p)$ defined by 
										\begin{equation} Z \in  M \iff W^p_p(\mu^N_0[Z], f_0) \leq r_N^{3+p}. \end{equation} 
										Let $E^N_{in}$ and $E^\mu_{in} =E^\mu_{in}[Z]$ as fixed in \eqref{microin}. Then there exists a constant $C >0$ such that
										
										\begin{equation}\IP_0 \Bigl[ \lVert E^N_{in} - E^\mu_{in} \rVert_{L^\infty(\mathrm{B}(\overline r))} \lesssim N^{-\frac{1}{4}}\, \Bigr\rvert \, M \Bigr]  \geq 1 - \frac{C}{N}.\end{equation}
									\end{Lemma}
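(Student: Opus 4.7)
The plan is to combine a pointwise law-of-large-numbers estimate on the grid $\mathcal{G}^N$ with a gradient bound to upgrade it to an $L^\infty$ bound on $\mathrm{B}(\overline{r})$, and then convert the unconditional estimate into a conditional one by noting that $\mathbb{P}_0(M)$ is bounded away from zero for large $N$.

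First I would use that $\tilde\rho[\mu^N_0[Z]] - \tilde\rho[f_0] = \chi^N * (\rho[\mu^N_0[Z]] - \rho[f_0])$ to rewrite, for any $x \in \IR^3$,
\begin{equation*}
(E^N_{in} - E^\mu_{in})(x) \;=\; \frac{1}{N}\sum_{i=1}^N h(x - x_i) \;-\; \int h(x-y)\, f_0(y,\xi)\,\dd y\, \dd \xi,
\end{equation*}
with $h := \nabla G * \chi^N$. By Lemma \ref{chibounds}, $|h(x)| \lesssim \min\{r_N^{-2}, |x|^{-2}\}$ and $|\nabla h(x)| \lesssim \min\{r_N^{-3}, |x|^{-3}\}$; in particular $\|h\|_\infty \lesssim r_N^{-2} \leq N^{2\gamma}$ with $2\gamma < 1/6$.

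Next, for every lattice point $x_k \in \mathcal{G}^N$, I would apply Proposition \ref{Prop:LLN} to the function $(y,\eta) \mapsto h(x_k - y)$ with $\phi = \mathrm{id}$, exponents $\alpha = 2\gamma$ and $\beta = 1/4$ (so $\alpha + \beta < 5/12 < 1/2$), and decay parameter large (say $4$); this yields $\mathbb{P}_0\!\left[|(E^N_{in} - E^\mu_{in})(x_k)| > N^{-1/4}\right] \lesssim N^{-4}$. A union bound over the $(3N)^3$ grid points shows that with probability $\geq 1 - C/N$, the estimate holds simultaneously at every $x_k \in \mathcal{G}^N$. To pass from the grid to a uniform bound on $\mathrm{B}(\overline{r})$, observe that $\|\nabla(E^N_{in} - E^\mu_{in})\|_\infty \leq 2\|\nabla h\|_\infty \lesssim r_N^{-3} \leq N^{3\gamma}$, while the maximal distance from any $x \in \mathrm{B}(\overline{r})$ to the nearest lattice point is at most $\frac{\sqrt{3}}{2}\,\overline{r}/N$, so the interpolation error is of order $N^{3\gamma - 1}$, which is $o(N^{-1/4})$ since $3\gamma < 1/4$.

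Finally, to convert the unconditional bound into the conditional one $\mathbb{P}_0[\cdot \mid M]$, I would invoke Lemma \ref{Cor:largedeviation} with exponent $\alpha' = \gamma(3+p)/p$: one checks that $\alpha' < \min\{1/6, 1/(2p)\}$ whenever $\gamma < 1/12$, so $\mathbb{P}_0(M^c)$ decays (at least) exponentially and in particular $\mathbb{P}_0(M) \geq 1/2$ for $N$ large, giving $\mathbb{P}_0(\text{bad} \mid M) \leq 2\,\mathbb{P}_0(\text{bad}) \lesssim 1/N$. The main difficulty is not analytic but book-keeping: one must verify that (i) Proposition \ref{Prop:LLN} is applied in its valid range, (ii) the gradient-based interpolation error is dominated by $N^{-1/4}$, and (iii) the Fournier--Guillin threshold giving $\mathbb{P}_0(M) \to 1$ is compatible with $\gamma < 1/12$ --- all three are simultaneously satisfied precisely because of the hypothesis $\gamma < 1/12$.
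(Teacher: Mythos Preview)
Your approach is essentially the same as the paper's: a law-of-large-numbers estimate on the lattice $\mathcal{G}^N$ via Proposition~\ref{Prop:LLN}, followed by a gradient bound to interpolate to all of $\mathrm{B}(\overline r)$. The only notable difference is in the gradient step: the paper invokes Proposition~\ref{Prop:deribound} (which needs $Z\in M$) to get $\lVert\nabla(E^N-E^\mu)\rVert_\infty\lesssim r_N^{-2}$ and hence interpolation error $r_N^{-2}/N$, whereas you use the direct bound $\lVert\nabla h\rVert_\infty\lesssim r_N^{-3}$ from Lemma~\ref{chibounds}, giving error $r_N^{-3}/N$. Both are $o(N^{-1/4})$ for $\gamma<\tfrac{1}{12}$, so your route is fine (and arguably cleaner, since it does not need the conditioning on $M$ for the interpolation itself).

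One minor correction: your check (iii) is not valid as stated. The claim that $\alpha'=\gamma(3+p)/p<\min\{1/6,1/(2p)\}$ for \emph{every} $p\geq 1$ whenever $\gamma<\tfrac{1}{12}$ is false (e.g.\ $p=1$ forces $\gamma<\tfrac{1}{24}$, and large $p$ forces $\gamma$ arbitrarily small). What you actually need is only that $\mathbb{P}_0(M)$ stays bounded away from $0$, so that $\mathbb{P}_0(\text{bad}\mid M)\leq \mathbb{P}_0(\text{bad})/\mathbb{P}_0(M)\lesssim N^{-1}$. The paper establishes $\mathbb{P}_0(M)\to 1$ only for the specific choice $p=1/(4\gamma)$ used later (Section~\ref{section:Gronwall}); for arbitrary fixed $p$ the lemma is being read with this background assumption. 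This does not affect the correctness of your argument, but the justification you give for (iii) should be replaced by a direct appeal to Theorem~\ref{Fournier} for the relevant $p$, or simply the observation that $\mathbb{P}_0(M)>0$ suffices.
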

									\begin{proof} 
										Above, we have proven that 
										\begin{equation} \IP_0\Bigl[ \exists x_k \in \mathcal{G}^N : \lvert E^N_{in}(x_k) - E^\mu_{in}(x_k) \rvert \geq  N^{-\frac{1}{4}}\Bigr] \lesssim N^{-1}. \end{equation} 
										\noindent Furthermore, according to Proposition \ref{Prop:deribound}, we have $\Vert \nabla_x (E^N - E^\mu) \rVert_{\infty} \lesssim r_N^{-2}$ for $Z \in M$. By construction: 
										$ \sup \lbrace \min \limits_{x_i \in \mathcal{G}^N} \lvert x -  x_i \rvert : x \in \mathrm{B(\overline{r})} \rbrace \leq \frac{\sqrt{3}}{2} \frac{\overline{r}}{N} $.
										Hence, $ \lvert E^N_{in}(x_k) - E^\mu_{in}(x_k) \rvert \leq  N^{-\frac{1}{4}} \; \forall x_k  \in \mathcal{G} $ implies $ \lvert E^N_{in}(x) - E^\mu_{in}(x) \rvert \lesssim N^{-\frac{1}{4}} +  \frac{r_N^{-2}}{N}  \leq  N^{-\frac{1}{4}} + N^{-1 + 2\gamma} $ for all $x \in \mathrm{B(\overline{r})}$. Since $\gamma < \frac{1}{12}$, we conclude
										\begin{equation*}\IP_0 \Bigl[ \lVert E^N_{in} - E^\mu_{in} \rVert_{L^\infty(\mathrm{B}(\overline r))} \lesssim N^{-\frac{1}{4}} \Bigr\rvert Z \in  M \Bigr] \lesssim N^{-1}.\end{equation*}
									\end{proof} 
									\section{A Gronwall argument}\label{section:Gronwall}
									We are finally ready to combine the results of the previous sections into a proof of the main theorem. Our aim is to establish a Gronwall bound for the quantity $\IE_0(J_t^N)$ defined in \ref{Def:J2}, thus proving the mean field limit for typical initial conditions.\\
									
									\noindent In order to control the evolution of  $J^{N}_t(Z)$ we will need the following Lemma.
									
									\begin{Lemma}\label{Lemma:semiderivative}
										For a function $g: \IR \to \IR$, we denote by
										\begin{equation} 
										\partial_t^+ g(t) := \lim\limits_{\Delta t \searrow 0} \frac{g(t + \Delta t) - g(t)}{\Delta t} 
										\end{equation}
										the right-derivative of $f$ with respect to $t$. Let $g \in C^1(\IR)$ and $h(t):= \sup\limits_{0 \leq s \leq t} g(t)$. Then $\partial_t^+ h(t)$ exists and $\partial_t^+ h(t) \leq \min \lbrace 0 , g'(t) \rbrace$ for all $t$. 
									\end{Lemma}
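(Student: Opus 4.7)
The plan is to distinguish two cases depending on whether the supremum defining $h(t)$ is attained strictly before time $t$ or exactly at time $t$. I note first that the bound stated as $\leq \min\{0,g'(t)\}$ appears to be a typographical error for $\leq \max\{0,g'(t)\}$: since $h$ is non-decreasing by construction, every right difference quotient is non-negative, so $\partial_t^+h(t)\geq 0$ whenever it exists, which is incompatible with the literal reading.

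First, suppose $h(t) > g(t)$. By continuity of $g$ there exists $\delta > 0$ such that $g(s) < h(t)$ for all $s \in [t-\delta, t+\delta]$. Consequently $h(t+\Delta t) = h(t)$ for $\Delta t \in [0,\delta)$, so $\partial_t^+ h(t) = 0 \leq \max\{0,g'(t)\}$. Suppose instead $h(t) = g(t)$. For $\Delta t > 0$ the supremum on the right interval satisfies $\sup_{s \in [t,t+\Delta t]} g(s) \geq g(t) = h(t)$, hence
$$h(t + \Delta t) = \sup_{s \in [t, t+\Delta t]} g(s),$$
and therefore
$$\frac{h(t + \Delta t) - h(t)}{\Delta t} = \sup_{0 \leq \tau \leq \Delta t} \frac{g(t+\tau) - g(t)}{\Delta t}.$$
Inserting the first-order expansion $g(t+\tau) = g(t) + g'(t)\tau + r(\tau)$ with $r(\tau) = o(\tau)$ reduces the task to analyzing the supremum of a function close to $\tau \mapsto g'(t)\tau$ over $[0,\Delta t]$. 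A short case analysis on the sign of $g'(t)$ settles the limit: if $g'(t) > 0$ the supremum is attained near $\tau = \Delta t$ and the quotient tends to $g'(t)$; if $g'(t) \leq 0$ the sup is attained at $\tau = 0$ (for $\Delta t$ small enough) and the quotient tends to $0$. In every subcase the right derivative exists and equals $\max\{0,g'(t)\}$.

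There is essentially no obstacle beyond organizing the case distinction; the only mildly delicate point is that, when computing the limit in the second case, the remainder $r(\tau) = o(\tau)$ must be controlled uniformly over $\tau \in [0, \Delta t]$, which is automatic from $C^1$ regularity on a compact interval. Combining both cases yields existence of $\partial_t^+ h(t)$ together with the bound $\partial_t^+ h(t) \leq \max\{0, g'(t)\}$ claimed (modulo the typo) by the lemma.
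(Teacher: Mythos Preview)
Your proof is correct, and your diagnosis of the typo is right: since $h$ is non-decreasing, the bound must be $\partial_t^+ h(t) \leq \max\{0,g'(t)\}$, and the way the lemma is actually applied in the paper (bounding $\partial_t^+ \sup_{s\leq t}|\Psi_s-\Phi_s|$ by $|\partial_t(\Psi_t-\Phi_t)|$) is consistent only with the $\max$ version.

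There is nothing to compare against: the paper states the lemma without proof. Your case distinction (supremum attained strictly before $t$ versus at $t$) together with the first-order expansion is the standard route. One very minor imprecision: in the subcase $g'(t)=0$ the supremum over $[0,\Delta t]$ need not be attained at $\tau=0$ (take $g(t+\tau)=g(t)+\tau^2$), but your conclusion that the quotient tends to $0$ still follows from the uniform remainder bound you invoke at the end, so the argument goes through.
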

									
									\subsection{Good initial conditions}  Let $\gamma < \frac{1}{12}$ and $r_N \geq N^{-\gamma}$. Fix an initial distribution $f_0$ with compact support as in Theorem \ref{Thm:Thm3}. We begin by noting the (time-independent) conditions that the initial configuration $Z \in \IR^{6N}$ has to satisfy. All probabilities are meant with respect to the product-measure $\otimes^N f_0$ on  $\IR^{3N}$. Consider the sets $\mathrm{C}_1 , \mathrm{C}_2$ defined by
									\begin{align}
										&Z \in \mathrm{C}_1 \iff z_i \in \supp (f_0) , \forall 1 \leq i \leq N. \\\label{C2}
										&Z \in  \mathrm{C}_2 \iff  \lVert (E^N_{in},B^N_{in}) - (E^\mu_{in},B^\mu_{in})  \rVert_{L^\infty(\mathrm{B}(\overline r))}\leq  N^{-\frac{1}{4}}.
									\end{align}
									Moreover, setting $p := \frac{1}{4\gamma}$, we consider the set  $\mathrm{C}_3 \subset \IR^{6N}$ defined by
									\begin{align}
										&Z \in \mathrm{C}_3 \iff  W^p_p(\mu^N[Z], f_0) \leq r_N^{3+p}. \hspace{4.1cm}
									\end{align}
									Obviously, $\IP_0( Z \notin  \mathrm{C}_1 ) = 0$ and according to Lemma  \ref{Lemma:Einbound}, $\IP_0( Z \notin  \mathrm{C}_2 ) \lesssim N^{-1}$. For $ \mathrm{C}_3 $, we apply the large deviation estimate, Theorem \ref{Fournier}, with $d=6$, $p := \frac{1}{4\gamma}$ and  $\xi = r_N^{3+p} \geq N^{- (3+p)\gamma} = N^{-(3\gamma + 1/4)}$. This yields constants $c,c' > 0$ such that
									\begin{equation} \IP_0 \Bigl( W^p_p(\mu^N_0[Z], f_0) > r_N^{3 + p} \Bigr) \leq  c' e^{-c N^s},\end{equation}
									where 
									\begin{equation}\label{sofgamma} s = 1 - 2(3\gamma + {1}/{4}) = \frac{1}{2} ( 1 - {12}\gamma ) > 0.\end{equation} 
									
									\noindent In total, setting
									\begin{equation} \mathcal{C}:= \mathrm{C}_1 \cap \mathrm{C}_2 \cap \mathrm{C}_3, \end{equation}
									there exists a constant $C_7$ such that $\IP_0( \mathcal{C}) \geq 1 - \frac{C_7}{N}$. Note that the requirement $\gamma < \frac{1}{12}$ for the width of the cut-off comes from \eqref{sofgamma}.
									
									\subsection{Evolution of $J^N_t$} For $t >0$ we have to control the growth of $\IE_0(J^N_t)$.
									Recall from Def. \ref{Def:J2}: 
									\begin{equation*}\begin{split}J^N_t(Z) := \min \Bigl\lbrace 1, \lambda(N) N^{\delta}\sup\limits_{0 \leq s \leq t} \lvert ^N\Psi^1_{t,0}(Z) - {}^N\Phi^1_{t,0}(Z) \rvert_\infty&\\ +  N^\delta \sup\limits_{0 \leq s \leq t} \lvert ^N\Psi^2_{t,0}(Z) - {}^N\Phi^2_{t,0}(Z) \rvert_\infty &\Bigr\rbrace, \end{split} \end{equation*}
									with $\lambda(N) := \max \lbrace 1, \sqrt{\log(N)} \rbrace.$ For fixed $t >0$ we denote by $\mathcal{B}_t$ the set
									\begin{equation} \mathcal{B}_t := \lbrace Z \in \IR^3 \times \IR^3 : J^N_t (Z) < 1 \rbrace. \end{equation}
									
									\noindent Moreover, we define the set 
									\begin{equation}
										\mathcal{A}_t := A^1_t \cap A^2_t \cap A^3_t \cap A^4_t \cap ... \cap A^{12}_{t},
									\end{equation}
									where $A^1_t, A^2_t, A^3_t, A^4_t, A^5_t, A^6_t$ are defined in Section \ref{section:pointwiseestimates} and $A_t^7,.., A^{12}_t$ are the analogous sets for the magnetic field components.\\
									
									\noindent We  split $\IE_0(J^N_t)$ into 
									\begin{align*}
										\IE_0(J^N_t) &= \IE_0(J^N_t \mid \mathcal{A}_t\cap \mathcal{B}_t \cap \mathcal{C}) +  \IE_0(J^N_t \mid \mathcal{B}_t \cap (\mathcal{A}_t \cap \mathcal{C})^c) +  \IE_0(J^N_t \mid \mathcal{B}_t^c ).
									\end{align*}
									\noindent Now, we first observe that if  $Z \in \mathcal{B}_t^c$, we have $\frac{\dd}{\dd t} J_t^N = 0$, since $J_t^N(Z)=1$ is already maximal. In particular, 
									\begin{equation} \partial_t \,\IE_0(J^N_t \mid \mathcal{B}_t^c) = 0.
									\end{equation}  
									
									\noindent Hence, we only need to consider the case $J^N_t(Z) < 1 $ for which, in particular,
									\begin{equation}\label{close} \sup\limits_{0 \leq s \leq t} \lvert {}^{N}\Psi_{s,0}(Z) - {}^{N}\Phi_{s,0}(Z) \rvert_\infty < N^{-\delta} \leq N^{-\gamma} \leq r_N. \end{equation} 
									We have to control the evolution of
									\begin{align*}
										\lambda(N) N^{\delta}\sup\limits_{0 \leq s \leq t} \lvert {}^{N}\Psi^1_{s,0}(Z) - {}^{N}\Phi^1_{s,0}(Z) \rvert_\infty +  N^\delta \sup\limits_{0 \leq s \leq t} \lvert {}^{N}\Psi^2_{s,0}(Z) - {}^{N}\Phi^2_{s,0}(Z) \rvert_\infty.
									\end{align*}
									
									\noindent We will denote by $E^N = E^N [\tilde f^N]$ and $B^N = B^N[\tilde f^N]$ the macroscopic fields, generated by the (regularized) Vlasov density, and by $E^\mu = E^\mu[\tilde \mu^N[Z]], B^\mu = B^\mu[\tilde \mu^N[Z]]$ the microscopic fields, generated by the rigid charges.\\
									
									\noindent Recalling Lemma \ref{Lemma:semiderivative} and denoting by $\partial_t^+$ the derivative from the right w.r.t. $t$, we find:
									\begin{equation}\begin{split}\label{controlQ2}  \partial_t^+ &\sup_{0 \leq s \leq t} \lvert {}^{N}\Psi^1_{s,0}(Z) - {}^{N}\Phi^1_{s,0}(Z) \rvert_\infty \\
											& \leq \bigl\lvert \partial_t  ({}^{N}\Psi^1_{t,0}(Z) - {}^{N}\Phi^1_{t,0}(Z) ) \bigr\rvert_\infty 
											=  \max_{1\leq i \leq N} \lvert v(\xi^*_i(t)) - v(\eta^*_i(t))\rvert \\ &\leq 2 \max_{1\leq i \leq N} \lvert \xi^*_i(t) - \eta^*_i(t)\rvert  =  2 \lvert {}^{N}\Psi^2_{t,0}(Z) - {}^{N}\Phi^2_{t,0}(Z) \rvert_\infty,\end{split}\end{equation}
									\noindent as well as 
									\begin{equation}\label{Pevolution} \begin{split} 
											\partial_t^+ & \sup_{0 \leq s \leq t} \lvert {}^{N}\Psi^2_{s,0}(Z) - {}^{N}\Phi^2_{s,0}(Z) \rvert_\infty \\
											\leq &\bigl\lvert \partial_t  ({}^{N}\Psi^2_{t,0}(Z) - {}^{N}\Phi^2_{t,0}(Z) ) \bigr\rvert_\infty =   \max_{1\leq i \leq N} \lvert \tilde K[\tilde \mu](t, x^*_i, \xi^*_i) - \tilde{K}[\tilde f](t, y^*_i, \eta^*_i) \rvert\\  
											\leq &  \max_{1\leq i \leq N} \lvert \tilde K[\tilde f](t, x^*_i, \xi^*_i) - \tilde{K}[\tilde f](t, y^*_i, \eta^*_i) \rvert +   \max_{1\leq i \leq N} \lvert \tilde K[\tilde\mu](t, y^*_i, \eta^*_i) - \tilde{K}[\tilde f](t, y^*_i, \eta^*_i) \rvert \\
											\leq & L  \lvert {}^{N}\Psi _{t,0}(Z) - {}^{N}\Phi_{t,0}(Z) \rvert_\infty + \lVert \tilde E^N(t) -  \tilde E^\mu(t) \rVert_{L^\infty(\mathrm{B}(\overline{r}))} + \lVert \tilde B^N(t) -  \tilde B^\mu(t) \rVert_{L^\infty(\mathrm{B}(\overline{r}))}
										\end{split}\end{equation} 
										In the last line, we used the uniform Lipschitz bound on the mean field force \eqref{LipschitzL} and the fact that $\lvert x^*_i\rvert, \lvert y^*_i\rvert < \overline{r}$ for all $i=1,..,N$ and $t \leq T$.\\
										
										\noindent  It remains to control the term
										\begin{equation} \begin{split} \lVert \tilde E^N(t,\cdot) -  \tilde E^\mu(t,\cdot) \rVert_{L^\infty(\mathrm{B}(\overline{r}))} + \lVert \tilde B^N(t,\cdot) -  \tilde B^\mu(t,\cdot) \rVert_{L^\infty(\mathrm{B}(\overline{r}))}\\
												\leq  \lVert E^N(t,\cdot) -  E^\mu(t,\cdot) \rVert_{L^\infty(\mathrm{B}(\overline{r}))} + \lVert B^N(t,\cdot) -  B^\mu(t,\cdot) \rVert_{L^\infty(\mathrm{B}(\overline{r}))}.\end{split}\end{equation}

										\noindent Now, $Z \in (\mathcal{A}_t \cap \mathcal{C})^c$ are the ``bad'' initial conditions that may lead to large fluctuations in the fields or a blow-up of the microscopic charge density. However, the Vlasov fields $(\tilde E^N, \tilde B^N)$ are bounded uniformly in $N$ according to \eqref{LipschitzL}, while the (smeared) microscopic fields $(\tilde E^\mu , \tilde B^\mu)$ diverge at most as $\Vert (\tilde E^\mu ,  \tilde B^\mu) \rVert_\infty \lesssim r_N^{-2}$ according to Prop. \ref{Prop:deribound}. Therefore:
										\begin{equation} \begin{split}
												&\lVert \partial_t^+ J^N_t (\cdot) \rVert_{L^\infty(\IR^{6N})}\\ 
												&\leq (2\lambda(N) + L) J^N_t  + \lVert \tilde E^N_t \rVert_\infty +  \lVert \tilde E^\mu_t \rVert_{\infty} + \lVert \tilde B^N_t \rVert_\infty +  \lVert  \tilde B^\mu_t \rVert_\infty \lesssim r_N^{-2}.
											\end{split}\end{equation} 
											Hence, there exists a constant $C'$ such that 
											\begin{equation}\begin{split}\label{atypicalbound}
													&\partial_t^+  \IE_0(J^N_t \mid \mathcal{B}_t \cap (\mathcal{A}_t \cap \mathcal{C})^c)) =  \IE_0(\partial_t^+ J^N_t \mid \mathcal{B}_t \cap (\mathcal{A}_t \cap \mathcal{C})^c) \\
													&\leq \lVert \partial_t^+ J^N_t \rVert_{L^\infty(\IR^{6N})} \,  \IP_0( \mathcal{A}_t^c \cup \mathcal{C}^c) \leq C' r_N^{-2} \frac{1}{N} \leq C' N^{-1 + 2 \gamma}.
												\end{split}\end{equation}

												\noindent $Z \in \mathcal{A}_t \cap \mathcal{B}_t \cap \mathcal{C}$ are the ``good'' initial conditions, for which we have derived various nice properties:
												\begin{itemize}
													\item[] $\lvert x^*_i(t) \rvert < \overline{r}, \, \lvert \xi^*_i(t) \rvert < \overline{\xi}, \,\forall  t\in [0,T]$ \hfill (from eq. \ref{close})
													\item[] $\lVert \rho[\mu^N_t[Z]]\rVert_\infty \leq C_\rho,\, \forall N\geq 1, t \in [0,T]$ \hfill (from Proposition \ref{Prop:rhobound})
													\item[] $\lVert (\nabla_x E^\mu, \nabla_x B^\mu) \rVert_\infty \lesssim r_N^{-2}$ \hfill (Proposition \ref{Prop:deribound})
													\item[] $\lVert (E^N_{in},B^N_{in}) - (E^\mu_{in},B^\mu_{in}) \rVert_{L^\infty(\mathrm{B}(\overline r))}\leq  N^{-1/4}$ \hfill (since $Z \in  \mathrm{C}_2$)\\
												\end{itemize}
												\noindent In particular, combining the results of Section \ref{section:pointwiseestimates}, we have:
												\begin{align*} &\max \bigl\lbrace \lvert  E^N (t, x_i) -  E^\mu(t, x_i) \rvert_\infty +  \lvert  B^N (t, x_i) -  B^\mu(t, x_i) \rvert_\infty : x_i \in \mathcal{G}^N \bigr\rbrace\\
													& \lesssim \;  \underbrace{\; \frac{}{} N^{-\frac{1}{4}} \;}_{from \, (\ref{A1}, \ref{A3}, \ref{A5}, \ref{A6})} + \underbrace{\frac{C_0}{(1-\overline{v})^4} ( 1 + \log(r^{-1}_N)) \sup\limits_{0 \leq s \leq t} \lvert {}^{N}\Phi^1_{s,0}(Z) - {}^{N}\Psi^1_{s,0}(Z) \rvert_\infty}_{from \,(\ref{gterm1},\ref{gterm2} \,   , \ref{A4})}\\
													& + \underbrace{ \frac{L C_0 T}{(1-\overline{v})^5} \sup\limits_{0 \leq s \leq t} \lvert {}^{N}\Phi^1_{s,0}(Z) - {}^{N}\Psi^1_{s,0}(Z) \rvert_\infty + \frac{ C_0 T}{(1-\overline{v})^4} \sup\limits_{0 \leq s \leq t} \lvert {}^{N}\Phi^2_{s,0}(Z) - {}^{N}\Psi^2_{s,0}(Z) \rvert_\infty}_{from (\ref{g2term1}, \ref{g2term2} \,   , \ref{A4})}\\
													& + \underbrace{ \frac{L C_0 (1 + T^2)}{(1-\overline{v})^4} \Bigl(\sup\limits_{0 \leq s \leq t} \lvert {}^{N}\Phi^1_{s,0}(Z) - {}^{N}\Psi^1_{s,0}(Z) \rvert_\infty + \sup\limits_{0 \leq s \leq t} \lvert {}^{N}\Phi^2_{s,0}(Z) - {}^{N}\Psi^2_{s,0}(Z) \rvert_\infty\Bigr)}_{from (\ref{g3term1} - \ref{g1term2})}\\
													& +  \underbrace{\frac{C_\rho}{(1-\overline{v})^2} \int\limits^t_0  \lVert E^N (s) -  E^\mu(s) \rVert_{L^\infty(\mathrm{B(\overline{r})})} +  \lVert  B^N (s) -  B^\mu(s) \rVert_{L^\infty(\mathrm{B(\overline{r})})} \, \dd s. }_{from \, \eqref{KGronwall}}
												\end{align*}
												We simplify this expression to:
												\begin{equation}\label{simplified}\begin{split}
													&\max \bigl\lbrace \lvert  E^N (t, x_i) -  E^\mu(t, x_i) \rvert_\infty +  \lvert  B^N (t, x_i) -  B^\mu(t, x_i) \rvert_\infty : x_i \in \mathcal{G}^N \bigr\rbrace\\
													&\lesssim  \; N^{-\frac{1}{4}} + \frac{C_0  \log(r^{-1}_N)}{(1-\overline{v})^4} \sup\limits_{0 \leq s \leq t} \lvert {}^{N}\Phi^1_{s,0}(Z) - {}^{N}\Psi^1_{s,0}(Z) \rvert_\infty\\
													& + \frac{L C_0 (1 + T^2)}{(1-\overline{v})^5} \sup\limits_{0 \leq s \leq t} \lvert {}^{N}\Phi_{s,0}(Z) - {}^{N}\Psi_{s,0}(Z) \rvert_\infty\\ 
													& +  \frac{C_\rho}{(1-\overline{v})^2} \int\limits^t_0  \lVert E^N (s) -  E^\mu(s) \rVert_{L^\infty(\mathrm{B(\overline{r})})} +  \lVert  B^N (s) -  B^\mu(s) \rVert_{L^\infty(\mathrm{B(\overline{r})})} \, \dd s.
													\end{split}
												\end{equation}

												\noindent According to Proposition \ref{Prop:deribound} and equation \eqref{LipschitzL}, we have $\Vert (E^N, B^N) - (E^\mu, B^\mu) \rVert_{Lip} \lesssim r_N^{-2}$. Moreover, by construction: 
												$ \sup \bigl\lbrace \min \limits_{x_i \in \mathcal{G}^N} \lvert x -  x_i \rvert : x \in \mathrm{B(\overline{r}, 0)} \bigr\rbrace \leq \frac{\sqrt{3}}{2} \frac{\overline{r}}{N}.$ Hence, by the same argument as in Lemma \ref{Lemma:Einbound},
												\begin{align*}\notag  &\lVert  E^N (t, \cdot) -  E^\mu (t, \cdot) \rVert_{L^\infty(\mathrm{B(\overline{r})})} +   \lVert  B^N (t, \cdot) -  B^\mu (t, \cdot) \rVert_{L^\infty(\mathrm{B(\overline{r})})} \\\notag
													&\lesssim \max \bigl\lbrace \lvert  E^N (t, x_i) -  E^\mu(t, x_i) \rvert_\infty +  \lvert  B^N (t, x_i) -  B^\mu(t, x_i) \rvert_\infty : x_i \in \mathcal{G}^N \bigr\rbrace + \frac{r_N^{-2}}{N}, \end{align*}
												
												\noindent where $\frac{r_N^{-2}}{N} \leq N^{-1+2\gamma} \leq N^{-\frac{1}{4}}$. Together with \eqref{simplified}, we thus have:
												
												\begin{align*} 
													&\lVert  E^N (t, \cdot) -  E^\mu (t, \cdot) \rVert_{L^\infty(\mathrm{B(\overline{r})})} +   \lVert  B^N (t, \cdot) -  B^\mu (t, \cdot) \rVert_{L^\infty(\mathrm{B(\overline{r})})}\\[1.5ex]
													& \lesssim  N^{-\frac{1}{4}} + \frac{C_0  \log(r^{-1}_N)}{(1-\overline{v})^4} \sup\limits_{0 \leq s \leq t} \lvert {}^{N}\Phi^1_{s,0}(Z) - {}^{N}\Psi^1_{s,0}(Z) \rvert_\infty\\
													& + \frac{L C_0 (1 + T^2)}{(1-\overline{v})^5} \sup\limits_{0 \leq s \leq t} \lvert {}^{N}\Phi_{s,0}(Z) - {}^{N}\Psi_{s,0}(Z) \rvert_\infty\\
													& +  \frac{C_\rho}{(1-\overline{v})^2} \int\limits^t_0  \lVert E^N (s) -  E^\mu(s) \rVert_{L^\infty(\mathrm{B(\overline{r})})} +  \lVert  B^N (s) -  B^\mu(s) \rVert_{L^\infty(\mathrm{B(\overline{r})})} \, \dd s. \end{align*}                                                                                                                                                                                                                                                                                                                                                                                                                                                                                                                                                                                                                                                                                                                                   
												
												\noindent By Gronwall's inequality, there exists a constant $C'' > 0$ depending on $\overline{v}$ and $C_\rho$ such that
												
												\begin{equation*}\begin{split}&\lVert  E^N (t, \cdot) -  E^\mu (t, \cdot) \rVert_{L^\infty(\mathrm{B(\overline{r})})}+   \lVert  B^N (t, \cdot) -  B^\mu (t, \cdot) \rVert_{L^\infty(\mathrm{B(\overline{r})})}\\[1.5ex]
														& \leq e^{tC''} \Bigl( N^{-\frac{1}{4}} + \frac{ C_0\log(r^{-1}_N)}{(1-\overline{v})^4} \sup\limits_{0 \leq s \leq t} \lvert {}^{N}\Phi^1_{s,0}(Z) - {}^{N}\Psi^1_{s,0}(Z) \rvert_\infty\\
														& \hspace{15mm} + \frac{LC_0(1 + T^2)}{(1-\overline{v})^5} \sup\limits_{0 \leq s \leq t} \lvert {}^{N}\Phi_{s,0}(Z) - {}^{N}\Psi_{s,0}(Z) \rvert_\infty\\
														&\hspace{15mm}  + \hfill \lVert E^N(0 , \cdot) - E^\mu(0, \cdot) \rVert_{L^\infty(\mathrm{B(\overline{r})})}+ \lVert B^N(0 , \cdot) - B^\mu(0, \cdot) \rVert_{L^\infty(\mathrm{B(\overline{r})})} \Bigr) \end{split}\end{equation*}
												\noindent and with \eqref{C2}:
							\begin{equation}\label{Kresult}\begin{split}
														&\lVert E^N(t,\cdot) -  E^\mu(t,\cdot) \rVert_{L^\infty(\mathrm{B}(\overline{r}))} + \lVert B^N(t,\cdot) -  B^\mu(t,\cdot) \rVert_{L^\infty(\mathrm{B}(\overline{r}))} \\[1.5ex] 
														&\leq e^{TC''} \frac{C_0\log(r^{-1}_N)}{(1-\overline{v})^4} \sup\limits_{0 \leq s \leq t} \lvert {}^{N}\Phi^1_{s,0}(Z) - {}^{N}\Psi^1_{s,0}(Z) \rvert_\infty\\
														& + e^{TC''}  \frac{LC_0(1+T^2)}{(1-\overline{v})^5} \sup\limits_{0 \leq s \leq t} \lvert {}^{N}\Phi_{s,0}(Z) - {}^{N}\Psi_{s,0}(Z)\rvert_\infty + e^{TC''}  2N^{-\frac{1}{4}} .\end{split}\end{equation}
												
												\noindent Plugging this into \eqref{Pevolution}, we get:
												\begin{equation}\begin{split}
														\partial_t^+ \,&\bigl( N^\delta \sup\limits_{0 \leq s \leq t} \lvert {}^{N}\Psi^2_{s,0}(Z) - {}^{N}\Phi^2_{s,0}(Z) \rvert_\infty\bigr)\\
														&\leq  N^\delta L \lvert {}^{N}\Psi_{t,0}(Z) - {}^{N}\Phi_{t,0}(Z) \rvert_\infty +  2 e^{TC''} N^{-\frac{1}{4} + \delta}\\
														& +e^{TC''}  \frac{LC_0(1+T^2)}{(1-\overline{v})^5} N^\delta \sup\limits_{0 \leq s \leq t} \lvert {}^{N}\Phi_{s,0}(Z) - {}^{N}\Psi_{s,0}(Z) \rvert_\infty\\
														& + e^{TC''} \frac{C_0\log(N)}{(1-\overline{v})^4} \,  N^\delta  \sup\limits_{0 \leq s \leq t} \lvert {}^{N}\Phi^1_{s,0}(Z) - {}^{N}\Psi^1_{s,0}(Z)\rvert_\infty.
													\end{split}\end{equation}
													
													\noindent Note, in particular, that the last summand can be rewritten as: \begin{align*}
														\frac{ \sqrt{\log(N)}}{(1-\overline{v})^4}  \Bigl( \sqrt{\log(N)} N^\delta  \sup\limits_{0 \leq s \leq t} \lvert {}^{N}\Phi^1_{s,0}(Z) - {}^{N}\Psi^1_{s,0}(Z) \rvert_\infty\Bigr),
													\end{align*}
													so that, together with \eqref{controlQ2} and $\lambda(N) = \max \lbrace 1, \sqrt{\log(N)} \rbrace$:
													\begin{align*}
														\partial_t^+ J^N_t(Z) &\leq  2 \lambda(N) N^\delta \lvert {}^{N}\Psi^2_{t,0}(Z) - {}^{N}\Phi^2_{t,0}(Z) \rvert_\infty + N^\delta L \lvert {}^{N}\Psi_{t,0}(Z) - {}^{N}\Phi_{t,0}(Z) \rvert_\infty \\
														&+ e^{TC''}\frac{ C_0\sqrt{\log(N)}}{(1-\overline{v})^4}  \Bigl( \sqrt{\log(N)} N^\delta  \sup\limits_{0 \leq s \leq t} \lvert {}^{N}\Psi^1_{s,0}(Z) - {}^{N}\Phi^1_{s,0}(Z) \rvert_\infty\Bigr)\\
														&+ e^{TC''} \frac{LC_0(1+T^2)}{(1-\overline{v})^5} N^\delta \sup\limits_{0 \leq s \leq t} \lvert {}^{N}\Phi_{s,0}(Z) - {}^{N}\Psi_{s,0}(Z) \rvert_\infty +  2e^{TC''}N^{-\frac{1}{4} + \delta} \\ 
														& \leq \frac{e^{TC''}LC_0(3+T^2)}{(1-\overline{v})^5} \lambda(N)\, J^N_t(Z)  + 2e^{TC''}N^{-\frac{1}{4} + \delta} .
													\end{align*}
													
													\noindent Together with \eqref{atypicalbound}, we have found:
													\begin{align*}
														\IE_0(&J^N_{t + \Delta t}) - \IE_0(J^N_{t,0}) \\
														&\leq  \Bigl(\frac{e^{TC''}LC_0(3+T^2)}{(1-\overline{v})^5} \lambda(N)\, J^N_t(Z)  + 2e^{TC''}N^{-\frac{1}{4} + \delta} +  C' N^{-1+2\delta} \Bigr) \Delta t + o(\Delta t).
													\end{align*}
													Finally, using Gronwalls inequality and the fact that $J_0^N(Z)=0\; \forall Z$ we get
													\begin{equation} \IE_0(J^N_t) \leq {e^{tC\lambda(N)}} N^{-\frac{1}{4} + \delta}, \end{equation}
													with \begin{equation}
														C(T, C_0, f_0) =  \max \Bigl\lbrace \frac{e^{TC''}LC_0(3+T^2)}{(1-\overline{v})^5},\,  C' \Bigr\rbrace.
													\end{equation} 
													Together with the results of Section \ref{Section:strategyofproof}, Proposition \ref{Prop:mufromJ} and Lemma \ref{Cor:largedeviation}, this concludes the proof of the theorem. For simplicity, we demand $N \geq 4$ so that $\lambda(N)= \sqrt{\log(N)}$.\\

													\noindent The approximation result for the fields, i.e. part c) of the theorem, can be read off equation \eqref{Kresult} using $\IP_0\bigl[ \sqrt{\log(N)}\sup_{0 \leq s \leq t} \lvert {}^N\Phi^1_{s,0}(Z) - {}^N\Psi^1_{s,0}(Z) \rvert_\infty \geq N^{-\delta}\bigr] \leq \IE_0(J^N_t)$ as well as $\IP_0\bigl[\sup_{0 \leq s \leq t} \, \lvert {}^N\Phi_{s,0}(Z) - {}^N\Psi_{s,0}(Z) \rvert_\infty \geq N^{-\delta}\bigr] \leq \IE_0(J^N_t)$. By choosing the grid $\mathcal{G}^N$ accordingly, $\mathrm{B}(\overline{r})$ can be replaced by any compact set $M \subset \IR^3$.\\
													\qed

\section{Concluding remarks}\label{section:remarks}
	We have presented a particle approximation for the Vlasov-Maxwell dynamics that improves significantly on previous results, allowing generic initial data (for the particles) and an $N$-dependent cut-off decreasing as  $N^{-\delta}$ with $\delta < \frac{1}{12}$. Still, our derivation leaves much room for improvement as far as the size of the cut-off is concerned. Note that the restriction $\delta < \frac{1}{12}$ comes only from the Wasserstein bound on the charge density, Prop. \ref{Prop:rhobound}, which assures that the microscopic charge density typically remains bounded uniformly in $N$ and $t$. This is a relatively powerful but rather coarse way to prevent a blow-up of the microscopic dynamics. All other estimates would allow the cut-off (electron radius) to decrease at least with $\delta < \frac{1}{4}$, even with the rough law of large number estimates used in Section \ref{section:pointwiseestimates}. Hence, it seems likely that the width of the cut-off could be further reduced by a more detailed analysis of the microscopic dynamics, in particular the ``acceleration / radiation'' component of the electromagnetic field. 
	
	However, regarding the rigid charges model considered here, we want to emphasize that the status of the regularization is different in the context of Vlasov-Maxwell than with respect to the nonrelativistic Coulomb interactions considered in \cite{Dustin},\cite{PeterDustin}. In the latter case, the correct microscopic dynamics are known and quite well understood. Any regularization thereof is first and foremost a simplification of the mathematical problem with the width of the cut-off essentially quantifying the deviation from the true microscopic theory. When it comes to the relativistic regime, however, the standard Maxwell-Lorentz dynamics are {not} well defined for point-particles due to the self-interaction singularity and it is not clear what the ``true'' microscopic theory approximating the Vlasov-Maxwell dynamics is supposed to be. The study of rigid charges (and their point-particle limit) thus seems like a natural way to make sense of the microscopic equations, with a longstanding tradition in the physical literature, see e.g. Lorentz 1892 \cite{Lorentz}, 1904 \cite{Lorentz2}, Sommerfeld, 1904 \cite{Sommerfeld}, Lyle, 2010 \cite{Lyle}. 
	
	Of course, the regularization thus introduced is still a technical expedient rather than a realistic physical theory. In particular, the $N$-dependence of the electron radius does not make much sense from a physical point of view and the Abraham model neglects spin as well as other possible effects due the extension of the charges. However, as other authors have pointed out before (see e.g. \cite{Golse, Elskens}), any more satisfying approach to the Vlasov-Maxwell dynamics will most likely require a satisfying solution to the self-interaction problem first. Given the current state of affairs, we believe that the result presented here constitutes significant progress with regard to the validity problem of Vlasov-Maxwell.

\newpage
\bibliography{VPlit}
\bibliographystyle{plain}
 \end{document}